\newcommand{\Tin}{T_{\mathrm{in}}}
\newcommand{\Tmin}{T_{\mathrm{min}}}
\newcommand{\Vin}{V_{\mathrm{in}}}
\newcommand{\Ein}{E_{\mathrm{in}}}
\newcommand{\Sout}{S_{\mathrm{out}}}
\newcommand{\barSout}{{\bar S}_{\mathrm{out}}}
\newcommand{\Pout}{\mathcal{P_{\mathrm{out}}}}
\newcommand{\barPout}{\bar {\mathcal P}_{\mathrm{out}}}
\newcommand{\Pnew}{{P}_\mathrm{new}}
\newcommand{\Pnewi}[1]{{P}_{\mathrm{new}}({#1})}
\newcommand{\Pstar}{{\mathcal P}^*}
\newcommand{\barbarSout}{{\bar{\bar S}}_{\mathrm{out}}}
\newcommand{\barbarPout}{\bar {\bar {\mathcal P}}_{\mathrm{out}}}
\newcommand{\barbarS}{{\bar {\bar S}}}
\newcommand{\barbarT}{{\bar {\bar T}}}
\newcommand{\barbarV}{{\bar {\bar V}}}
\newcommand{\barbarBP}{{\bar {\bar {BP}}}}
\newcommand{\emptyf}{{f(\cdot,\cdot)}}
\begin{document}
\title{Minmax Centered $k$-Partitioning of Trees and Applications to  Sink Evacuation with Dynamic Confluent Flows}
\author{Di Chen \and Mordecai Golin}
\institute{Di Chen \at HK University of Science \& Technology;  \email{di.chen@connect.ust.hk}  \and Mordecai Golin \at HK University of Science \& Technology;  \email{golin@cse.ust.hk}}

\titlerunning{Minmax Centered {\em k}-Partitioning of Trees}
\authorrunning{D. Chen and M. J. Golin}


\maketitle

\begin{abstract}
Let $T=(V,E)$ be a tree with associated costs on its subtrees.   A {\em minmax $k$-partition of $T$} is a partition  into $k$ subtrees, minimizing the maximum cost of a subtree over all possible partitions. In the {\em centered} version of the problem,  the cost of a subtree cost is defined as the minimum cost of ``servicing''  that subtree   using a center located within it.
 The problem motivating this work was  the sink-evacuation problem on trees, i.e.,  finding a collection of $k$-sinks that minimize the time required by a  confluent dynamic network flow to {\em evacuate} all  supplies to  sinks.

This paper provides the first polynomial-time algorithm for solving this problem, running in 
$O\Bigl( \max(k,\log n) k n \log^4 n\Bigr )$ time.  The technique developed can be used to solve any Minmax Centered $k$-Partitioning  problem on trees  in which the servicing costs satisfy some very general conditions.  Solutions can be found for both the discrete case, in which centers must be on vertices, and the continuous case, in which centers may also be placed on edges.  The technique developed also improves previous results for finding  a minmax cost $k$-partition of a tree given the location of  the sinks in advance.
\end{abstract}

\keywords{Sink Evacuation, Dynamic Flows, Confluent Flows, Facility Location, Parametric Search, Tree Partitioning, Tree Centroid.}

\section{Introduction}
\label{sec: mjgintro}

The main result of this paper is the derivation of a new method for solving the general  minmax centered $k$-partitioning  problem on trees. The initial motivation was the construction of quickest evacuation protocols on  dynamic tree  flow networks, a problem that was not solvable within previous tree partitioning frameworks.

\begin{figure}[t]
	\centering
\includegraphics[width=3in]{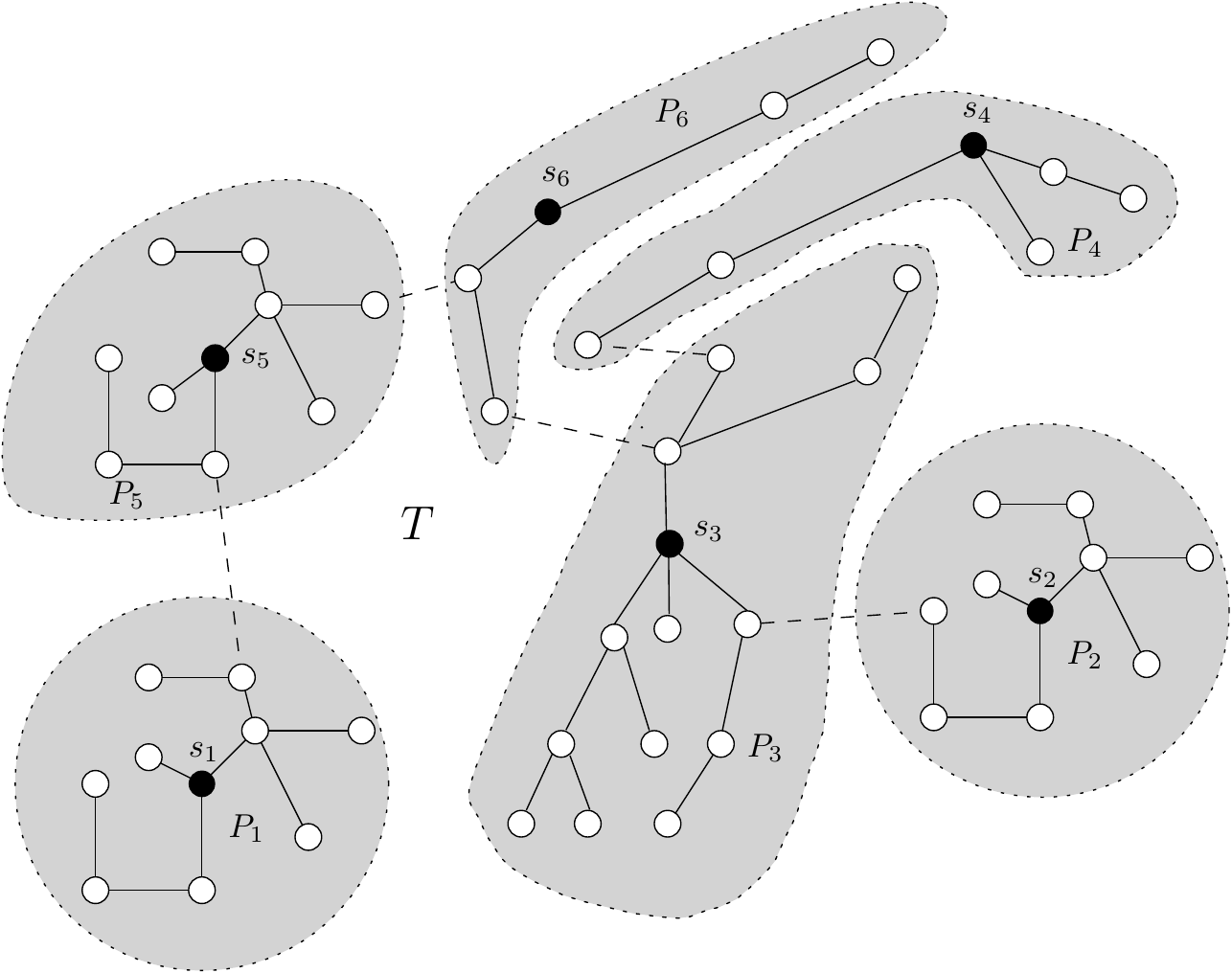}
\caption{A 6-partition $\mathcal{P}=\{P_1,\ldots,P_6\},$ of tree $T$.  Component $P_i$ has associated sink or center  $s_i.$
$f(P_i,s_i)$ is the cost of servicing $P_i$ using  $s_i$.  $F_S(\mathcal{P}) = \max_{1 \le i \le 6} f(P_i,s_i)$ is the full cost of servicing $T$ with partition
$\mathcal P$ and sink set $S = \{s_1,\ldots,s_6\}$.}
	\label{fig:maxcomposition}
\end{figure}

A  {\em $k$-partition} of a tree $T=(V,E)$ is the removal of $k-1$ edges to create $k$ subtrees.  Let $f(P)$ denote the cost of subtree $P \subseteq V$ (subtrees will be denoted by their nodes). The cost of partition  $\mathcal{P}=\{P_1,\ldots,P_k\}$ is  $F(\mathcal{P}) = \max_i f(P_i).$  The {\em  minmax $k$-partition problem}  is  to find a $k$-partition $\mathcal{P}$ of $T$ that minimizes $F(\mathcal{P}).$

$f(P)$ may sometimes be  further defined as $f(P,s),$  the cost of {\em servicing} the subtree from some sink or center $s \in P.$   The cost of the partition will then be $F(\mathcal{P,S}) = \max_i f(P_i,s_i)$ where $S=\{s_1,\ldots,s_k\}$ and $s_i \in P_i.$ See Fig.~\ref{fig:maxcomposition}.
The {\em minmax centered $k$-partition problem} is to find 
${\mathcal{P}}, S$  that minimizes $F(\mathcal{P,S}).$

Becker, Perl and Schach   \cite{Becker1980} introduced a {\em shifting} algorithm for constructing minmax partitions of trees when $f(P)$ is the sum of the weights of the nodes in $P.$  This technique was then improved and generalized to other functions by them and other authors 
\cite{becker1983shifting,Perl1985,Agasi1993,Becker1995}.  \cite{Lari2015,Lari2016} discuss extensions to centered partitions.  These results only hold for the very restrictive class of {\em Invariant} functions $f(P)$  (see \cite{Becker1995} for a definition). In particular, the QFP cost that will interest us and  be defined below  will not be an invariant function.

If all nodes  $v \in V$ have given weights $w_v$ and $d(v,s)$ is the path-length distance from $v$ to $s$,  then  $f(P,s) = \max_{v \in P} w_v d(v,s)$  defines the {\em  $k$-center problem} which has its own separate  literature.
Frederickson  \cite{frederickson1991parametric}  gives an  $O(n)$ algorithm for $k$-center in an  {\em unweighted } tree, i.e.,$w_v \equiv 1$, while the weighted case can be solved in $O(n \log^2 n)$ time \cite{megiddo1981n,Cole87}.

The problem motivating this paper arises from evacuation using {\em Dynamic Confluent Flows}.
{\em Dynamic flow networks} model movement of items on a graph.

Each vertex $v$ is assigned   some initial set of supplies $w_v$.  Supplies flow across edges.  Each edge $e$ has a length $\tau_e$ -- the time required to traverse  it  -- and  a   capacity $c_e$, limiting how much flow can enter the edge in one time unit.  If all edges have the same capacity $c_e=c$ the network has  {\em uniform capacity}.   As supplies move around the graph,
 {\em congestion} can occur as supplies back up waiting to enter a  vertex, increasing the time needed to send a flow.
 
 Dynamic flow networks  were introduced by Ford and Fulkerson 
in \cite{Ford1958a}   and have since been extensively used and  analyzed.   
The {\em Quickest Flow Problem (QFP)} starts with $w_v$ units of flow on (source)  node $v$ and asks how quickly all of this flow   can be moved to designated sinks.  Good surveys of the problem and applications can be found in
\cite{Skutella2009,Aronson1989,Fleischer2007,Pascoal2006}.

One variant  of the QFP is the transshipment  problem in which  each sink has a specified demand with total source availability   equal to total demand requirement.   The problem is to find the minimum time required to satisfy all of the demands. 
 The first polynomial time algorithm for that problem was given by  \cite{Hoppe2000b} with later improvements by \cite{fleischer1998efficient}.  

A variant of the QFP can also model {\em evacuation problems}, see e.g,   \cite{Higashikawa2014}  for a history.  In this,  vertex supplies can be visualized as people in one or multiple buildings and the problem is to find a routing strategy (evacuation plan) that evacuates all of them to specified sinks (exits)  in minimum time.   This differs from the transshipment problem in that the problem is to fully evacuate the sources, not to satisfy the sinks; sinks do not have predefined demands and may absorb arbitrarily large units of supply. 

An optimal solution to this problem   could assign different paths to different units of supply starting from the same vertex.  Physically, this could correspond to two people starting from the same location travelling radically different evacuation  paths, possibly even to different exits.

A constrained version of the problem, the one addressed here, is for the plan to assign to each vertex $v$ exactly one  {\em evacuation} edge, $e_v=(v,u_v),$  
 i.e., a sign stating ``this way out''. All people starting at or passing through  $v$ must evacuate through $e_v.$ After arriving at $u_v$ they continue onto $u_v$'s unique evacuation edge $e_{u_v}.$  They continue following  these unique evacuation edges  until reaching a sink,  where they exit.
The  initial problem is, given the sinks,   to determine  a plan minimizing the maximum time needed to evacuate everyone. Note that if each $v$  has a unique evacuation edge $e_v$ then the $e_v$ must form  a directed forest with the sinks being the roots of the trees.  Thus,  an evacuation plan of tree $T$ using  $k$ sinks is a centered  $k$-partition of $T.$  See  Fig.~\ref{fig:Evac Partition}. A different  version of the problem  is, given $k$, to find the (vertex) {\em locations} of the $k$ sinks/exits {\em and} associated evacuation plan that  together minimizes the evacuation time. This is the {\em $k$-sink location problem}.  

\begin{figure}[t]
	\centering
\includegraphics[width=3in]{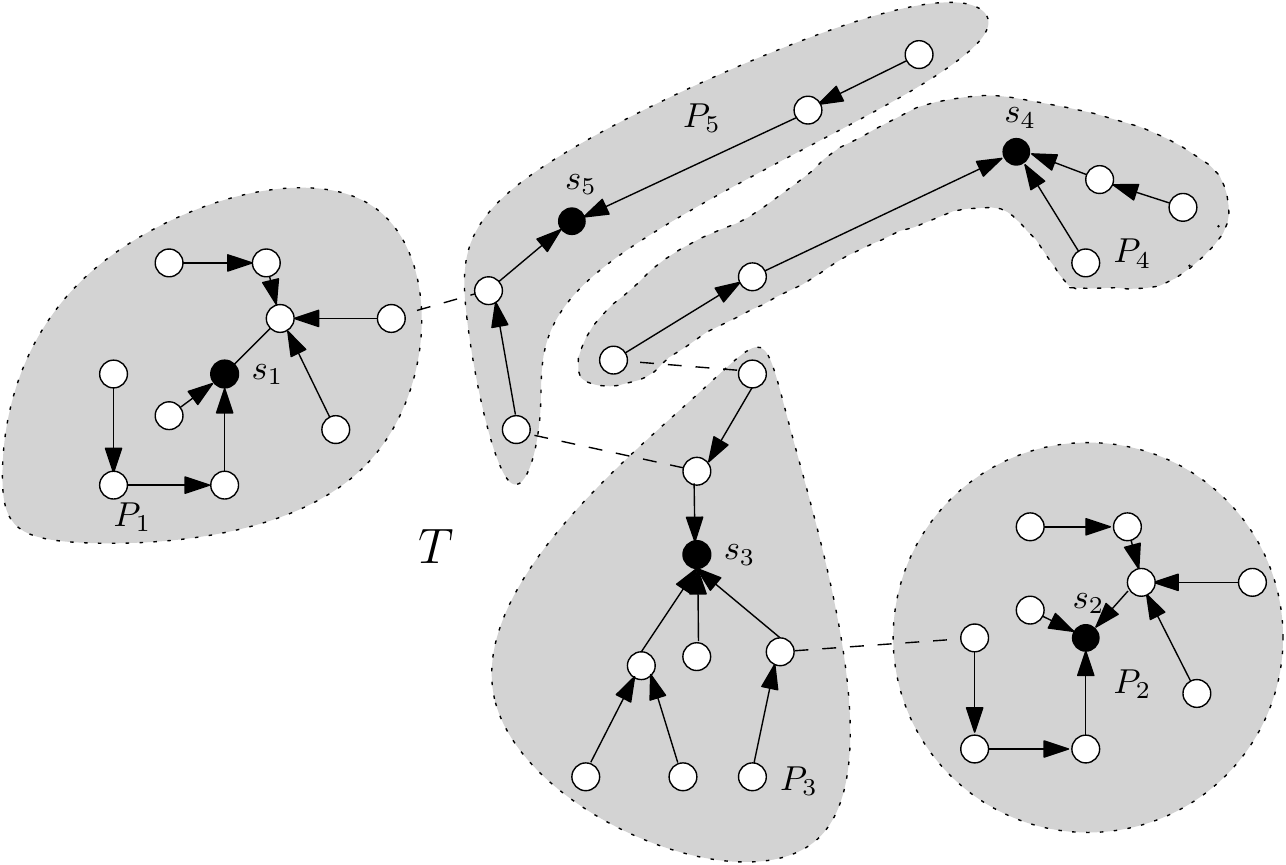}
\caption{Each vertex except for the $s_i$ has a unique associated evacuation edge (the $s_i$ evacuate to themselves). These edges form a  forest of directed in-trees with the tree roots being the $s_i$.  This forest defines  a centered $5$-partition of $T$ with centers $s_i.$  The cost of the partition will be the maximum time required for a node to evacuate to its assigned exit $s_i$.  Flows can merge and cause congestion so this evacuation time  is a function of entire subtrees and not just of individual node-sink pairs.}
	\label{fig:Evac Partition}
\end{figure}

Flows with the property that all  flows entering a vertex  leave along  the same edge are known as {\em confluent}\footnote{Confluent flows occur naturally in problems other than evacuations, e.g.,  packet forwarding and railway scheduling \cite{Dressler2010b}.}; even  in the static case constructing  an optimal confluent flow in  a general graph $G$ is known to be very difficult.   If P $\not=$ NP, then it is  impossible to construct a constant-factor approximate optimal confluent flow in polynomial time on a general graph  \cite{Chen2007,Dressler2010b,Chen2006,Shepherd2015} even with only one sink.

If  edge capacities are  ``large enough'' then no congestion occurs and every person starting at node $v$  should follow the same shortest path it can to an exit.  The cost of the plan will be  the length of the  maximum shortest path.   Minimizing this is  is exactly the $k$-center problem on graphs which is already known to be NP-Hard \cite[ND50]{garey1979computers}. Unlike  $k$-center, which is polynomial-time solvable for fixed $k$,  Kamiyama {\em et al.}~\cite{Kamiyama} proves 
 by reduction to {\em Partition},   that, even for  $k=1$,   finding the min-time evacuation protocol is still NP-Hard for general graphs.
 This was later extended \cite{Golin2017sink}  to show that even for  $k=1$ and the sink  location fixed in advance, it is still impossible to approximate the QFP time to within a factor of $o(\log n)$ if P $\not=$ NP.

The only solvable known case for the sink location problem for general $k$ is for $G$ a path \cite{bhattacharya2017improved}.
For paths with uniform capacities this runs in
$\min\bigl(O(n + k^2   \log^2  n),\ O(n \log n)\bigr)$ time; for paths with general capacities in 
$\min\bigl(O(n \log n + k^2  \log^4  n),\, O(n \log^3  n)\bigr)$ time.

When $G$ is a tree,
the $1$-sink location problem can be solved    \cite{Mamada2006} in $O(n \log^2 n)$ time.   This can be reduced  \cite{Higashikawa2014,bhattacharya2015improved}  down to $O(n \log n)$ for the uniform capacity version, i.e., all  the $c_e$ are identical. If the {\em locations of the $k$ sinks are given as input},
 \cite{Mamada2005a} gives a  $O(n (c \log n)^{k+1})$ time algorithm  evacuation protocol,  where $c$ is some constant. This is the problem of partitioning the tree optimally, given that the centers are already known.   For ``large'' $k$,  \cite{Mamada2005}  reduced the time  down to $O(n^2 k \log^2  n)$. The literature does not contain any algorithm for solving the sink-location problem on trees.
The best solution using current known results would be to try all possible  $\Theta(n^{k-1})$ decompositions of the tree into $k$ subtrees and apply the algorithm of \cite{Mamada2006}, yielding
$O(n^k \log^2 n)$ time.

When $k=1$,  \cite{Mamada2005a}  also provides  an $O(n \log^2 n)$ algorithm for calculating the evacuation cost to a  single known sink.  For the uniform capacity case,  \cite[p.~34]{Higashikawa2014c} gives a formula that reduces  the calculation time down to  $O(n \log n)$.  These two calculation algorithms will be used as oracles in the sequel.

The discussion above implicitly assumed that the sinks must be vertices of the original graph.  This is known as the {\em discrete} case. 
Another possibility would be to permit  sinks to be located anywhere,  on edges as well as vertices.  This variation is  known as the {\em continuous} case.

 This distinction occurs in evacuation modelling, e.g., locating an emergency exit in a hallway {\em between rooms}.  Historically, this distinction is also explicit  in the $k-$center in a tree  literature.  More specifically, Frederickson's  \cite{frederickson1991parametric}  $O(n)$ algorithm for $k$-center in an  {\em unweighted } tree worked in both the continuous and discrete cases.  For weighted $k$-center, though,  the two cases needed two different sets of techniques.
 \cite{megiddo1981n} gave an $O(n \log^2 n)$ algorithm for the discrete case while the continuous case required  $O(n \log^2 n \log \log n)$  time  \cite{megiddo1983new}.  It was only later  realized that a parametric searching technique \cite{Cole87} could reduce the continuous case down to $O(n \log^2 n)$ as well.   Weighted $k$-center restricted  to the line can be solved in $O(n\log n)$ in both the discrete and continuous cases but were also originally solved separately; \cite{chen2015efficient} provides a good discussion of the history of that problem.

\subsection{Our contributions}
This paper gives the first polynomial time algorithm for solving the $k$-sink location problem on trees. It uses as an oracle a known algorithm for calculating the cost of the problem when $k=1$ and the sink is known in advance.
Our results will be applicable to both the discrete and continuous versions of the problem.

\begin{theorem}
	The $k$-sink evacuation problem can be solved in
	\begin{itemize}
	\item $O( \max(k,\log n) \, k n  \log^4 n)$ time 
	 for general-capacity edges and
	 \item  $O( \max(k,\log n)\,  k n  \log^3 n)$ time for uniform-capacity edges.
	 \end{itemize}
	\label{theorem:kSinkRunTime}
\end{theorem}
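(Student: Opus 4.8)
The plan is to reduce the $k$-sink location problem to a sequence of decision problems of the form "is there a centered $k$-partition of $T$ with $F(\mathcal{P},S)\le \lambda$?", and then to resolve the optimization via parametric search over the single real parameter $\lambda$ (the target evacuation time). First I would establish a \emph{decision procedure}: given $\lambda$, greedily build subtrees from the leaves upward. Rooting $T$ arbitrarily, process vertices in a bottom-up (postorder) sweep; at each vertex $v$ maintain, for the partial subtree accumulated at $v$, enough summary information to decide whether it can still be serviced within cost $\lambda$ by some center not yet fixed, or whether it must be ``cut off'' into a finished component (forcing a center inside it). The cost oracle for $k=1$ from \cite{Mamada2005a} (and its faster uniform-capacity variant from \cite{Higashikawa2014c}) is what lets us evaluate, at each merge/cut step, whether a candidate subtree is feasible for $\lambda$; the QFP cost being monotone in $\lambda$ guarantees that the greedy ``cut as late as possible'' rule is optimal and that the minimum number of components it produces is a nonincreasing function of $\lambda$. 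If that minimum is $\le k$, then $\lambda$ is feasible. This gives a test running in roughly $O(n\cdot(\text{oracle cost}))$ time, i.e.\ $O(n^2\log^2 n)$ naively — too slow, so the real work is making each sweep cost only $O(n\,\mathrm{polylog}\,n)$ by reusing oracle computations across the tree rather than recomputing from scratch, which is where the $\log^4 n$ (resp.\ $\log^3 n$) factors enter.

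Next I would drive the optimization. The optimal value $\lambda^*$ is the smallest $\lambda$ for which the decision procedure returns ``$\le k$ components'', and $\lambda^*$ is determined by a breakpoint of one of the oracle's cost functions, so it lies in a set of $O(n^2)$ (in the continuous case, $O(n)$ per candidate center location times $O(n)$ subtrees, plus edge positions) candidate values. Rather than enumerate these, I would run Megiddo--Cole-style parametric search: simulate the decision algorithm on the unknown $\lambda^*$, and whenever it must compare $\lambda^*$ against a value derived from the input, resolve that comparison by calling the decision oracle, using the parallel/sorted-comparison trick to batch comparisons so that only $O(\mathrm{polylog})$ oracle calls are needed per ``round''. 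The number of rounds is controlled by the (parallel) depth of the decision algorithm, and the per-round cost is an oracle call; multiplying out yields the claimed bounds, with the $\max(k,\log n)\,k$ factor coming from the fact that the decision test itself, when it counts components, does work proportional to $k$ per relevant subtree and the parametric search nests this. The continuous case is handled by the same framework, augmenting the candidate-breakpoint set with edge-interior positions of centers; as in the $k$-center literature (\cite{megiddo1983new,Cole87}) this costs at most an extra logarithmic factor, and here it is absorbed.

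The main obstacle — and the technical heart of the paper — is that the QFP servicing cost $f(P,s)$ is \emph{not} an invariant function in the sense of \cite{Becker1995}, so the classical shifting-algorithm machinery for minmax tree partitioning does not apply, and even evaluating $f$ is expensive (it requires an $O(n\log^2 n)$ oracle). Thus the greedy decision procedure cannot be a simple ``merge until weight exceeds $\lambda$'' rule; it must carry, along each path of the rooted tree, a compact representation of ``the set of centers that could still service the partially-built component within $\lambda$'', update it in $O(\mathrm{polylog})$ amortized time per vertex, and detect exactly when that set becomes empty (forcing a cut and a committed center). Proving that such a compact representation exists, is maintainable efficiently, and that the resulting greedy is provably optimal for the decision problem — together with threading all of this through parametric search without blowing up the comparison count — is where essentially all the difficulty lies; the running-time bookkeeping that produces the exact exponents on $\log n$ is then routine once those structural lemmas are in hand.
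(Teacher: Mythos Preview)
Your high-level architecture is right and matches the paper: a feasibility test for a fixed threshold $\mathcal{T}$, wrapped in Megiddo-style parametric search, with the 1-sink oracles of \cite{Mamada2005a} and \cite{Higashikawa2014c} supplying the $n\log^2 n$ and $n\log n$ factors. You also correctly identify the central obstacle (the cost is not invariant, so classical shifting does not apply) and that the whole difficulty is making the feasibility test run in $O(k\,t_{\mathcal{A}}(n)\log n)$ rather than $O(n\,t_{\mathcal{A}}(n))$.

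Where your proposal departs from the paper --- and where it has a real gap --- is the feasibility test itself. You sketch a single bottom-up postorder sweep maintaining a ``compact representation of the set of centers that could still service the partially-built component,'' and you acknowledge you have not shown such a representation exists or is maintainable. The paper does \emph{not} do this. Instead it isolates a clean local condition, the \emph{peaking criterion}: an oriented edge $(u,v)$ such that $f(V_{-v}(u),u)\le\mathcal{T}$ but $f(V_{-v}(u)\cup\{v\},v)>\mathcal{T}$, which pinpoints exactly where a new sink must be opened. The speedup for finding all such edges is not ``reusing oracle computations'' but a \emph{centroid decomposition}: edges are checked level by level, and a disjointness lemma (the $C(u,v)$ sets at one level are pairwise disjoint) plus an L1/L2 labelling propagation guarantees that the total oracle input size per level is $O(n)$, giving $O(\log n)$ amortized oracle calls for the entire first peaking phase. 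Your bottom-up sweep has no analogue of this amortization argument.

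More seriously, your sketch stops after ``cut as late as possible'' places the sinks; but once sinks are placed the remaining task --- partitioning the tree \emph{between} already-committed sinks --- is not a bottom-up process at all, and this is where the paper spends most of its effort. It introduces the \emph{hub tree} (the Steiner tree of the current sinks) and a dual condition, the \emph{reaching criterion}, to commit uncolored regions to existing sinks; it then proves a structural lemma (seven cases) showing that after each reaching step at most one new peaking edge can appear, and it lies on a known path, so it is found by binary search in $O(\log n)$ oracle calls. This alternation of peaking and reaching phases, with $O(k)$ phases total, is what yields the $O(k\log n)$ oracle-call bound. Finally, the parametric search is not vanilla Megiddo: the paper distinguishes \emph{Stage-Steps} (one per centroid level, each binary-searching among the batch of oracle outputs at cost $O(\log n\cdot B(n))$) from \emph{If-Steps} (one per later oracle call, cost $O(B(n))$), and it is precisely this split that produces the $\max(k,\log n)$ factor --- not ``work proportional to $k$ per subtree'' as you suggest. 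Without the peaking/reaching machinery and the centroid batching, your outline does not yet constitute a proof of the stated running time.
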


This  result will be a special case of a general technique that works for a large variety of minmax cost functions on trees.  Section \ref{Sec:Prob Def} formally defines the Sink-Evacuation problem on trees, 
the more general class of functions for which our technique works and then states our results.

It is instructive to compare our approach to Frederickson's  \cite{frederickson1991parametric}  $O(n)$ algorithm for solving the unweighted $k$-center problem on trees, which was built from the  following two ingredients.

\begin{enumerate}
\item An $O(n)$ time previously known algorithm for checking {\em feasibility}, i.e., given $\alpha >0$, testing whether a  $k$-center solution with cost $\le \alpha$ {\em  exists}
\item A clever {\em parametric search} method to filter the $O(n^2)$ pairwise distances between nodes, one of which is the optimal cost, via the feasibility test.
\end{enumerate}

The main difficulty in solving the sink-evacuation problem is that no polynomial time feasibility test for $k$-sink evacuation on trees was previously known.  The majority of this paper is devoted to constructing such a test.  Section \ref{Section: Bounded Cost} derives useful properties of the feasibility problem and 
Section \ref {Sec: Bounded Algorithm} utilizes these properties to construct an algorithm. This algorithm works by making 
$O(k \log n)$ (amortized) calls to the $1$ fixed-sink algorithm oracle.

There is also no small set of easily defined cost values known to contain the optimal solution.  We sidestep this issue in Section \ref{Section: Full Problem} by doing parametric searching {\em within} our feasibility testing algorithm,  leading to Theorem \ref{theorem:kSinkRunTime}.

Sections   \ref{Section: Bounded Cost}, \ref {Sec: Bounded Algorithm}  and  \ref{Section: Full Problem}   assume the discrete version of the problem.  Section \ref{Sec:Continuous} describes the modifications necessary to extend the algorithm to work in the continuous case.

In Section \ref{Sec:Fixed Sink} we conclude by noting that a slight modification to the algorithm allows improving, for almost all $k$,  the best previously  known algorithm for solving the problem when the $k$-sink locations are predetermined;  from  $O(n^2 k \log^2 n)$ \cite{Mamada2005}  down to  $O(n k^2 \log^4 n)$.

\section{Definitions and Results}
\label{Sec:Prob Def}

Let $G=(V,E)$ be an undirected graph. Each  edge $e=(u,v)$   has a travel time $\tau_e$;  flow leaving  $u$ at time $t=t_0$ arrives at $v$ at time $t=t_0 + \tau_e.$ 
Each edge also has  a
{\em capacity} $c_e \ge 0$.
 This  restricts  at most  $c_e$ units of flow to enter edge $e$ per every unit of time.

Consider $w_u$ units of (supply) flow  waiting at vertex $u$ at time $t=0$ to traverse edge $e=(u,v).$  They enter $e$ at a  rate of $c_e$ units of flow  per  unit time so the last flow  enters $e$  at time   $w_u/c_e.$ This flow then travels another $\tau_e$ time to reach $v.$  The total time required to move all flow from $u$ to $v$ is then $w_u/c_e + \tau_e$.

If two edges were combined in a path from $u \rightarrow v \rightarrow s$ then flow  from $u$ travelling to $s$ might have to wait at $v$ for all the $w_v$ flow to first enter $(v,s)$.  When multiple paths meet, this results in {\em congestion} that can delay evacuation time in strange ways.

\begin{figure}[t]
	\centering
\includegraphics[width=\textwidth]{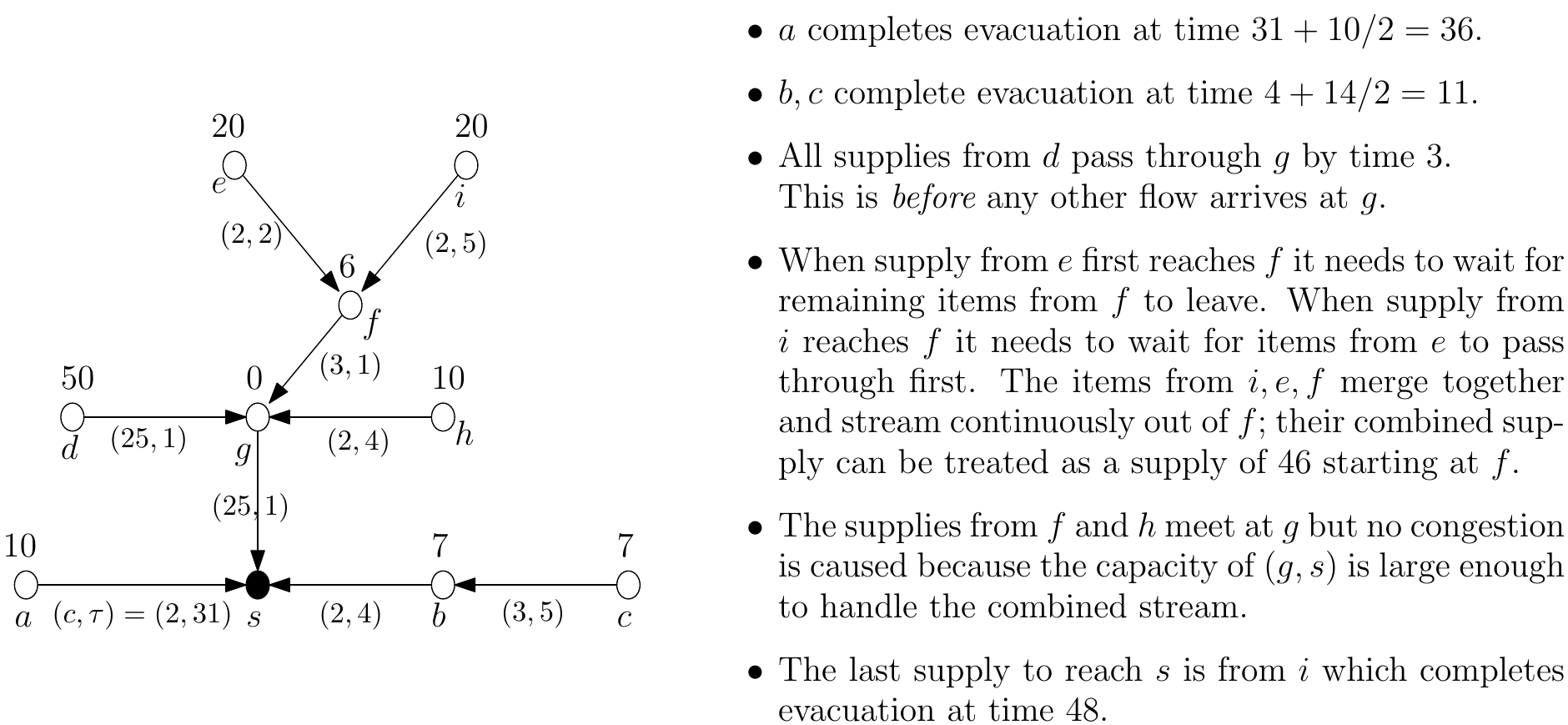}
\caption{An illustration of evacuation of a tree to  sink, $s.$  Initial values $w_v$ are above $v.$  Each edge $e$ is labelled with a (capacity, length) pair $(c_e,\tau_e).$  The goal is to evacuate all supplies to $s$. Note that the tree contains 3 different branches containing, respectively,  $\{a\},$  $\{b,c\}$  and  $\{d,e,f,g,h,i\}$ whose evacuation times can be calculated separately.  The time required to evacuate all supplies to $s$ is $48,$ which is when the last supply from $i$ arrives at $s.$ }
	\label{fig:Evac Full Example}
\end{figure}

Figure \ref {fig:Evac Full Example} illustrates different types of congestion and gives an example of calculating the evacuation time of a tree to a given sink.

Given a graph $G$, distinguish a subset $S \subseteq V$ with $|S|=k$  as sinks (exits).
An evacuation plan specifies, for each vertex $v\not\in S$,  the unique edge along which all flow starting at or passing through $v$ evacuates.  Furthermore, starting at any $v$ and following  the edges will lead from $v$ to one of the $S$  (if $v \in S$, flow at $v$ evacuates immediately through the exit at $v$). 
As noted earlier  (Figure \ref{fig:Evac Partition}) the evacuation plan defines a confluent flow.  The evacuation edges form a directed forest;  the root of each tree  is  one of the designated sinks in $S.$

Given evacuation plan $\mathcal{P}$ and the $w_v$ specifying the initial flow supply starting at each node,  one can calculate, for each vertex, the  time (with congestion) required for all of its flow supply to evacuate.  The maximum of this over all $v$  is the minimum time required to required to evacuate  {\em all}  items  to some exit  using the rules above. Call this the {\em cost for $S$ associated with the evacuation plan} 
and  denote it by  $f(\mathcal{P},S).$ 

The {\em $k$-sink location} problem is to find a subset $S$ of size $k$ and associated $\mathcal{P}$ that minimizes $f(\mathcal{P},S).$

\subsection{General problem formulation}
\label{subsec:GPF}
The input  will be 
  a tree $\Tin = (\Vin,\Ein)$,  and  a positive integer $k$.  Let $n = |\Vin| = |\Ein| + 1$. The output will be    $S \subseteq \Vin$,  $|S| \le k,$  and an associated partition $\mathcal{P}$ of $\Tin$ into $|S|$ subtrees, each containing  one vertex  in $S,$ that minimizes $f(\mathcal{P},S)$ over all possible such pairs.

  The  algorithms will not explicitly deal with the complicated mechanics of evacuation calculations.  Instead they will solve the location problem for any {\em minmax  monotone cost $\emptyf$},  given an oracle for solving a one-sink problem in which the location of the sink is pre-specified. 

This level of abstraction  simplifies the formulation and understanding of the algorithms.  It can also be useful for solving other similar problems. 

\subsubsection{Minmax monotone cost functions.}
\label{subsec:minmaxdef}

Minmax monotone cost functions are defined below.  Note that this definition is 
 consistent with the specific properties of the evacuation problem. 
\begin{definition} 
\label{def:Partitions}
Let $\Tin=(\Vin,\Ein)$  be a tree.

Let $U \subset V.$  The phrase ``$U$ is a subtree of $T$'' will denote that the graph induced by $U$ in $T$ is a subtree of $T.$

For any $u \in \Vin$,  $\Gamma(u) =\{v \in \Vin \,:\, (u,v) \in\ inE\}$ are the {\em neighbors} of $u$.  For
$U \subset \Vin,$  $\Gamma(U) = \bigcup_{u \in U} \Gamma(u)$  are the {\em neighbors} of $U$.

A {\em Partition} of  $V \subseteq \Vin$ is $\mathcal{P} =\{P_1,P_2,\ldots,P_t\}$ such that each $P_i$ is a subtree, $\cup_i P_i = V$,  and $\forall i \not=j$,  $P_i \cap P_j=\emptyset$.
$P_i$ are  the {\em blocks} of $\mathcal{P}$.

Let $S= \{s_1,s_2,\ldots,s_t \} \subseteq V.$  
$\Lambda[S]$ will denote the set of all partitions $\mathcal{P} =\{P_{s_1},P_{s_2},\ldots,P_{s_t}\}$ of $\Vin$ such that $\forall i,$   $ S \cap P_{s_i} = \{s_i\}$.

 Nodes in $P_{s_i}$ are {\em assigned to} the \emph{sink} $s_i$.
For simplicity, we often will 
just say that {\em  (node) $v$ is assigned to (sink) $s_i.$ }
\end{definition}


Let $f : 2^{\Vin} \times \Vin \rightarrow [0,+\infty]$  be an \emph{atomic cost function}. 
$f(P,s)$ can be interpreted as 
{\em the cost for sink $s$ to serve the set of nodes $P$}. 
This interpretation of $f$ imposes the following  natural constraints:
\begin{enumerate}
\item For $U \subseteq \Vin$, $s \in \Vin$,
  \begin{itemize}
\item if $U = \{s\}$, then $f(U,s) = 0$.
\item if $U$ is not  a subtree of $\Tin$, then $f(U,s) = +\infty$.
  \item if $s \notin U$ then $f(U,s) = +\infty$;
  \end{itemize}
\item {\em Set monotonicity} \\If $s \in U_1 \subseteq U_2 \subseteq \Vin$, then $f(U_1,s) \leq f(U_2,s)$,\\ i.e. the cost can  not decrease when a sink has to serve additional nodes.

\item {\em Path monotonicity}  \\Let $U \subset \Vin$ and $s \not\in U$ but $S \in \Gamma(U).$ Then $f(U \cup \{ s \},s) \geq f(U,u)$. Intuitively, this means  that as a  sink serving $U$ moves away  from $U,$ the cost of servicing $U$ can not decrease. 

\item {\em Max tree composition} (Fig. \ref{fig:Max_Comp})\\ Let $T = (U,E')$ be a subtree of $\Tin$ and $s \in U$ a node with $t$ neighbors.  Set $\mathcal{F} = \{ T_1,...,T_t \}$ to be the forest created by removing $s$ from $T$, and $U_1,...,U_t$ the respective vertices of each tree in $\mathcal{F}$. Then $$f(U,s) = \max_{1\leq i\leq t} f(U_i \cup \{ s\},s).$$
The subtrees $U_i$ will be called {\em slices} of $U$ defined by $s.$
\end{enumerate}

\begin{figure}[t]
	\centering
\includegraphics[width=2.3in]{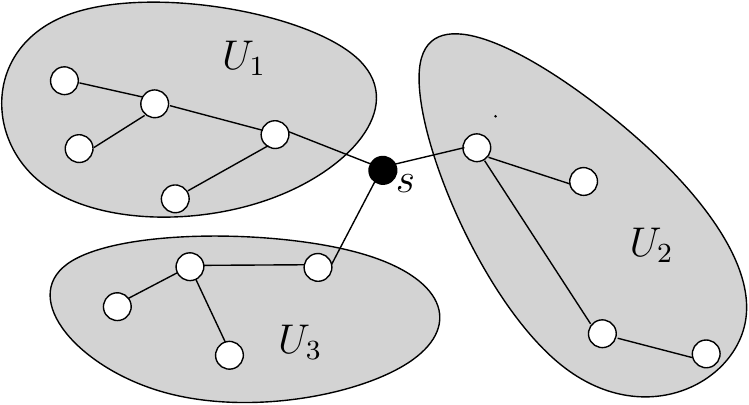}
\caption{Example of Max-Composition.  Let  $U$ denote the complete tree.  Removing $s$ creates a forest with three trees, $U_1, U_2, U_3.$
By definition,  $f(U,s) = \max\{f(U_1\cup \{ s\},s), \, f(U_2\cup \{ s\},s),\, f(U_3\cup \{ s\},s)\} .$}
	\label{fig:Max_Comp}
\end{figure}

Note  that 1-5 only define  a cost function over one subtree  and one single sink. Function $\emptyf$ is now naturally extended to work on  on partitions and sets  (Fig. \ref{fig:maxcomposition}).
\begin{enumerate}[resume]
\item {\em Max partition  composition}   
\begin{equation}
\forall \mathcal{P} \in \Lambda[S],\quad 
f(\mathcal{P},S) = \max_{P_{s_i} \in \mathcal{P}} f(P,s_i).
\label{eq:MaxCompositionB}
\end{equation}
\end{enumerate}

\begin{definition}
A  cost function $f(\mathcal{P},S$   that satisfies   properties 1-5 is called {\em minmax monotone}. 
\end{definition}

 Given $k > 0$, the  main  problem will be to find an $S^*$ and  $\mathcal{P}^* \in \Lambda[S^*]$ that satisfy
\begin{equation}
\label{eq:goal}
f(\mathcal{P}^*,S^*) = \min_{S \subseteq V,\, |S| \le k,\, \mathcal{P} \in \Lambda[S]} f(\mathcal{P},S).
\end{equation}

Our algorithms  make calls directly to an oracle $\mathcal{A}$ that,  given subtree $U$ of $\Tin$ and $v \in U,$ 
computes $f(U,v)$.
As mentioned, in our case of interest, \cite{Mamada2005a} provides an  $O(n \log^2 n)$ oracle for general-capacity sink evacuation and \cite[p.~34]{Higashikawa2014c} provides $O(n \log n)$ oracle for uniform-capacity sink evacuation.

Finally, later amortization arguments will require the following definition:
\begin{definition}
If $\mathcal{A}$ runs in time $t_\mathcal{A}(n)$, then  $\mathcal{A}$  is {\em asymptotically subadditive} if
\begin{itemize}
\item   $t_\mathcal{A}(n) = \Omega(n)$ and is non-decreasing.
\item For all nonnegative $n_i$,  $\sum_i t_\mathcal{A}(n_i)   = O \left( t_\mathcal{A}\Bigl( \sum_i n_i \Bigr) \right).$
\item $t_\mathcal{A}(n+1) = O\left(t_\mathcal{A}(n)\right)$
\end{itemize}
\end{definition}
Note that for $x\ge 1$ and  $y\ge 0$, any function of the form  $n^x \log^y n$ is asymptotically subadditive so, in particular, the oracles mentioned above are asymptotically subadditive.


\subsection{Results}
\label{subsec:results}
The remainder of the paper is devoted to deriving two algorithms.

The first  algorithm  and the majority of the paper,  provides a feasibility test,   which solves a simplified, \emph{bounded-cost} version of the problem.  Given $k$ and $\mathcal{T},$ determine whether  there exists a $k$-partition with cost at most $\mathcal{T}.$

\begin{table}[H]
	\centering
	\begin{tabular}{ |r|p{0.8\textwidth}| }
		\hline
		{\bfseries Problem} & Bounded cost minmax $k$-sink \\
		\hline
		Input & Tree $\Tin=(\Vin,\Ein)$, \ $k \geq 1$, \ $\mathcal{T} \geq 0$\\[0.01in]
		Output & $\Sout \subseteq \Vin$ and
		                                   $\Pout  \in \Lambda[\Sout]$ \ s.t. \  $|\Sout| \leq k$ and $f(\mathcal{P_{\mathrm{out}}},{\Sout}) \leq \mathcal{T}$.\\\
		& If such  a  $(\Sout, \mathcal{P_{\mathrm{out}}})$  pair does not exist, output `No'. \\
		\hline
	\end{tabular}
\end{table}

The second algorithm is for the original general problem.  To find the location of $k$ sinks that minimize the cost of a $k$-partition.

\begin{table}[H]
	\centering
	\begin{tabular}{ |r|p{0.75\textwidth}| }
		\hline
		{\bfseries Problem} & Minmax $k$-sink \\
		\hline
		Input & Tree $\Tin=(\Vin,\Ein)$, $k \geq 1$\\
		Output & $\Sout \subseteq \Vin$ satisfying  $|S| \le k$  and  $\Pout  \in \Lambda[\Sout]$ satisfying Eq.~(\ref{eq:goal})\\
		\hline
	\end{tabular}
\end{table}

Our first result is 

\begin{theorem}
	If $\mathcal{A}$ is an asymptotically subadditive algorithm for solving the fixed $1$-sink problem that runs  in   $t_\mathcal{A}(n)$ time, then  the bounded cost minmax $k$-sink problem can be solved in time $O(k t_\mathcal{A}(n) \log n)$. \label{theorem:FastBC}
\end{theorem}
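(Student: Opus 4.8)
\emph{Overview.} The plan is a greedy bottom-up sweep of $\Tin$ that computes the minimum number of blocks needed to cover $\Tin$ so that every block has cost at most $\mathcal T$ under its own internal sink, after which feasibility reduces to comparing that number with $k$. I would root $\Tin$ arbitrarily and process vertices in post-order. After processing a vertex $v$ the sweep maintains: a family of \emph{finalized} blocks lying inside $T_v$, each carrying a committed internal sink of cost $\le\mathcal T$; and one connected \emph{residual} subtree $R_v\ni v$, equal to $v$ together with a subset of the residuals of $v$'s children, with $T_v\setminus R_v$ being exactly the union of the finalized blocks. The residual is the part of $T_v$ whose sink has not yet been decided --- it will be served later either by a sink placed at $v$ or by a sink placed at a strict ancestor of $v$.

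\emph{Structural lemmas and the branching rule.} The first task (the structural part, developed in Section~\ref{Section: Bounded Cost}) is to extract from the definition of minmax monotone $f$ precisely what the greedy needs. Call a residual $R$ hanging at $v$ \emph{live} if some vertex $s$ lying in $R$, or on the $v$-to-root path, satisfies $f\!\left(R\cup P(v,s),\,s\right)\le\mathcal T$, where $P(v,s)$ is the vertex set of that path. By Set and Path monotonicity this quantity is a lower bound on the cost of whatever block contains $R$ in \emph{any} solution that serves $R$ from $s$, so a non-live residual can never be completed within budget; conversely the sweep keeps residuals live as long as it can. At a branching vertex $v$ with child residuals $R_1,\dots,R_m$ the sweep tries to carry up the merged residual $\{v\}\cup R_1\cup\dots\cup R_m$; if that is live it is carried up intact, and otherwise the sweep \emph{finalizes} the fewest of the $R_i$ --- the ones contributing the costliest slices, a choice that Max-tree composition makes well defined because the cost of the eventual block is the maximum over its slices, so deleting a child residual deletes exactly one slice --- placing a feasible internal sink in each, and carries up the live merge of the survivors with $v$. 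Correctness is a greedy-stays-ahead argument: given any feasible $(\mathcal P,S)$, I would show by induction up the tree that it can be edited one branching vertex at a time, never increasing $|S|$ or any block's cost, until its finalized pieces inside $T_v$ and its residual at $v$ agree with the greedy's; the three monotonicity properties are exactly what lets one reroute the affected nodes across the modified cut without breaking the $\mathcal T$ budget.

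\emph{Oracle-call count.} The sweep performs at most $k$ finalizations before it may answer ``No'' (each consumes a sink and one more remains for the final residual), so the work splits into the finalizations and the residual bookkeeping. For a fixed subtree $R$ the map $s\mapsto f(R,s)$ is ``unimodal'' on the tree: along any path the two slices flanking $s$ change cost in opposite directions under Path monotonicity and Max-tree composition, so there is a centroid-type vertex from which $f(R,\cdot)$ only increases. Hence a feasible internal sink of $R$ can be found, or shown not to exist, with $O(\log|R|)=O(\log n)$ oracle calls by a centroid search: evaluate $f$ at the centroid, and if it exceeds $\mathcal T$ recurse into the unique slice realizing the maximum, which must contain every feasible sink. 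That is $O(\log n)$ oracle calls per finalization, hence $O(k\log n)$ overall. For the bookkeeping --- testing liveness and merging residuals as the sweep climbs long non-branching paths --- I would attach to each live residual a short certificate (its current candidate sink and that sink's cost) and re-invoke the oracle only at the branching events that genuinely restructure a residual, charging those calls to subtrees that are disjoint within each of the $O(\log n)$ merge levels, so that asymptotic subadditivity of $t_\mathcal A$ collapses their total cost to $O\!\left(t_\mathcal A(n)\log n\right)$. Adding the $O(k\log n)$ finalization calls of cost $t_\mathcal A(n)$ each and the $O(n\log n)$ pointer bookkeeping gives the claimed $O\!\left(k\,t_\mathcal A(n)\log n\right)$, using $k\ge1$. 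The construction of this algorithm is the content of Section~\ref{Sec: Bounded Algorithm}.

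\emph{Main difficulty.} The crux is the branching rule together with its optimality proof. Liveness of a residual depends on merges that happen only higher up in the tree, so one must prove that the ``thin'' residual the greedy carries --- and its policy of shedding the worst-slice children first --- is never dominated by any cleverer choice; this demands a carefully chosen monotone potential and a clean exchange lemma built purely on the three monotonicity axioms. The second, more mechanical, obstacle is keeping the liveness and merge maintenance within the $O(t_\mathcal A(n)\log n)$ budget: the certificate must be updatable by $O(1)$ oracle calls at a merge yet still support the centroid search, and the subadditivity charging must be arranged so that repeatedly re-querying a growing residual does not inflate the total past $O(\log n)$ levels' worth.
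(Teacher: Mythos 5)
Your overall architecture --- a single rooted post-order sweep that carries a ``residual'' upward and finalizes child residuals when a merge fails a liveness test --- is genuinely different from the paper's, which instead alternates two criteria on an unrooted working tree: a \emph{peaking} criterion that opens a new sink at $u$ exactly when $f(V_{-v}(u),u)\le\mathcal T$ but $f(V_{-v}(u)\cup\{v\},v)>\mathcal T$, and a \emph{reaching} criterion on a hub tree that later extends already-placed sinks' blocks. Unfortunately your branching rule has a concrete gap. First, ``the costliest slices'' is not well defined: liveness quantifies over many candidate sinks $s$, and the ordering of child residuals by slice cost changes with $s$. Second, and more seriously, a child residual that your rule sheds need not admit a feasible \emph{internal} sink: its liveness may be witnessed only by vertices at or above the branching vertex $v$ (those witnesses survive for $R_i$ alone but die once the siblings are merged in), in which case ``placing a feasible internal sink in each'' is impossible even though the instance is feasible. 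The paper's peaking criterion is engineered precisely to avoid this: it only opens a sink inside a subtree that provably cannot be served from outside (by path monotonicity, $f(V_{-v}(u)\cup\{v\},v)>\mathcal T$ kills every external candidate) \emph{and} can be served from its boundary vertex, and it then leaves that sink as a leaf of the working tree so that territory above can still be routed down to it later (Lemmas~\ref{lemma:PeakingCriterionPutSink} and~\ref{lemma:RCTrimTree}). Relatedly, your liveness test charges only the bare path $P(v,s)$ to an ancestor candidate $s$, whereas any connected block containing both $R$ and $s$ must also absorb every sinkless branch hanging off that path; the paper's bulk paths $BP(v,s)$ and RC-viability exist to account for exactly this, and without them your test certifies infeasible configurations as live. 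The exchange argument you defer to ``a carefully chosen potential'' is the bulk of the paper's Sections~\ref{Section: Bounded Cost}--\ref{Sec: Bounded Algorithm} and cannot be waved through, because the correctness of the whole greedy hinges on \emph{where} inside a shed residual the sink is placed.

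The running-time accounting also has a hole. Your plan collapses the non-finalization oracle calls using ``$O(\log n)$ merge levels'' within which the queried subtrees are disjoint, but a post-order sweep of a general tree has no such level structure: on a path the residual is re-tested at $\Theta(n)$ nested vertices, and each test is an oracle call on a set of size up to $n$, giving $\Theta(n\,t_\mathcal A(n))$ with no disjointness to exploit. The paper manufactures the needed structure explicitly: a recursive centroid decomposition of $\Tin$ into $O(\log n)$ stages combined with the \textbf{L1}/\textbf{L2} labelling scheme guarantees (Lemma~\ref{eq:Cempty}) that the subtrees actually queried within one stage are pairwise disjoint, so subadditivity yields $O(t_\mathcal A(n))$ per stage; after the first phase, each of the $\le k$ subsequent peaking/reaching rounds is confined to a single hub-tree path and costs only $O(\log n)$ oracle calls via binary search. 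You would need an analogous device --- your centroid search for a feasible internal sink is in the right spirit, but it is applied only at finalizations, not to the $\Omega(n)$ liveness/merge tests where the cost actually accumulates.
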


Combining this algorithm with  a careful application of parametric searching will yield a 
a solution to the general problem:

\begin{theorem}
	If $\mathcal{A}$ is an asymptotically subadditive algorithm for solving the fixed  $1$-sink problem that runs  in   $t_\mathcal{A}(n)$ time then  the  minmax $k$-sink problem can be solved in time $O( \max(k,\log n) k t_{\mathcal{A}}(n) \log^2 n ).$
	 \label{theorem:FastC}
\end{theorem}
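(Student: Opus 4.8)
The plan is to reduce the optimization problem to the bounded-cost feasibility problem of Theorem \ref{theorem:FastBC} via parametric search, in the style of Frederickson's $k$-center algorithm but adapted to the fact that there is no small explicit candidate set of optimal costs here. The optimal value $\mathcal{T}^* = f(\mathcal{P}^*,S^*)$ is the smallest $\mathcal{T}$ for which the feasibility algorithm $\mathcal{B}$ of Theorem \ref{theorem:FastBC} answers ``yes''. So abstractly we want to binary-search on $\mathcal{T}$; the subtlety is that $\mathcal{T}$ ranges over a continuum (or at least over a set of size far larger than we can enumerate), so we instead run $\mathcal{B}$ \emph{generically} on the symbolic parameter $\mathcal{T}$ and resolve each comparison against $\mathcal{T}$ that $\mathcal{B}$ makes as we go. First I would observe that every comparison $\mathcal{B}$ performs involving $\mathcal{T}$ is of the form ``is $f(U,v) \le \mathcal{T}$?'' for some subtree $U$ and vertex $v$ encountered during the run, i.e.\ a comparison between $\mathcal{T}$ and a value returned by the oracle $\mathcal{A}$. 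Each such comparison can be resolved by one call to $\mathcal{B}$ itself: run $\mathcal{B}$ with the concrete input $\mathcal{T} = f(U,v)$, and the yes/no answer tells us on which side of that oracle value $\mathcal{T}^*$ lies (with ties resolved toward ``feasible'', since $f$ is minmax monotone and the optimum is attained).

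The key efficiency trick — again following Frederickson/Megiddo — is to not resolve these comparisons one at a time but in \emph{batches}. Whenever the generic run of $\mathcal{B}$ reaches a stage where it must know the outcomes of a batch of $m$ comparisons with oracle values $a_1,\dots,a_m$, I would sort those values and binary-search $\mathcal{T}^*$ against them using $O(\log m)$ concrete calls to $\mathcal{B}$; this fixes the sign of \emph{all} $m$ comparisons at the cost of $O(\log m)$ runs of $\mathcal{B}$, which is $O(\log n)$ since $m = O(\mathrm{poly}(n))$. The next step is to bound how many such batched stages occur and how large the batches are. Here I would use the structure of $\mathcal{B}$ from Theorem \ref{theorem:FastBC}: it makes $O(k\log n)$ amortized oracle calls on subtrees whose sizes sum (by asymptotic subadditivity) to something like $O(n\log n)$, and it proceeds in $O(\log n)$ ``rounds'' of a divide-and-conquer / centroid-style recursion. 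Organizing the generic execution so that each round contributes one batch, the number of batches is $O(\log n)$, each batch costs $O(\log n)$ concrete runs of $\mathcal{B}$, and a concrete run of $\mathcal{B}$ costs $O(k\, t_\mathcal{A}(n)\log n)$; multiplying gives $O(k\, t_\mathcal{A}(n)\log^3 n)$ for the comparison-resolution overhead, plus the cost of the single generic run itself, $O(k\, t_\mathcal{A}(n)\log n)$. Summing, the total is $O(k\, t_\mathcal{A}(n)\log^3 n)$.

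To land exactly on the claimed $O(\max(k,\log n)\, k\, t_\mathcal{A}(n)\log^2 n)$ I would be more careful with the amortization. The extra factor of $\max(k,\log n)$ relative to a naive analysis comes from the fact that inside $\mathcal{B}$ the oracle values one compares against $\mathcal{T}$ are not all available up front: they are themselves produced by calls to $\mathcal{A}$ on subtrees that depend on the outcomes of earlier comparisons, and a recursive/merging structure means one pays for up to $\max(k,\log n)$ levels (the recursion in $\mathcal{B}$ has depth governed by both the $O(\log n)$ centroid depth and the $O(k)$ partition size, whichever dominates). So the accounting is: one generic pass interleaved with $O(\max(k,\log n))$ rounds of batched resolution, each round doing $O(\log n)$ calls to $\mathcal{B}$, each call costing $O(k\, t_\mathcal{A}(n)\log n)$ — total $O(\max(k,\log n)\, k\, t_\mathcal{A}(n)\log^2 n)$, which dominates the $O(k\,t_\mathcal{A}(n)\log n)$ cost of the generic pass. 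Finally, correctness: because $f$ is set-monotone and path-monotone and the number of feasible configurations is finite, $\mathcal{T}^*$ is attained and equals one of the oracle values $f(U,v)$ that arises during the run; the batched binary searches pin $\mathcal{T}^*$ to a point where feasibility switches, so the last ``yes'' run of $\mathcal{B}$ returns a genuine optimal $(\Sout,\Pout)$ pair, which we output.

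The main obstacle I anticipate is the amortization bookkeeping in the previous paragraph — specifically, proving that the comparisons against $\mathcal{T}$ inside $\mathcal{B}$ really can be organized into only $O(\max(k,\log n))$ batch-resolvable stages rather than, say, one stage per oracle call (which would give an extra $\log n$ factor). This requires looking inside the concrete structure of the algorithm behind Theorem \ref{theorem:FastBC}: one must argue that at each recursion level the set of subtrees whose costs get compared to $\mathcal{T}$ can all be generated and then collectively resolved before descending, and that the sizes of those subtrees telescope via asymptotic subadditivity so that a single concrete invocation of $\mathcal{B}$ suffices to resolve the whole level. Everything else — the generic-execution paradigm, tie-breaking by monotonicity, extracting the witness partition from the final feasible call — is routine once that structural claim is in hand.
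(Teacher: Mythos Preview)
Your high-level framework is exactly right and matches the paper: run the feasibility algorithm $\mathcal{B}$ generically in $\mathcal{T}$, maintain an interval $(\mathcal{T}^L,\mathcal{T}^H]$ bracketing $\mathcal{T}^*$, and resolve each comparison ``$f(U,v)\le\mathcal{T}$?'' by a concrete call to $\mathcal{B}$. Your correctness argument is also essentially the paper's (the paper phrases it as: at termination every value in $(\mathcal{T}_<,\mathcal{T}_>)$ induces the same execution as $\mathcal{T}_<$, hence is infeasible, so $\mathcal{T}^*=\mathcal{T}_>$).

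Where you drift from the paper is in how the $\max(k,\log n)$ factor arises, and your stated ``main obstacle'' is based on a misconception. You try to organize \emph{all} of $\mathcal{B}$'s comparisons into $O(\max(k,\log n))$ batchable rounds. That is neither possible nor necessary. The actual structure, drawn from the decomposition in Section~\ref{subsec:comb bounded}, is asymmetric:
\begin{itemize}
\item The \emph{first peaking phase} (centroid decomposition) may make up to $\Theta(n)$ oracle calls, but they are arranged in $O(\log n)$ stages where the calls within a stage are independent of one another (Lemma~\ref{lem:peaking centroid}(1)). Each stage is one batch; binary search among its returned values costs $O(\log n)$ runs of $\mathcal{B}$. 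Total: $O(\log^2 n)$ feasibility tests.
\item \emph{Everything afterwards} --- later peaking phases, building and updating the hub tree --- makes only $O(k\log n)$ oracle calls \emph{in total}, not amortized but actual count. These calls are sequential (each binary search step depends on the previous), so they cannot be batched; but you do not need to: handle them one at a time, one feasibility test each. Total: $O(k\log n)$ feasibility tests.
\end{itemize}
Adding, the number of calls to $\mathcal{B}$ is $O(\log^2 n + k\log n)=O(\max(k,\log n)\log n)$, and multiplying by $B(n)=O(k\,t_\mathcal{A}(n)\log n)$ gives the claimed bound. So the resolution of your obstacle is not a clever batching of the later phases; it is the observation that after the first peaking phase the raw oracle-call count is already small enough that per-call resolution is affordable. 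The paper formalizes this split as ``Stage-Steps'' versus ``If-Steps'' (Figures~\ref{fig:Stage Step}--\ref{fig:If Step} and Lemma~\ref{lem:inter1}).
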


Theorem \ref{theorem:kSinkRunTime} follows directly from this and  the $1$-sink algorithms  given by 
 \cite{Mamada2005a} and  \cite[p.~34]{Higashikawa2014c}.

A simple modification of the 2nd algorithm will also solve the specialized partitioning version in which  the sinks are fixed in advance. We will call a minmax monotone function {\em relaxed} if the defining $\emptyf$ satisfies properties 1,2, and 4 from Section \ref{subsec:minmaxdef} but does not necessarily satisfy property 3 (path monotonicity).\footnote{Because the sinks are predefined, they never move and path monotonicity is superfluous.}

\begin{table}[H]
	\centering
	\begin{tabular}{ |r|p{0.7\textwidth}| }
		\hline
		{\bfseries Problem} &Relaxed  Minmax $k$ fixed-sink  \\
		\hline
		Input & Tree $\Tin=(\Vin,\Ein)$,  $S \subseteq V,$  \, $|S|=k$ \\
		Output & $\Pout  \in \Lambda[S]$ \ s.t.  $ f(\Pout,S)  =  \min_{\mathcal{P} \in  \Lambda[S]} f(\mathcal{P},S)$ \\
		\hline
	\end{tabular}
\end{table}
\begin{theorem}
	If $\mathcal{A}$ is an asymptotically subadditive algorithm for solving the   fixed relaxed  $1$-sink problem that runs  in   $t_\mathcal{A}(n)$ time and that further satisfies $t_\mathcal{A}(2n) = O(t_\mathcal{A}(n))$, 
	 then  
	the minmax $k$ fixed-sink    problem can be solved in time $O(  k^2  t_{\mathcal{A}}(n)  \log^2 n).$
	 \label{theorem:FastCF}
\end{theorem}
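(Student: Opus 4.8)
The plan is to reproduce, in the fixed-sink setting, the two-stage architecture behind Theorem~\ref{theorem:FastC}: first solve a bounded-cost feasibility version, then use it to drive a parametric search. Concretely, define the \emph{bounded-cost relaxed minmax $k$ fixed-sink problem} to be: given $\Tin$, a sink set $S$ with $|S|=k$, and a threshold $\mathcal{T}\ge 0$, output some $\mathcal{P}\in\Lambda[S]$ with $f(\mathcal{P},S)\le\mathcal{T}$, or report that none exists. I would first argue this can be solved in $O(k\,t_\mathcal{A}(n)\log n)$ time by a routine of the same shape as the one behind Theorem~\ref{theorem:FastBC}, and then obtain Theorem~\ref{theorem:FastCF} by running that routine inside a parametric search, paying an extra $O(k\log n)$ factor. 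Throughout, only properties~1, 2 and~4 of $\emptyf$ are used; path monotonicity never enters because the sinks are never moved, which is exactly why a \emph{relaxed} cost function suffices (as anticipated by the footnote to the theorem statement).

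For the feasibility test, root $\Tin$ at one of the sinks and sweep the vertices in post-order, maintaining at each vertex the unique ``active'' block hanging below it, exactly as in the algorithm of Theorem~\ref{theorem:FastBC}. The one place that changes is the step in which the original algorithm is free to open a fresh sink at the most advantageous location: here the active region must instead be committed to one of the $k$ prescribed sinks, so at that step we consult the oracle $\mathcal{A}$ on the region enlarged by (the already-committed portion of) its connecting path to each of the $O(k)$ reachable sinks, and accept an assignment whose cost is $\le\mathcal{T}$. Correctness is established, as for Theorem~\ref{theorem:FastBC}, from set monotonicity (cutting an edge never raises the cost of either resulting piece), max-tree composition (the feasibility of a block can be certified branch by branch at its sink), and an exchange argument showing that the greedy rule ``keep an active block as long as it stays completable, and commit it to a sink only when forced'' yields a feasible $\mathcal{P}\in\Lambda[S]$ whenever one exists. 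Since there are $k-1$ cuts to place and each decision triggers $O(k)$ calls to $\mathcal{A}$ on trees of size at most $2n$ (the region together with a connecting path and a per-sink gadget), the hypothesis $t_\mathcal{A}(2n)=O(t_\mathcal{A}(n))$ keeps each such call at cost $O(t_\mathcal{A}(n))$ and the whole test runs within $O(k\,t_\mathcal{A}(n)\log n)$.

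It remains to find the optimum $\mathcal{T}^*$, which equals $f(P_{s_i},s_i)$ for the heaviest block of an optimal partition; there is no short explicit list of candidates, so I would run the feasibility test with $\mathcal{T}$ symbolic. The only branch the test takes that depends on $\mathcal{T}$ is a comparison of $\mathcal{T}$ against a concrete value $f(U,s_i)$ returned by a completed $\mathcal{A}$-call (every other quantity is $\mathcal{T}$-independent). Whenever such a comparison is reached, resolve it in the usual Megiddo fashion by running the feasibility test \emph{concretely} at that value: this determines on which side of it $\mathcal{T}^*$ lies, hence the outcome of the symbolic comparison, and shrinks the interval known to contain $\mathcal{T}^*$. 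Batching these resolutions by ``round'' exactly as in the proof of Theorem~\ref{theorem:FastC} --- and observing that, since the sinks are fixed, the symbolic execution has only $O(k)$ rounds rather than $O(\max(k,\log n))$, each producing at most $n$ critical values and hence resolvable with $O(\log n)$ concrete runs --- multiplies the $O(k\,t_\mathcal{A}(n)\log n)$ feasibility cost by $O(k\log n)$, giving the claimed $O(k^2\,t_\mathcal{A}(n)\log^2 n)$ bound.

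The crux of all this is the feasibility test, and within it the verification that the greedy ``commit only when forced, then route to a reachable prescribed sink'' rule is simultaneously correct and implementable with only $O(k)$ oracle calls per decision in the relaxed model; once that is in place, the parametric-search layer is essentially a transcription of the argument already used for Theorem~\ref{theorem:FastC}.
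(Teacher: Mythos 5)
Your two-stage architecture (bounded-cost feasibility test, then parametric search) and your observation that path monotonicity is only needed when a \emph{new} sink is opened --- so that a relaxed cost function suffices --- both match the paper. But the feasibility test you sketch is not the one the paper uses, and as described it does not meet the bound you need. The paper does not design a fresh post-order sweep: it observes that once the sinks are given \emph{as leaves} of $\Tin$, the working tree is already in exactly the state that algorithm $\mathcal{B}$ of Theorem~\ref{theorem:FastBC} reaches at the end of its first peaking phase (after a single $O(t_\mathcal{A}(n))$-time check that every outstanding branch can be served from its attachment point, i.e.\ $\Tmin\le\mathcal{T}$), and then simply runs the remaining reaching phases of $\mathcal{B}$; the correctness of ``commit only when forced'' is precisely Lemma~\ref{lemma:RCTrimTree} together with the recursive self-sufficiency test of Lemma~\ref{lemma:RecursiveSS}, which your asserted exchange argument would have to reprove from scratch. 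More concretely, your oracle accounting does not close: $O(k)$ oracle calls at each of $k-1$ commit decisions is $\Theta(k^2)$ calls, i.e.\ $O(k^2 t_\mathcal{A}(n))$ for one feasibility test, which is not $O(k\,t_\mathcal{A}(n)\log n)$ once $k\gg\log n$; propagated through your $O(k\log n)$ concrete reruns in the parametric layer this gives $O(k^3 t_\mathcal{A}(n)\log n)$, exceeding the claimed bound. The paper avoids this by spending only $O(\log n)$ oracle calls \emph{per sink} --- one binary search (Algorithm~\ref{alg:ReachingAfterPeaking}) up each sink's root path to precompute the lists $S(v)$ --- after which every reaching phase is entirely oracle-free.

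The second missing idea is the reduction that handles sinks at internal vertices, which is also where the hypothesis $t_\mathcal{A}(2n)=O(t_\mathcal{A}(n))$ is actually consumed. The hub-tree and reaching machinery (and equally any ``single active block hanging below $v$'' sweep) requires the sinks to be leaves of the working tree; a sink in the interior of $\Tin$ owns a block that straddles several of its incident subtrees, and no single cut edge separates it from the rest. The paper replaces each given sink $s$ by one new leaf-sink per neighbor of $s$ and deletes $s$, obtaining a forest of total size $<2n$ in which every sink is a leaf, solves each tree of the forest independently, and takes the maximum; the $2n$ condition in the theorem statement exists exactly to pay for this doubling. Your proposal invokes that hypothesis only to bound oracle calls on ``a region plus a connecting path plus a gadget'' and makes no provision for internal sinks, so as written the sweep does not handle the general input.
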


For  the sink evacuation problem, plugging  the $O(n \log^2 n)$ oracle  into Theorem \ref{theorem:FastCF} leads to a $O(n k^2 \log^4 n)$ time  algorithm, substantially improving upon the previously known 
 $O(n (c \log n)^{k+1})$  \cite{Mamada2005a} and   $O(n^2 k \log^2  n)$ \cite{Mamada2005} algorithms when $4 < k \ll n$.

All the results mentioned above hold for both the discrete and the continuous versions of the problem.

\subsection{More Applications}
Although our  algorithm was motivated by  confluent dynamic flows it is surprisingly easy to apply to  unrelated problems.  We provide three examples below. The input is always a tree $\Tin=(\Vin,\Ein).$

Given a tree $U$ and center $s \in U$ recall the definition of {\em slices $U_i$}  in the max-composition rule. 
The $U_i$ were the subtrees that resulted by removing $s.$

\medskip
\par\noindent\underline{Example 1: Weighted $k$-center}\\
Each vertex has weight $w_v$ and each edge $(u,v)$ has  length $d(u,v)$.
For  any pair $(u',v') \not\in E$,   $d(u',v')$ is the sum of the lengths of the edges on the unique path connecting $(u',v')$

As  a warm-up application we note that our algorithm immediately yields a (non-optimal)  algorithm for weighted $k$ center by setting
$$
f(U \cup\{s\},s) =\max_{u \in U} w_u d(s,u) 
$$
where $U$ is a subtree, $s\not\in U$ but  $s \in \Gamma(U)$.  
This $f(\cdot,\cdot)$ satisfies the minmax  monotone cost function properties laid out in Section \ref{subsec:minmaxdef}
and can be evaluated in  $O(|U|)$ time using a breadth-first search scan of the tree.
 Thus Theorem \ref{theorem:FastC} yields an $O(\max(k, \log n)  k n \log^2 n)$ time algorithm for solving
 the weighted $k$-center problem.

 \medskip

 The algorithm above is slower than  the  $O(n \log^2)$ algorithms of \cite{megiddo1981n,Cole87}.
 But, those algorithms strongly use parametric searching in a polynomially bounded space (costs defined by pairs of vertices).  It would be difficult to modify them to include general constraints. As illustrated below,  Theorem \ref{theorem:FastC}   permits adding  many types of constraints without any increase in running time.

\medskip

\par\noindent\underline{Example 2: Weight constrained weighted $k$-center}\\
Now denote the weight of a subtree $U \subseteq \Vin$ by $W(U) =\sum_{u \in U} w_u.$

Consider the following combination of the weighted $k$-center problem and minmax weight-partitioning problem \cite{Becker1980} that adds  the  constraint that the weight of all slices is at most some  fixed threshold  $W >0.$
This $W$ can be viewed as a natural limit on the capacity of the service center $s.$

For $U$ is a subtree, $s\not\in U$ but  $s \in \Gamma(U)$ set 
\begin{equation}
\label{eq:WCkC}
f(U \cup\{s\},s) =
\left\{
\begin{array}{ll}
\max_{u \in U} w_u d(s,u) \quad & \mbox{if $W(U) \le W$,}\\
\infty			& \mbox{Otherwise}
\end{array}
\right.
\end{equation}
This function also satisfies the minmax  monotone cost function properties
and can still easily be evaluated by a breadth-first search scan of the tree in $O(|U|)$ time.  Solving the minmax $k$-sink problem for this $\emptyf$ function using Theorem \ref{theorem:FastC}  exactly solves the weighted $k$-center problem in which each slice is constrained to have weight at most $W$  in  
$O(\max(k, \log n)  k n \log^2 n)$ time.

Adding additional constraints is not difficult.  If $d_H(u,v)$ is defined to be the number of edges (hop distance) on the path connecting $u$ and $v$ we could replace 
(\ref{eq:WCkC}) with
$$
f(U \cup\{s\},s) =
\left\{
\begin{array}{ll}
\max_{u \in U} d(s,u) \quad & \mbox{if $W(U) \le W$ and  $d_H(u,v) \le h$,}\\
\infty			& \mbox{Otherwise}
\end{array}
\right.
$$
and the algorithm now  exactly solves the weighted $k$-center problem in which each slice is constrained to have weight at most $W$ and no node can be more than $h$ edges from a center.  The running time remains the same because $f(U,s)$ can still be evaluated in $O(|U|)$ time.

\medskip

\par\noindent\underline{Example 3: Minmax range partitioning}\\
Motivated by obtaining {\em balanced} solutions \cite{Lari2015}  discusses partitioning using {\em range criteria}.  In this problem
the $k$ sinks $S$ are specified in the input.  For every $u \in \Vin \setminus S$ and $s \in S$, $c_{i s}$ is a given  cost of servicing $u$ with sink $s$.

The {\em range-cost} of $U$ serviced by  $s$ is
$$f(U,s) = \max_{u,v\in U} |c_{u s} - c_{v s}|,$$  i.e., the difference between the maximum and minimum service costs.
The problem is to do centered $k$-partitioning of the tree so as to minimize the maximum range-cost of a subtree. 

\cite{Lari2015} gives an $O(k^2 n^2)$ algorithm for this problem.  
While our algorithm  can not solve this exact problem it can solve the variation  when the range-costs are restricted to slices. That is
when   $s\not\in U$ but  $s \in \Gamma(U)$ set 
$$
f(U \cup\{s\},s) = \max_{u,v\in U} |c_{u s} - c_{v s}|.
$$
Note that this yields a {\em relaxed} minmax monotone function.
Since the range-cost can be calculated in $O(|U|)$ time,  Theorem \ref{theorem:FastCF} yields an $O(k^2 n \log^2 n)$ algorithm for finding a $k$-partition in which the {\em max range-cost of a slice} is minimized, almost an order of magnitude faster  than the algorithm for the original problem, 

We end by noting that the algorithm would remain valid if  the range-cost was defined by minimizing the ratio between servicing costs within a slice rather than the absolute difference, i.e.,  setting
$$
f(U \cup\{s\},s) = \max_{u,v\in U} \frac {c_{u s}} {c_{v s}}.
$$

\section{Useful Properties of the  Discrete Bounded-Cost Problem}
\label{Section: Bounded Cost}

This section derives structural properties  that will permit designing  an algorithm.
 In both this  section 
and Section \ref{Sec: Bounded Algorithm}, $k$ and $\mathcal T$ are 
fixed given values.

\begin{definition}\ 
\begin{itemize}
\item  A {\em  sink configuration} is a set of sinks $S=\{s_1,s_2,\ldots,s_t\} \subseteq \Vin$ 
and associated  partition $\mathcal{P}= (P_{s_1}, P_{s_2},\ldots, P_{s_t}) \in \Lambda(S)$. 
\item 
A {\em feasible sink configuration} is a  sink configuration  satisfying $F_S(\mathcal{P}) \leq \mathcal{T}$;  $S$ is  a {\em feasible sink placement}, and $\mathcal{P}$ is {\em a partition  witnessing the feasibility of $S$}. 
\item An {\em optimal} feasible configuration is a { feasible  sink configuration $(S^*,\Pstar)$ with minimum cardinality}; we write $k^* := |S^*|$.
\end{itemize}
\end{definition}

\begin{definition}
Let $\Sout=\{s_1,s_2,\ldots,s_t\}$ and  $\Pout = (P_{s_1}, P_{s_2},\ldots, P_{s_t})$  be a partition of some $V \subseteq \Vin$ such that 
 $\forall i,$  $s_i \in P_i$ and 
$f(P_i,s_i)\le \mathcal{T}$. 
Then $(\Sout,\Pout)$ is a {\em partial sink configuration}
\end{definition}

\begin{definition}
Let  $S \subseteq U  \subseteq V$ where $U$  is a subtree of $\Tin$.  
 $U$ is {\em served by $S$} if, for some partition $\mathcal{P}$ of $U$, for each $P \in \mathcal{P}$ there exists $s \in S$ such that $f(P,s) \leq \mathcal{T}$.
\end{definition}
Note that  $(\Sout,\Pout)$ being  a partial sink configuration implies  that  $\bigcup_{s_i \in \Sout} P_{s_i}$ is served by  $\Sout$.

\begin{definition}
  Let $U \subseteq V$ be a subtree of $T_\mathrm{in}$   and $v\in V$ (not necessarily in $U$). $v$ {\em supports} $U$ if one of the following holds:  
  \begin{itemize}
    \item If $v \in U$, then $f(U,v) \leq \mathcal{T}$.
    \item If $v \notin U$, let $\Pi$ be the set of nodes on the path from $v$ to $U$, inclusive of $v.$  Then $f(U  \cup \Pi, v) \leq \mathcal{T}$.
  \end{itemize}
\end{definition}

Note that if $U$ can be served by $S$, then for any node in $u \in U$, $\{u\}$  is supported by some $s \in S$. The converse is not generally true.

\begin{definition}
Let $u,v \in \Vin$.  $\Pi(u,v)$ denote the unique directed path from $u$ to $v$ inclusive of $u,v.$ 
\end{definition}

\subsection{Greedy construction}
\label{subsection: greedy construction}

Our algorithm  greedily grows  $(\Sout,\Pout)$, maintaining   the property that it will always be able to be completed to be an optimal  feasible configuration.  Thus, when the algorithm stops,  $(\Sout,\Pout)$ is either an optimal feasible configuration  with $k^* \le k$, or the algorithm answers no because $k^*  \ge \Sout > k$.
The algorithm also maintains a {\em Working Tree $T=(V,E)$} containing uncommitted  vertices  and  a set $S  \subseteq \Sout$ of sinks that may still have more nodes committed to them. 

 At the start of the algorithm, $T = \Tin$ and $(\Sout,\Pout) = (\emptyset,\emptyset).$
 
\begin{figure}
\centerline{\includegraphics[width=2.5in]{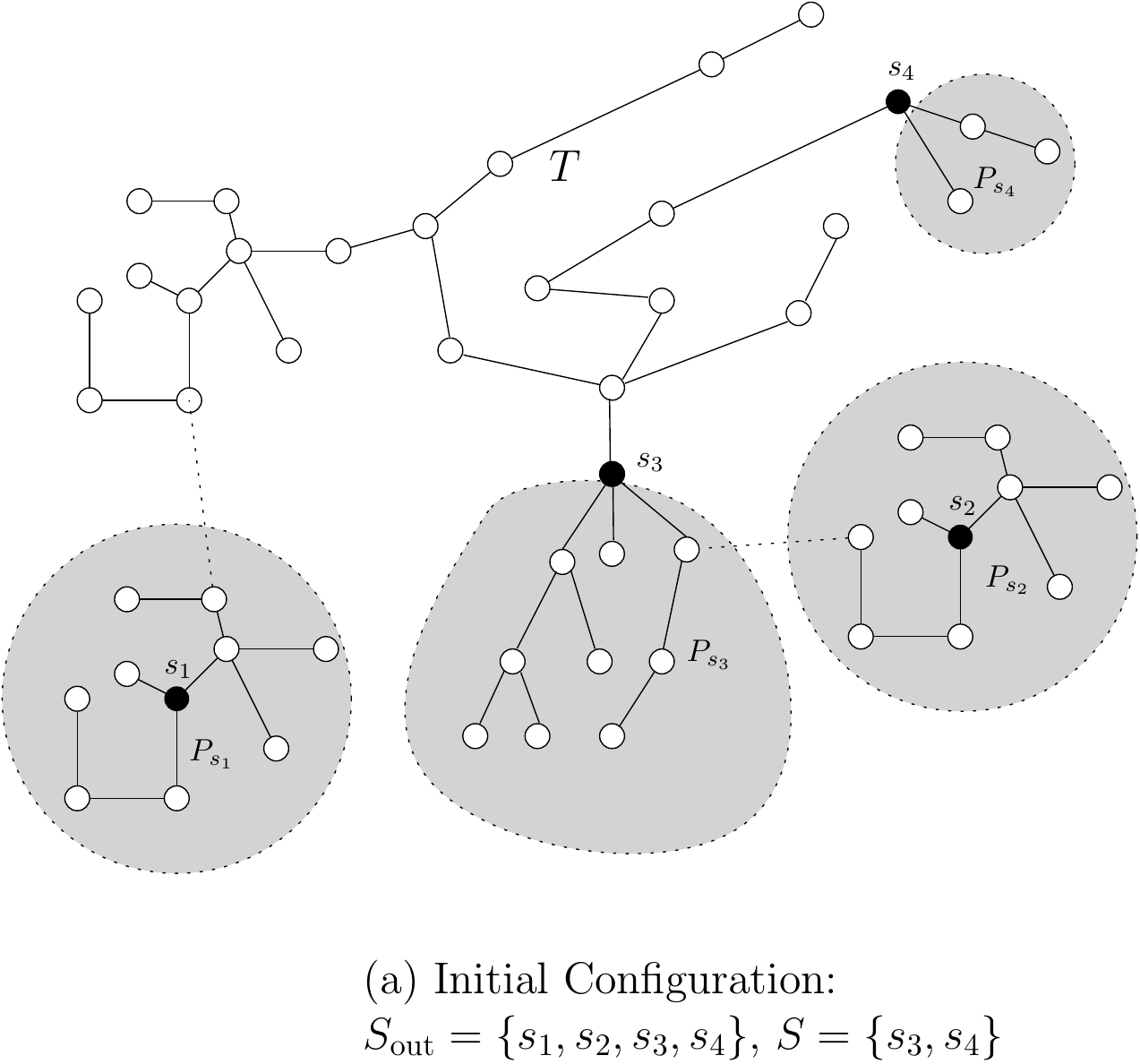} \hspace*{.3in}
\includegraphics[width=2.5in]{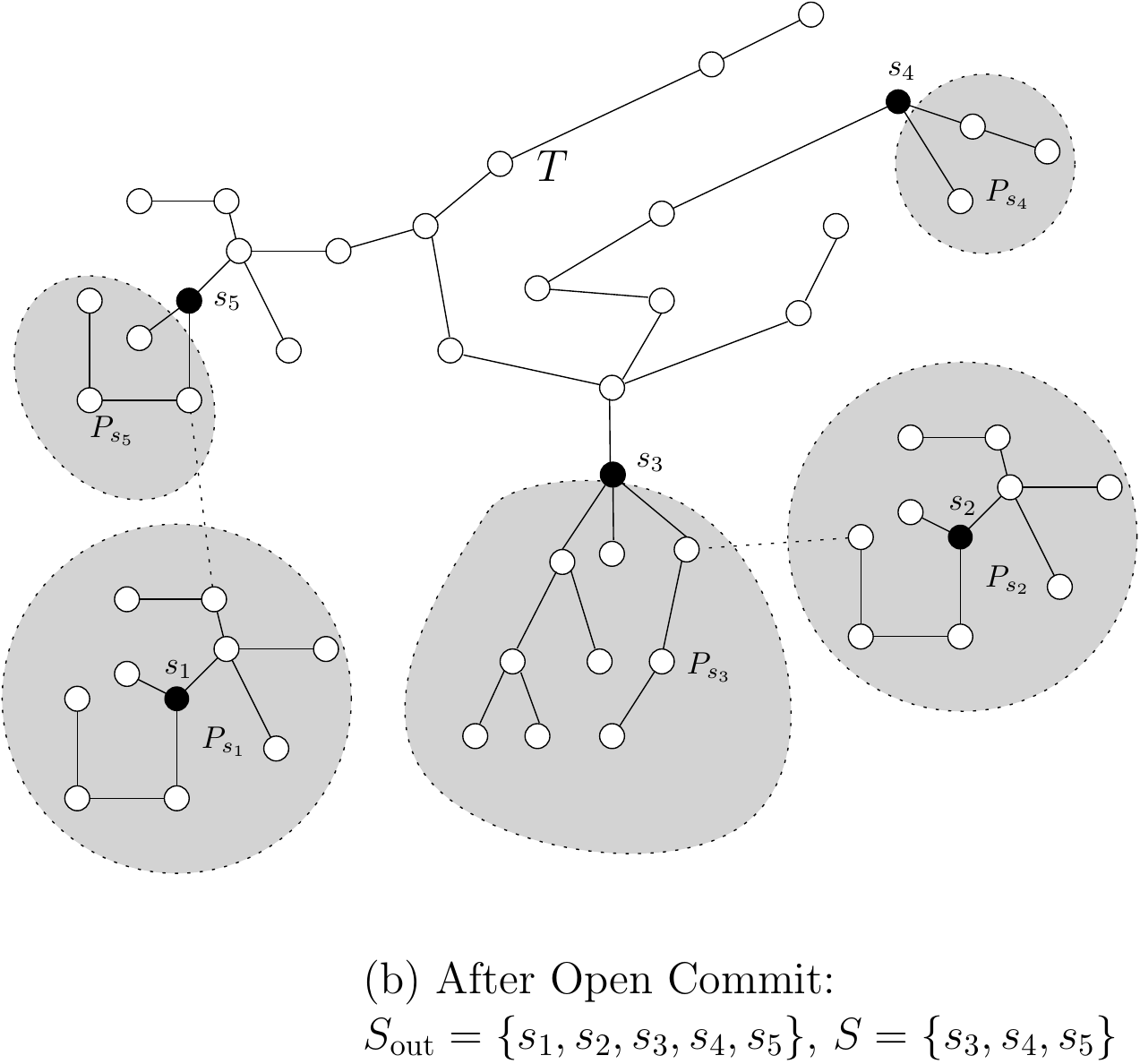}}
\vspace*{.3in}

\centerline{\includegraphics[width=2.5in]{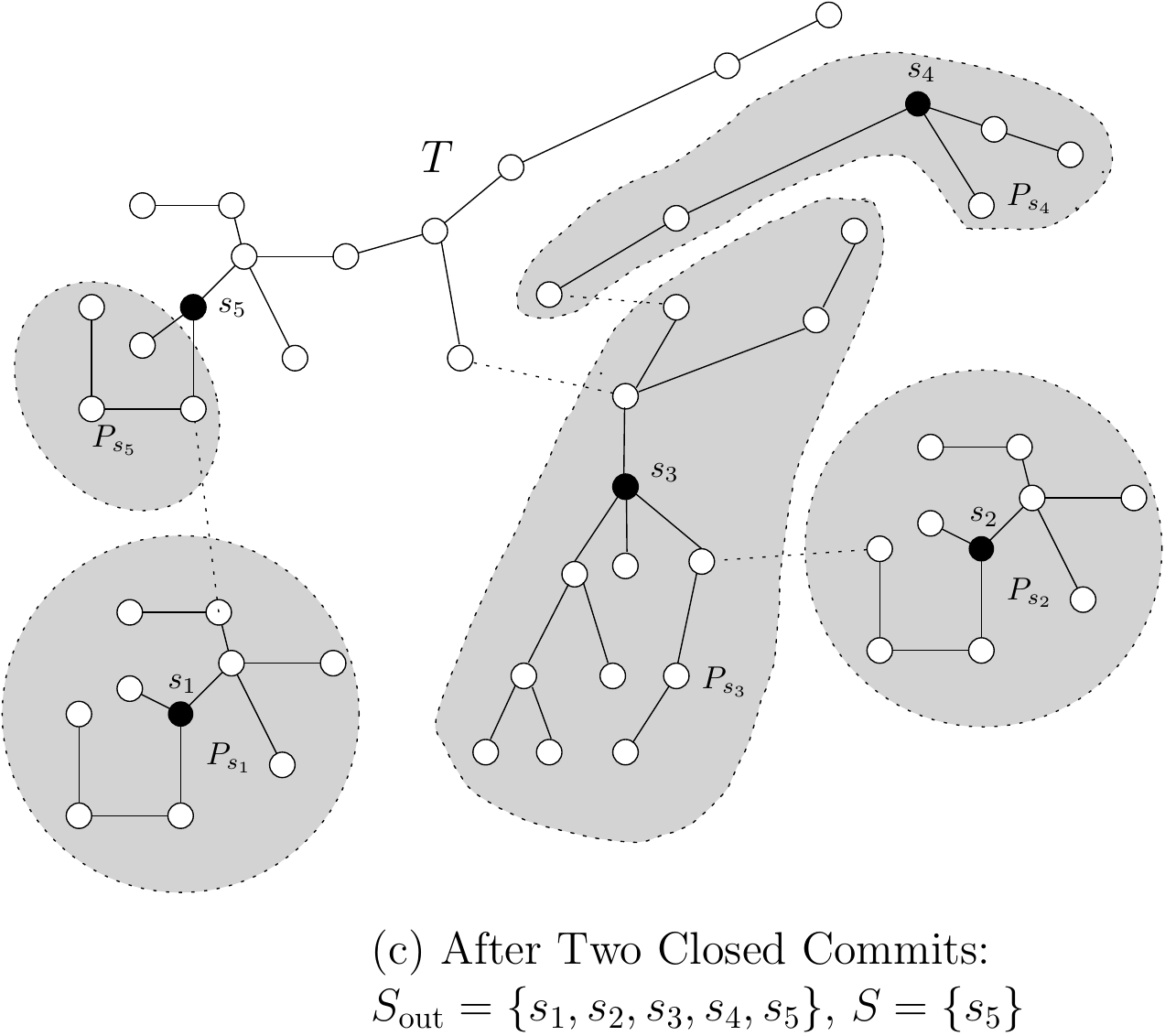}}
\caption{ (a) is a partial sink current configuration.  Black nodes are sinks.  Gray areas are the $P_s$   associated with those sinks.  
Note that the  ``open'' sinks  $S=\{s_3,s_4\}$ are leaves of the working tree $T.$  
(b) is the result of  an  open commit creating  new sink $s_5$ and its associated $P_{s_5}.$  (c) results from  two closed commits performed on  $s_3$ and then $s_4.$  Note that $s_3$ and $s_4$ are now closed and will never have any further nodes committed to them.}
\label{fig:commits}
\end{figure}

At each step, the algorithm will   {\em commit}  a subtree block $\Pnew \subseteq V$ of previously unserviced nodes  to a sink $s$. 
There will be two types of commits, 
(Fig.~\ref{fig:commits}) with the following properties:
 \begin{itemize}
 \item  {\em Open commit:}  of  $\Pnew \subseteq V$ to {\bf new}  sink $s \in V \setminus  \Sout$.   $s \in \Pnew$
 \begin{itemize}
 \item   $s$ will  be added to $\Sout$.  
\item $P_s = P_\mathrm{new}$   will  be added 
  to $\mathcal{P}_{\mathrm{out}}$.  
  \item $\Pnew \setminus \{s\}$ is removed from working tree $T$ which remains  a tree.
  \item $s$ becomes  a leaf  of $T$.
 \end{itemize}
\medskip
 \item {\em Closed commit:} of $P_\mathrm{new}  \subseteq V$  to {\bf existing}  $s \in \Sout$ which is a leaf of $T$
 	\begin{itemize} 
                 \item If $\Pnew \not=\emptyset$, it contains unique  neighbor of $s$ in $V.$
 		\item  $P_\mathrm{new}$ is merged into $P_s$  and $s$ will be closed;\\ no new blocks will henceforth be added to $P_s.$  
 		\item $\Pnew \cup \{s\}$ is removed from  $T$, which will remain a tree.
 		\end{itemize}
 \end{itemize}
 
Algorithm~\ref{alg:CommitBlock} encapsulates the above.

Later subsections will define the  Peaking  (Section \ref {subsec: Subroutine: Peaking Criterion})  and Reaching  (Section \ref{subsec: Subroutine: Dual Peaking Criterion})  subroutines that, respectively, implement {\em Open} and {\em  Closed} commits.

Set $S = \Sout \cap V$ to be the current sinks in the working tree $T=(V,E)$.
By construction, the sinks in  $S$ will all  be leaves of $T.$

\begin{algorithm}[h]
	\begin{algorithmic}[1]
		\State Given $\Pout, \Sout$
		
		\Procedure{Commit}{$P_\mathrm{new} \subseteq V_{\mathrm{in}},s$}
		\If {$s \in \Sout$}
		 \State{$P_s := \Pnew \cup P_s$}
		 \State {Remove $\Pnew \setminus\{s\}$ from Working Tree $T.$}
		\Comment {Closed Commit}
		\Else
		\State{$\Sout := \Sout \cup \{s\}$}
		\Comment{Open Commit}
		\State {Create $P_s := \Pnew$ and add $P_s$ to $\Pout$}
		\State{Remove $\Pnew \cup \{s\}$ from Working Tree $T$.}
		\EndIf
		\EndProcedure

	\end{algorithmic}
	\caption{Committing block $\Pnew$ to $s$}
	\label{alg:CommitBlock}
\end{algorithm}

The  final algorithm will maintain {\em optimality} of  $(\Sout,$  $\mathcal{P}_{\mathrm{out}})$. Informally this means that $(\Sout,$  $\mathcal{P}_{\mathrm{out}})$ can be completed to an optimal  $(S^*,\mathcal{P}^*)$. Formally  

\begin{definition}
\label{def:ropt}
A partial sink configuration $(\Sout,\Pout)$  is {\em optimal} relative to Working Tree $T=(V,E)$  
if 
\begin{enumerate}[label=(\textrm{C}\arabic{*})]
 \item There exists some optimal feasible sink configuration   $(S^*,\mathcal{P}^*)$ satisfying:
\item  $\Sout \subseteq S^*$ and
	\begin{enumerate}
	\item  $S^*\setminus V = \Sout \setminus V$
	\item $S = \Sout \cap V$ are leaves of $T$
	\item $S^* \setminus\Sout \subseteq V$  \ {\small \em (follows from (a) and (b))}
	\end{enumerate}
 \item  Let $P^*_s$ be the partition block in $\mathcal{P}^*$ associated with $s \in S^*.$  
\begin{enumerate}
 	\item If $s \in S^*\setminus V$  then $P_s = P^*_s$ and $P^*_s \cap V = \emptyset$
      \item    If $s \in S$, then  $P_s \subseteq P^*_s$  and $P^*_s \setminus P_s \subseteq V$
      \item If $s \in S^* \setminus\Sout$ then $P^*_s \subseteq V.$
      \end{enumerate}
\end{enumerate}
\end{definition}

\medskip
Finally, suppose that $(\Sout,\Pout)$ is an optimal partial  sink configuration relative to $T.$ 
 Let $k^* = |S^*|$ and  $j = |\Sout|$. For later use we note that from  (C2) 
 the nodes in $S^*$ can be ordered as follows
 \begin{equation}
 \label{eq:S*order}
 S^*= \{    \overbrace{ s_1,\,\ldots,\,s_i}^{\Sout\setminus V},\,   \overbrace{s_{i+1},\,\ldots,\,s_j}^{S= \Sout\cap V},\, \overbrace{s_{j+1},\,\ldots,\,s_{k^*}}^{S^*\setminus\Sout \subseteq V}\}.
 \end{equation}

The intuition is that a closed  commit will move a  sink from  $S$    to $\Sout\setminus V$ and an open commit will move a sink from  $S^*\setminus \Sout$  to $S$. 
 $(\Sout,\Pout)$.

The definitions below are both relative to the current $(\Sout,\Pout)$ and $T$.
\begin{definition}[Self-sufficiency] Fig.~\ref {fig:SS}.\\
\label{def:self-sufficiency}
A subtree $T' = (V',E')$ of $T$  is {\em self-sufficient} if $V'$ can be served by $S'=\Sout \cap V'$. 

A {\em partition of $T'$ induced by its self-sufficiency} is a partition of $V'$ into  blocks $P_s$,  $s \in S'$,  such that $s \in P_s$,  $P_s$ is a subtree of $T'$   and   $f(P_s,s) \leq \mathcal{T}.$
\end{definition}

\begin{figure}[t]
	\centering
	\includegraphics[width=0.5\textwidth]{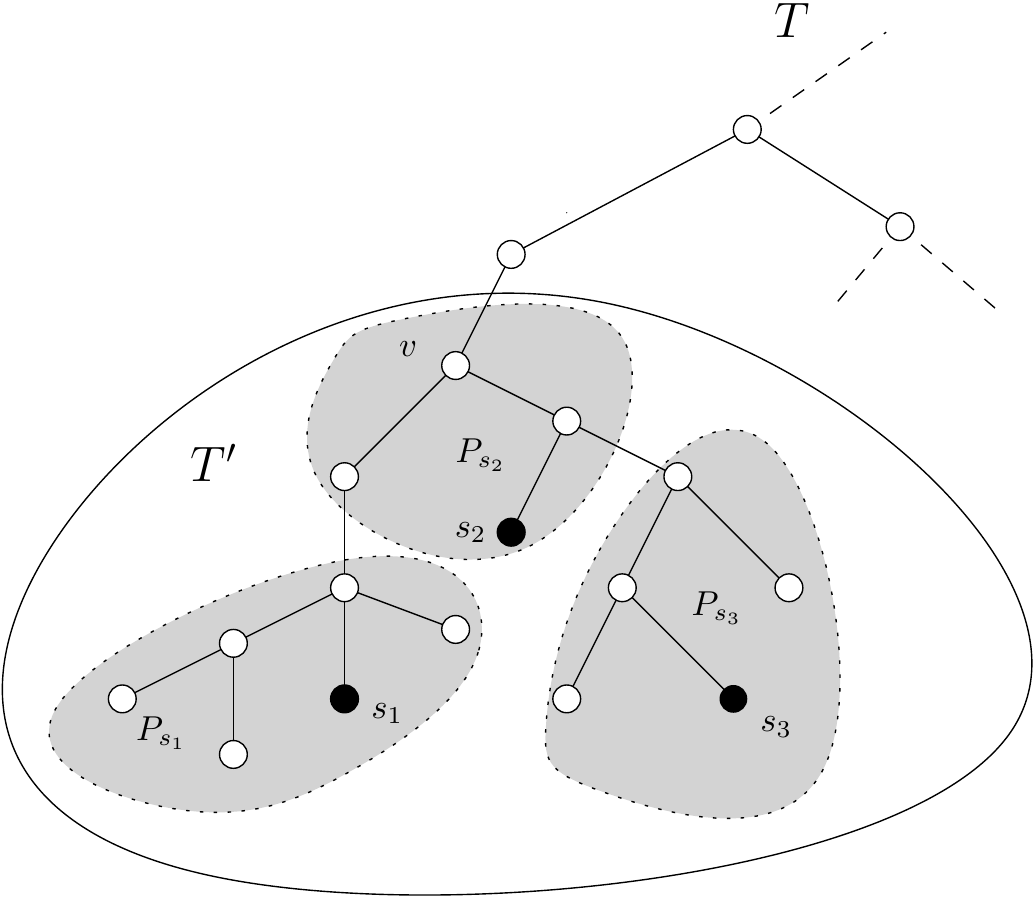}
	\caption{Self-Sufficiency. $T'=(V',E')$ is the tree ``below'' $v.$  It contains sinks  
	$S'=\Sout \cap V =\{s_1,s_2,s_3\}$. 
If  for $i=1,2,3,$ $f(P_i,s_i) \le \mathcal{T},$ 
 then $T'$ is self-sufficient and $P_1,P_2,P_3$ is a partition of $T'$ induced by its self-sufficiency.}
\label{fig:SS}
\end{figure}

\begin{definition}[$T_{-v}(u)$]   Fig.~\ref{fig:Peaking}.\ \\
Let $v \in V$ be an internal node of tree   $T = (V,E)$  and $u \in V$ be  a neighbor of $v.$
Removing $v$ from $T$ creates a forest $\mathcal{F}_{-v}$ of disjoint subtrees of $T$.

$T_{-v}(u) = (V_{-v}(u), E_{-v}(u))$ denotes the unique subtree  $T' = (V',E') \in \mathcal{F}_{-v}$  such that $u \in V'$.

\end{definition}

The removal of edge $(u,v)$ splits $T$  into $T_{-v}(u)$  and $T_{-u}(v).$
The blocks greedily committed by the algorithm will all be  self-sufficient subtrees in these forms. 

\subsection{Subroutine: Peaking Criterion}
\label{subsec: Subroutine: Peaking Criterion}

\begin{figure}[t]
	\centering
	\includegraphics[width=0.7\textwidth]{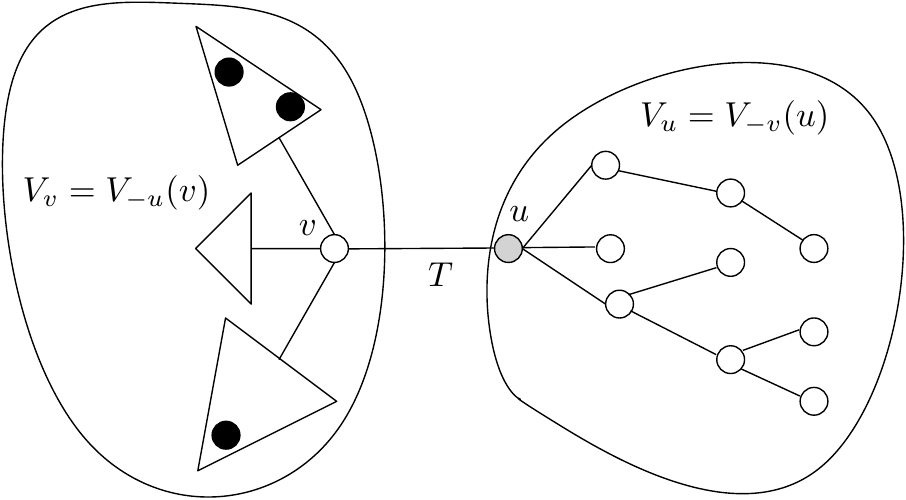}
	\caption{Peaking criterion. Note that $V$ is partitioned  into  $V_v= V_{-v}(u)$  and $V_u=V_{-v}(u)$.  $V_u$ originally contains no sinks while  $V_v$ does (the black nodes).  If $f(V_u,u) \leq \mathcal{T}$, then $u$ can serve $V_u$, so no sink is needed  below $u$;  if $f(V_u\cup \{ v \},v) > \mathcal{T}$, then no node in $V_v$   can singlehandedly support $V_u$. This pinpoints the position of exactly one sink to be placed at $u$.}
\label{fig:Peaking}
\end{figure}

 The definition and lemmas below  will justify a mechanism for  greedily performing open commits.  $T=(V,E)$ will always be the current working tree, $(\Sout,\Pout)$ will always be an optimal partial sink configuration relative to $T$ and $S = \Sout \cap V.$

\begin{definition}[Peaking criterion] 
The ordered pair of points $(u,v) \in V \times V$ satisfies the {\em peaking criterion} (abbreviated {\em PC}) if and only if (Fig.~\ref{fig:Peaking})
 \begin{itemize}
 \item  $(u,v) \in E$,
 \item $T_{-v}(u)$ contains  no sink in $S$, and 
 \item $f\left(V_{-v}(u), u\right) \leq \mathcal{T}$ but $f(V_{-v}(u) \cup \{ v \} , v) > \mathcal{T}$.
 \end{itemize}
\end{definition}

\begin{lemma}[Peaking Lemma]
 Let $(u,v)$ satisfy the peaking criterion. 
Then adding $u$ to $\Sout$ and committing $V_{-v}(u)$ to sink $u$ using Algorithm \ref{alg:CommitBlock}  maintains $(\Sout,\Pout)$ as an optimal partial sink configuration.
 \label{lemma:PeakingCriterionPutSink}
\end{lemma}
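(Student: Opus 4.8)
The plan is to start from the optimal feasible configuration $(S^*,\mathcal{P}^*)$ guaranteed by property (C1) for the current $(\Sout,\Pout)$ and $T$, and then surgically modify it into another optimal feasible configuration $(\hat S,\hat{\mathcal P})$ that (i) contains $u$ as a sink, (ii) has $V_{-v}(u)$ exactly as the block of $u$, and (iii) still extends $\Sout \cup \{u\}$ in the sense of Definition \ref{def:ropt} relative to the new working tree $T' = T_{-u}(v)$ with $u$ attached as a leaf. Since $(S^*,\mathcal{P}^*)$ is optimal, any configuration we build that is both feasible and has $|\hat S| = |S^*| = k^*$ is also optimal, so the real work is producing such a $\hat S$ with the right structural containment.

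First I would locate, in $S^*$, a sink that "covers" the subtree $W := V_{-v}(u)$. Because $W$ contains no sink of $S = \Sout \cap V$ (peaking criterion, second bullet), and because $\Sout \subseteq S^*$ with $S^*\setminus\Sout \subseteq V$ by (C2), every sink of $S^*$ lying in $W$ is one of the "free" sinks $s_{j+1},\dots,s_{k^*}$. Two cases: either some sink $s^* \in S^*$ lies in $W$, or none does. If none does, then the block structure of $\mathcal{P}^*$ forces all of $W$ to be served from outside; in particular the path from $W$ through $v$ must lie in a single block $P^*_{s^*}$ with $s^*$ on the $v$-side, so $W \cup \{v\} \subseteq P^*_{s^*}$, whence by set monotonicity and path monotonicity $f(W\cup\{v\}, v) \le f(P^*_{s^*}, s^*) \le \mathcal{T}$ — contradicting the third bullet of the peaking criterion, $f(W\cup\{v\},v) > \mathcal{T}$. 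Hence some $s^* \in S^* \cap W$ exists; pick one (if several, pick arbitrarily, but I expect exactly one matters). Its block $P^*_{s^*}$ is contained in $V$ by (C3).

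Now I would perform the swap: delete $s^*$ from $S^*$ and insert $u$; reassign the block of $u$ to be exactly $W$; and redistribute whatever of $P^*_{s^*}$ and the other blocks intersecting $W$ got displaced. The key feasibility check is that $u$ serving $W$ costs at most $\mathcal{T}$, which is precisely the first half of the third bullet, $f(W,u)\le\mathcal{T}$. For the nodes of $W$ that in $\mathcal{P}^*$ were assigned to sinks outside $W$ (if any) — they now get absorbed into $W$'s block under $u$, and that is fine by set monotonicity once we know $f(W,u)\le\mathcal T$; conversely the nodes outside $W$ that were assigned to $s^*$ form a subtree that must be re-attached to neighbouring blocks, and here I would argue those nodes were already on the far side of $v$ and can be folded into the block containing $v$ without increasing its cost, again by set monotonicity — or, more cleanly, I would argue that because $W$ is a full slice hanging off the edge $(u,v)$, the block $P^*_{s^*}\subseteq W$ entirely (a block is a subtree, $s^*\in W$, and any extension past $v$ would reintroduce the $f(W\cup\{v\},\cdot)>\mathcal{T}$ obstruction), so in fact no redistribution outside $W$ is needed at all. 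This last observation is the cleanest route and I would try to establish it first. Then $\hat{\mathcal P}$ is: the block of $u$ is $W$; for every other sink $t \in S^*\setminus\{s^*\}$ its block is $P^*_t \setminus W$ if $t$'s block met $W$, else $P^*_t$ unchanged. One checks each remaining block is still a subtree containing its sink with cost non-increased (set monotonicity, since we only removed nodes), and that the blocks still tile $\Vin$. Finally $|\hat S| = |S^*|$, so $(\hat S,\hat{\mathcal P})$ is optimal feasible, and a routine verification against Definition \ref{def:ropt} — using that $W\setminus\{u\}$ and $u$ leave $T$ exactly as Algorithm \ref{alg:CommitBlock} prescribes for an open commit, and that $u$ becomes a leaf of the new working tree — shows $(\Sout\cup\{u\}, \Pout\cup\{W\})$ is optimal relative to the updated $T$.

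The main obstacle I anticipate is the second case bookkeeping: cleanly ruling out that a block of $\mathcal{P}^*$ straddles the edge $(u,v)$, and handling the possibility that several sinks of $S^*$ sit inside $W$ (in which case removing just one $s^*$ does not make room for $u$ "for free" — the extra sinks in $W$ would make $|\hat S| > k^*$ unless we also argue they can be dropped). I suspect the resolution is that if two or more sinks of $S^*$ lie in $W$ then $W$ is over-covered and one of them is redundant given $f(W,u)\le\mathcal T$; formalizing "redundant" requires care but should follow because $\{W\}$ alone is a feasible partition of $W$ using the single sink $u$, so we may replace all of $S^*\cap W$ by the single sink $u$, strictly decreasing cardinality unless $|S^*\cap W|=1$ — and strict decrease would contradict optimality of $(S^*,\mathcal{P}^*)$, forcing $|S^*\cap W|=1$ and closing the argument. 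That contradiction-via-cardinality step is where I would be most careful.
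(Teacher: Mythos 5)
Your overall strategy---surgically modifying $(S^*,\mathcal{P}^*)$ into a new optimal witness containing $u$ as a sink---is the paper's strategy, and your argument that $S^*$ must contain a sink inside $W=V_{-v}(u)$ (a block serving $W$ from the far side of $v$ would force $f(W\cup\{v\},v)\le\mathcal{T}$ by set and path monotonicity) matches the paper's properties P1--P5. The genuine gap is in the step you yourself flag as the crux: a block $P^*_{s^*}$ of a sink $s^*\in S^*\cap W$ \emph{can} straddle the edge $(u,v)$ and extend into $V_{-u}(v)$. Your ``cleanest route'' rules this out on the grounds that ``any extension past $v$ would reintroduce the $f(W\cup\{v\},\cdot)>\mathcal{T}$ obstruction,'' but the peaking criterion bounds the cost of $W\cup\{v\}$ \emph{with center $v$}, whereas the block's feasibility is measured from $s^*\in W$, and $P^*_{s^*}$ need not contain all of $W$ anyway; nothing forbids $P^*_{s^*}=\{s^*,u,v,\dots\}$ reaching into $V_{-u}(v)$ while other sinks inside $W$ serve the rest of $W$. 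Your fallback---fold $P^*_{s^*}\cap V_{-u}(v)$ into ``the block containing $v$'' and invoke set monotonicity---uses set monotonicity in the wrong direction: it says costs do not decrease when nodes are added, so it cannot certify that the enlarged neighbouring block stays below $\mathcal{T}$.

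The paper resolves exactly this case by \emph{not} insisting that $u$'s block in the witness be exactly $W$: it sets $\bar P^*_u = V' = W\cup\bigl(P^*_{s(u)}\cap V_{-u}(v)\bigr)$, where $s(u)$ is the sink serving $u$ in $\mathcal{P}^*$, so the new sink $u$ absorbs the straddling piece. Feasibility of this follows from path monotonicity---every path from $s(u)\in W$ to a node of $V_{-u}(v)$ passes through $u$, so $f\bigl((P^*_{s(u)}\cap V_{-u}(v))\cup\{u\},\,u\bigr)\le f(P^*_{s(u)},s(u))\le\mathcal{T}$---combined with max composition at $u$. This is consistent with condition (C3)(b) of Definition \ref{def:ropt}, which only requires $P_u\subseteq P^*_u$, not equality. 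Finally, your worry about several sinks of $S^*$ lying in $W$ needs no cardinality contradiction: the paper simply replaces \emph{all} of $(S^*\setminus\Sout)\cap W$ by the single sink $u$ with block $V'$ and observes $|\bar S^*|\le|S^*|$, which already preserves optimality.
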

{\em Note:  This is exactly an ``open commit'' as defined in Section \ref{subsection: greedy construction} (Fig.~\ref{fig:commits}(b).}

\begin{proof}
Let $(S^*, {\mathcal P}^*)$ be the feasible sink configuration given by (C1)-(C3). Set $k^* = |S^*|$, $j = |\Sout|$.  Recall from (\ref{eq:S*order}) that $S^*$ can written as 
$$ S^* = \{    \overbrace{ s_1,\,\ldots,\,s_i}^{\Sout\setminus V},\,   \overbrace{s_{i+1},\,\ldots,\,s_j}^{S= \Sout\cap V},\, \overbrace{s_{j+1},\,\ldots,\,s_{k^*}}^{ S^*\setminus\Sout \subseteq V}\}.$$

 
 Recall  too (Fig. \ref{fig:Peaking}) that $V$ can be partitioned into 	$V_u =V_{-v}(u)$ and $V_v = V_{-u}(v).$  For all $u'\in V$, let $s(u') \in S^*$ denote the unique sink  such that 
 $u' \in P^*_{s(u')}.$ From (C3),  $\forall u' \in V$,  $s(u')  \in V$ and  $\Pi(u',s(u'))$  lies in $T$ as well.
 
 Note the following properties with their  justifications
 \begin{itemize}
 \item[P1] If $u' \in V_u$ and $s(u') \in V_v$, then $s(u') = s(u) = s(v).$\\
 Because the path  $\Pi(s(u'),u')$  passes through $v$ and then $u.$
  \item[P2] If $v' \in V_v$ and $s(v') \in V_u$, then $s(v') = s(v) = s(u).$\\
 Because the path $\Pi(s(v'),v')$ passes through $u$ and then $v.$
 \item[P3] No $s \in S^* \cap V_v$ can support $V_u$.\\
 Otherwise,  from P2, $s= s(v).$  Path monotonicity then implies that $v$ can support $V_u$,  contradicting that  $(u,v)$ satisfies the  peaking criterion. 
 \item[P4] $S \cap V_u = \emptyset$.\\
 Follows directly from  $(u,v)$ satisfying the peaking criterion.
 \item[P5] $S^*$ must contain at least one sink   $s \in (\Sout\setminus S) \cap V_u.$\\
 Follows directly from P3  and P4 and fact that $\forall u' \in V$,  $s(u')  \in V$.
 \end{itemize}

%
%

From P5, $\left(S^*\setminus \Sout\right) \cap V_u \not=\emptyset.$  
Without loss of generality assume that $ \left(S^*\setminus \Sout\right) \cap V_u = \{s_{j+1}, \ldots, s_{r}\}$ and set 
$$ V'= V_u \cup \left(\bigcup_{\ell=j+1}^r P^*_{\ell}\right)
=  V_u  \cup V'' 
\quad\mbox{where}\quad 
V'' =  \left(\left(\bigcup_{\ell=j+1}^r P^*_{\ell}\right)\setminus V_u\right) \cup \{u\}.$$
Because $(u,v)$ satisfies the peaking criterion,  $f(V_u,u) \le \mathcal{T}$.   We claim that $f(V'' \cup \{u\}, u) \le  \mathcal{T}$  as well and thus $f(V', u) \le  \mathcal{T}$.
There are two possible cases.

\smallskip

\par\noindent\underline 
{Case (i): $s(u) \in V_v.$}\\
Suppose for some $v' \in V_v,$  $s(v') \in V_u.$  Then, from P2, $s(u) = s(v')$, contradicting the assumption.  Thus, 
 for all $j < \ell \le r$, $P^*_{s_\ell} \subseteq V_u$.
Then  $V'' = \{u\}$ so $V' = V_u$ and  $f(V',u) \le \mathcal{T}$

\medskip

\par\noindent\underline 
{Case (ii): $s(u) \in V_u.$}\\
Since $s(u) \not\in \Sout,$ WLOG assume $s(u) = s_{j+1}.$

From P2 and C3(c),  if $s \in V_u$ and $s \not=s_{j+1}$ then  $P^*_{s_{\ell}} \subseteq V_u.$
Thus
$$ V'' = \left(P^*_{s_{j+1}} \cap V_v\right) \cup \{u\}.$$
By construction,  $V''$ is a tree in which  $u$ is a leaf. 
From the fact that the path from $s_{j+1}$ to any node in $V_v$ passes through $u$ and path monotonicity,  
$$f(V'',u) \le f(V'' \cup \Pi(u,s_{j+1}),s_{j+1})  \le  f(P^*_{s_{j+1}},s_{j+1})\le \mathcal{T}.$$

   Thus by max-composition, again  
\begin{equation}
\label{eq:VpT}
f(V', u)  = \max\left( f(V_u,u),  f(V'', u)\right)  \le  \mathcal{T}.
\end{equation}

In both case (i) and (ii) perform an open $Commit(V_u,u)$,  removing  $V_u\setminus\{u\}$ from $T$.    Label  these new $\Sout$, $\Pout$, $T$ and $S$ as $\barSout$, $\barPout$, $\bar T= (\bar V, \bar E)$ and $\bar S.$

We can now see that  $\barSout$ and $\bar T$ maintain (C1)-(C3) with the  new configuration $(\bar S^*, \bar {\mathcal P}^*)$, where
$$ \bar S^* = \{    \overbrace{ s_1,\,\ldots,\,s_i}^{\barSout\setminus \bar V},\,   \overbrace{s_{i+1},\,\ldots,\,s_j,u}^{\bar S= \barSout\cap \bar V},\, \overbrace{s_{r+1},\,\ldots,\,s_t}^{ S^*\setminus\barSout \subseteq\bar  V}\},$$
i.e., $s_{j+1}, \ldots, s_{r}$ were removed and $u$  added and, for $s \in \bar S^*,$
$$
\bar P^*_s =
\left\{
\begin{array}{ll}
P^*_s\setminus V' & \mbox{if $s \not=u,$}\\
V'        & \mbox{if $s = u.$}\\
\end{array}
\right.
$$
By construction, $V'$ is a subtree and, if $s \not=u,$ then  $P^*_s\setminus V'$ is also a subtree.
Furthermore, also by construction,  $\forall s \in  \bar S^*,$   $f(P^*_s,s)\le \mathcal{T}.$ Thus, 
$(\bar S^*, \bar {\mathcal P}^*)$ is feasible.
 It is also  optimal because $( S^*,  {\mathcal P}^*)$ was optimal and  $|\bar S^*| \le | S^*|.$
$\bar S^*$ satisfies (C2) by construction and noting  that $u$ is now a sink of $\bar T.$ $(\bar S^*, \bar {\mathcal P}^*)$ satisfies (C3) by construction and noting that
$\bar P_u  = V_{-v}(u) \subseteq V' = \bar P^*_u.$

Thus $(\barSout,\barPout)$ is an optimal partial sink configuration relative to $\bar T.$
\qed

\end{proof}

 The algorithm will  keep attempting to add sinks by finding  edges that satisfy the peaking criterion. The next lemma, with its corollary,  exactly characterizes when no such edge exists and the process must stop.

\begin{lemma}
 Suppose  for some $u,v$, $f(V_{-v}(u) \cup \{v\}, v) > \mathcal{T}$, and $S \cap V_{-v}(u) = \emptyset$. Then there exists $u',v' \in V_{-v}(u) \cup \{v\}$ satisfying  the peaking criterion.
 \label{lemma:PCRecurseInto}
\end{lemma}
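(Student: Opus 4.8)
The plan is to localize the ``overload'' witnessed by $f(V_{-v}(u)\cup\{v\},v)>\mathcal{T}$ by descending into the subtree hanging off $u$, until it is concentrated on a single edge that satisfies the peaking criterion. The argument is a descent on $|V_{-v}(u)|$, and the only facts about $f$ it uses are that $f(\{w\},w)=0\le\mathcal{T}$ for every $w$ and the max tree composition rule (Fig.~\ref{fig:Max_Comp}); in particular, path monotonicity is not needed here.

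First I would dispose of the terminating case. We are given an edge $(u,v)$ of $T$ with $S\cap V_{-v}(u)=\emptyset$ and $f(V_{-v}(u)\cup\{v\},v)>\mathcal{T}$. If in addition $f(V_{-v}(u),u)\le\mathcal{T}$, then all three conditions in the definition of the peaking criterion hold for $(u,v)$ and we are done. So assume $f(V_{-v}(u),u)>\mathcal{T}$. Since $f(\{u\},u)=0\le\mathcal{T}$, this forces $V_{-v}(u)\neq\{u\}$, so $u$ has at least one neighbour in the subtree $T_{-v}(u)$. Now $V_{-v}(u)$ is a subtree of $\Tin$ containing $u$, and deleting $u$ from $T_{-v}(u)$ produces exactly the components $T_{-u}(w)$ as $w$ ranges over the neighbours of $u$ in $T_{-v}(u)$; hence the max tree composition rule gives $f(V_{-v}(u),u)=\max_w f(V_{-u}(w)\cup\{u\},u)$, and so some neighbour $u_1$ of $u$ satisfies $f(V_{-u}(u_1)\cup\{u\},u)>\mathcal{T}$.

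Next I would observe that the pair $(u_1,u)$ again satisfies the hypotheses of the lemma: $(u_1,u)$ is an edge of $T$; $S\cap V_{-u}(u_1)=\emptyset$ because $V_{-u}(u_1)\subseteq V_{-v}(u)$; and $f(V_{-u}(u_1)\cup\{u\},u)>\mathcal{T}$. Moreover $|V_{-u}(u_1)|<|V_{-v}(u)|$ since $u\in V_{-v}(u)\setminus V_{-u}(u_1)$, and $V_{-u}(u_1)\cup\{u\}\subseteq V_{-v}(u)\cup\{v\}$. Iterating this replacement produces pairs $(u^{(0)},v^{(0)})=(u,v)$, $(u^{(1)},v^{(1)})=(u_1,u)$, $\ldots$, each satisfying the hypotheses of the lemma, with $|V_{-v^{(j)}}(u^{(j)})|$ strictly decreasing and with $V_{-v^{(j)}}(u^{(j)})\cup\{v^{(j)}\}\subseteq V_{-v}(u)\cup\{v\}$ throughout. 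Since $|V_{-v^{(j)}}(u^{(j)})|$ is a positive integer, the iteration halts, and by the case analysis above it can halt only at a step where $f(V_{-v^{(j)}}(u^{(j)}),u^{(j)})\le\mathcal{T}$, i.e. at a pair $(u^{(j)},v^{(j)})\in(V_{-v}(u)\cup\{v\})\times(V_{-v}(u)\cup\{v\})$ satisfying the peaking criterion. This proves the lemma.

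The argument is essentially mechanical; the only point demanding care is the bookkeeping in the descent — verifying that max tree composition is legitimately invoked (that $V_{-v^{(j)}}(u^{(j)})$ really is a subtree of $\Tin$ containing $u^{(j)}$, and that deleting $u^{(j)}$ from it yields precisely the subtrees $T_{-u^{(j)}}(w)$ for $w$ its neighbours), and checking that both the ``contains no sink of $S$'' property and the containment inside $V_{-v}(u)\cup\{v\}$ are inherited at every step. I do not expect any genuine obstacle beyond this.
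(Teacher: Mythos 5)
Your proof is correct and follows essentially the same route as the paper's: both locate, via max tree composition, a neighbour $u_1$ of $u$ for which the overload persists and then descend into strictly smaller subtrees. The paper phrases the termination as a contradiction (an impossible infinite sequence of distinct nodes $\eta_0,\eta_1,\eta_2,\ldots$), whereas you phrase it as a well-founded descent on $|V_{-v}(u)|$; these are the same argument.
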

\begin{proof}
	Assume by contradiction that no such pair $u',v'$  exists.
In particular, this requires  that $(u,v)$ doesn't satisfy the peaking criterion so 
	$f(V_{-v}(u), u) > \mathcal{T}$. This implies $|V_{-v}(u)| \geq 2$, because otherwise   
	$$f(V_{-v}(u),\{u\}) = f(\{u\},u) = 0\leq \mathcal{T}.$$
	 Max composition implies there exists some neighbor of $u$, $\eta_2 \in V_{-v}(u),$ such that $f(V_{-u}(\eta_2) \cup \{u\}, u) > \mathcal{T}$. Set $\eta_0 = v, \eta_1=u$ 
	
	Applying  the above argument repeatedly with
	$\eta_{i}, \eta_{i-1}$ in the place of  $u,v$  will generate an infinite sequence of distinct nodes $\eta_0,\eta_1,\eta_2,...$ such that  $\eta_{i+1} \in V_{\eta_{i-1}}(\eta_i)$ but $f(V_{-\eta_i}(\eta_{i+1}) \cup \{\eta_i\}, \eta_i) > \mathcal{T}$, which is impossible. \qed
\end{proof}

\begin{corollary}
\label{corollary:PCStoppingCondition}
  \label{corollary:PCStoppingCondition2}
 $S = \Sout \cap V.$   If no pair $(u,v)$ in $V$ satisfy the peaking criterion, then exactly one of the following three situations occurs:
 \begin{enumerate}
 \item {\bf $\bf S = \emptyset$:}\\
 Then   for all $s \in V$, $f(V,s) \leq \mathcal{T}$.
  Furthermore,   for all $s \in V,$ $Commit(V,s)$  will create an optimal feasible sink configuration $(\barSout,\barPout).$ 
 \item {\bf  $\bf S=\{s\}$ for some $\bf s \in V$:}\\
 Then  $f(V,s) \le   \mathcal{T}$.
   Furthermore,    $Commit(V,s)$ will create an optimal feasible sink configuration $(\barSout,\barPout).$ 
\item  {\bf $\bf |S| = |V| =2$}\\
Let $V =S=\{s,s'\}.$  Then   $Commit(\{s\},s)$  followed by $Commit(\{s'\},s')$will create an optimal feasible sink configuration $(\barSout,\barPout).$
 \item {\bf $\bf |S| \ge 2$ and $|V| >2.$}
 \end{enumerate}
 
\end{corollary}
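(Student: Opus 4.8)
The plan is to first dispatch the easy combinatorics and then handle cases 1--3 (case 4 carries no assertion, so nothing is to be proved there). Since $S=\Sout\cap V\subseteq V$, the quantity $|S|$ is $0$, $1$, or at least $2$; and $|S|\ge 2$ forces $|V|\ge 2$, so splitting that last alternative on whether $|V|=2$ (which then forces $|S|=2$) or $|V|>2$ shows the four listed situations are exhaustive and mutually exclusive. Fix, for the current optimal partial configuration $(\Sout,\Pout)$, the optimal feasible configuration $(S^*,\mathcal P^*)$ guaranteed by (C1)--(C3), and recall that $k^*=|S^*|$ is, by definition of optimality, the minimum possible number of sinks in any feasible configuration; also recall from (C2)--(C3) that $\Sout\subseteq S^*$, that $S^*\setminus\Sout\subseteq V$, and that the blocks $P^*_s$ with $s\in\Sout\setminus V$ are exactly the $P_s$ and lie entirely outside $V$.

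Step one, for cases 1 and 2, is to prove $f(V,s)\le\mathcal T$ for the relevant sink(s). This is immediate if $|V|=1$ (then $V=\{s\}$ and $f(\{s\},s)=0$). Otherwise, assume $f(V,s)>\mathcal T$; max composition at $s$ (property 4) gives a slice $V_{-s}(u)$ with $f\bigl(V_{-s}(u)\cup\{s\},s\bigr)>\mathcal T$. In case 1 we have $S=\emptyset$, and in case 2 the only sink $s$ is not in $V_{-s}(u)$, so in both cases $S\cap V_{-s}(u)=\emptyset$, and Lemma~\ref{lemma:PCRecurseInto} (with $u,s$ playing the roles of its $u,v$) produces a peaking pair inside $V_{-s}(u)\cup\{s\}\subseteq V$ --- contradicting the hypothesis that no peaking pair exists in $V$. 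In case 1 this argument applies to every $s\in V$.

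Step two is optimality. The structural fact that unlocks everything is that each open sink $s\in S=\Sout\cap V$ satisfies $P_s\cap V=\{s\}$ (its block was frozen at its open commit and can only be enlarged by a closed commit, which closes $s$); together with (C3), this pins down how $\mathcal P^*$ splits across $V$ and $\Vin\setminus V$. In case 1, $\mathcal P^*$ must cover the nonempty $V$ while every $P^*_s$, $s\in\Sout$, avoids $V$, so $|S^*\setminus\Sout|\ge1$; on the other hand the family obtained from $\mathcal P^*$ by discarding all sinks of $S^*\setminus\Sout$ and their ($V$-contained) blocks and installing the single new sink $s$ with block $V$ is --- using $f(V,s)\le\mathcal T$ --- a feasible configuration with $|\Sout|+1$ sinks, so minimality of $k^*$ forces $|S^*\setminus\Sout|=1$; hence the open $Commit(V,s)$ empties the working tree and produces $\barSout$ with $|\barSout|=|\Sout|+1=k^*$ together with exactly that feasible partition, i.e. an optimal feasible configuration. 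In case 2 the closed $Commit(V,s)$ enlarges the block to $\bar P_s=P_s\cup V$, and max composition at $s$ with $f(P_s,s)\le\mathcal T$ and $f(V,s)\le\mathcal T$ gives $f(\bar P_s,s)\le\mathcal T$; replacing $P^*_s$ by $\bar P_s$ and dropping the sinks of $S^*\setminus\Sout$ (all inside $V$) yields a feasible configuration with only $|\Sout|$ sinks, so $S^*=\Sout$ by minimality, and the resulting $(\barSout,\barPout)$ is optimal. Case 3 is immediate: $V=S=\{s,s'\}\subseteq\Sout$ forces $S^*\setminus\Sout\subseteq V\setminus\Sout=\emptyset$, so $S^*=\Sout$, and then $P^*_s\setminus P_s\subseteq V$ with $s'\notin P^*_s$ gives $P^*_s=P_s$ and likewise $P^*_{s'}=P_{s'}$; the two closed commits add no nodes, so $(\barSout,\barPout)=(S^*,\mathcal P^*)$.

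The step I expect to be the real obstacle is the optimality bookkeeping in cases 1 and 2: one must write down an alternative feasible configuration whose sink count matches the greedy output, and verifying that the reconstructed collection of blocks really is a partition of $\Vin$ into pairwise-disjoint subtrees --- so that minimality of $k^*$ can be invoked --- is exactly where the invariants (C1)--(C3) and the identity $P_s\cap V=\{s\}$ for open sinks have to be used with care. By comparison the $f(V,s)\le\mathcal T$ steps are routine once Lemma~\ref{lemma:PCRecurseInto} and max composition are in hand.
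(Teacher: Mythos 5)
Your proof is correct and follows essentially the same route as the paper: Lemma~\ref{lemma:PCRecurseInto} plus max composition to get $f(V,s)\le\mathcal{T}$, then a sink-count comparison against the reference configuration $(S^*,\mathcal{P}^*)$ via (C1)--(C3) to certify optimality. You fill in some details the paper leaves implicit (notably the fact that $P_s\cap V=\{s\}$ for open sinks, which is what makes the max-composition step in case 2 legitimate, and the full argument for case 3, which the paper dismisses as obvious), but the underlying argument is the same.
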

\begin{proof}

Let $S^* \supseteq \Sout$ be the optimal  feasible sink placement defined by (C1)-(C3).

(1) If $S = \emptyset$ choose any $s \in V$. Let $r\in V$ be any neighbor of $s$.  
The assumption that no edge satisfies the peaking criterion combined with  Lemma \ref {lemma:PCRecurseInto} implies $f(V_{-s}(r) \cup \{s\}, s) \le \mathcal{T}$. 
Thus, by max composition, $f(V,s) \leq \mathcal{T}$.

(C2)  and  (C3) imply  that  $V$ must be serviced by a sink in  $S^*\setminus\Sout$  so  $|S^*| \ge |\Sout| +1.$   On the other hand (C3) also implies 
$\cup_{s' \in \Sout} P_{s'} = \Vin \setminus V$.

Thus performing  $Commit(V,s)$  creates a sink configuration that services all nodes in $\Vin$ and has size  
$|\Sout| + 1| \le |S^*|$ and is therefore an 
optimal feasible sink configuration.

\medskip

(2) If $S=\{s\}$, then $s$ is a leaf in $T.$  Let  $r\in V$ be the unique neighbor of $s$ in $V.$  From Lemma \ref {lemma:PCRecurseInto}, $f(V_{-s}(r) \cup \{r\}, r) \le \mathcal{T}$, since otherwise there exists a pair that satisfies the peaking criterion. Since  $V_{-s}(r) \cup \{s\}= V,$ 
$f(V,s) \leq \mathcal{T}$. 

(C2) and (C3) imply that for $s' \in S^* \setminus \{s\}$, $P_s = P^*_s$ and furthermore that $\cup_{s \in \Sout} P_s = \Vin \setminus V$.  Thus performing $Commit(V,s)$  will then set $\cup_{s \in \Sout}  \bar P_s=  (\Vin \setminus V) \cup V = \Vin$.  Since this did not change $\Sout$ and $|\Sout| \le |S^*|$,  $\Sout$ must therefore be optimal.

\medskip

(3) is obvious.

\medskip

(4) If none of (1)  (2)  or (3) occur, then $|S| \ge 2$ and $|V| >2.$
\qed
\end{proof}

Our algorithm will repeatedly  place sinks using the Peaking Lemma  until no edge satisfying the peaking condition can be found.  It maintains the invariant that  $(\Sout,\Pout)$ remains optimal relative to  working tree $T.$ 

 Corollary \ref{corollary:PCStoppingCondition}    implies  that if no edge satisfying the lemma can be found and    $|S| <  2$,  or
 $|S| = |V|=2$ 
 then  the resulting sink configuration constructed  is optimal.

\subsection{The Hub Tree}
\label{subsec: hub}
The peaking lemma will be the only method of  addding {\em new} sinks to $\Sout.$
If the peaking criterion no longer hold for any pair  $(u,v)$ but $\mathcal{P}_{\mathrm{out}}$   still does not contain a full partition, another  mechanism  be needed to perform closed commits of unserviced blocks to already existing sinks.
 
  Section \ref{subsec: Subroutine: Dual Peaking Criterion}  introduces the {\em Reaching Criterion}   for this.  
 It  will first require  defining  the {\em hub tree}.

\begin{definition}[Hubs] See  Fig.  \ref{fig:HubTree}.\\
\label{def:hub tree}
Let $T=(V,E)$ be the working tree and $S$ the set of sinks in  $T.$ Recall that nodes in $S$ are   leaves of $T.$ 
Assume $|S| \ge 2$, $|V| >2$  and that $T$ is rooted at some non-sink  $r$ such that  at least two of $r$'s  children are sinks  or have  sink descendants
 \begin{itemize}
 \item Let $H(S) \subseteq V$ be the set of lowest common ancestors of all pairs of sinks in $T$. The nodes in $H(S)$ are the {\em hubs} associated with $S$. 
 
 \item The {\em hub tree}  $T_{H(S)} = (V_{H(S)}, E_{H(S)})$ is the rooted subtree of $T$ that contains all vertices and edges contained in all of  the paths 
 $\Pi(s,r)$  where $s \in S$. 

\item For $u \in V_{H(S)},$  set $\bf T(u)=(V(u),E(u))$ to be the subtree of $T$ rooted (down) at $u$.
\end{itemize}
\end{definition}

\begin{definition}[Outstanding branches]
 A  node $w \in V_{H(S)}$ {\em branches out} to $\eta$ if $\eta$ is a neighbor of $w$ in $T$ that does not exist in $V_{H(S)}$. The subtree $T' := T_{-w}(\eta)$ is called an {\em outstanding branch}; we say that $T'$ is {\em attached} to $w$.
\end{definition}
From the definition of the hub tree,  outstanding branches contain no sinks.
\begin{figure}
	\centering
	\includegraphics[width=0.5\textwidth]{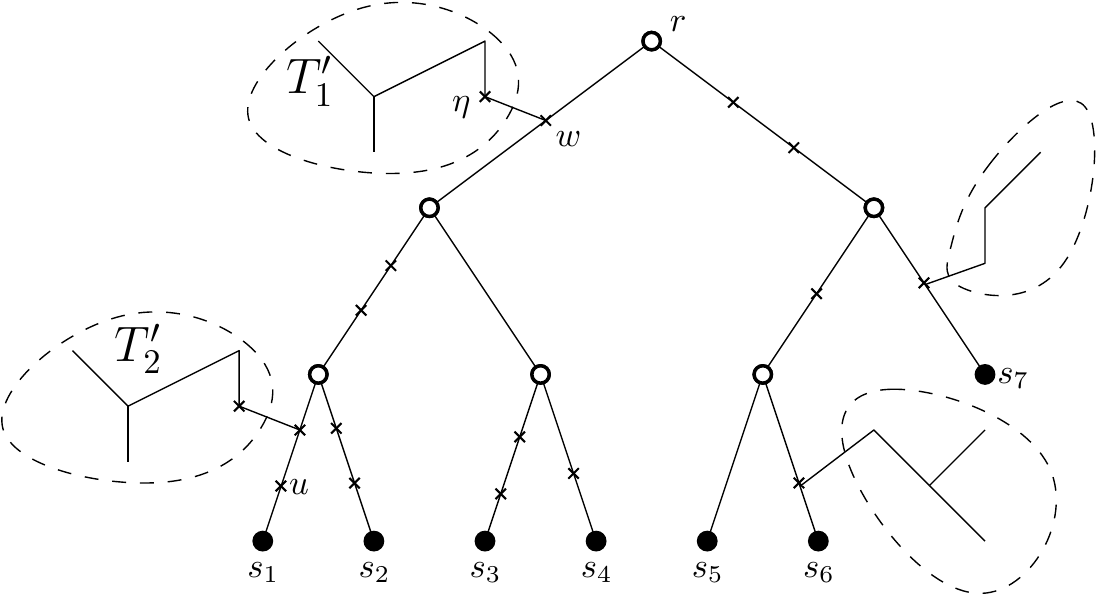}
	\caption{Visualization of hub tree $T_H(S)$ with root $r.$  Areas enclosed by dashed lines are outstanding branches (and are not in the hub tree),  filled  circles denote sinks, and unfilled circles denote  hubs. $BP(u,w)$ is the union of the path $\Pi(u,w)$ in the hub tree and the two outstanding branches $T'_1$ and $T'_2$ along with the edges connecting $T'_1,T'_2$ to the path. $T(w)$ is the tree rooted at $w$ and includes everything below it, including $T'_1,T'_2$ and sinks $s_1$-$s_4$.}
		\label{fig:HubTree}
\end{figure}

\begin{definition}[Bulk path]
 Given  $u,v \in V_{H(S)}$, the {\em bulk path} $\mathrm{BP}(u,v)$ is  the union of path $\Pi(u,v)$   with all the nodes in all outstanding branches that are attached to any node in $\Pi(u,v)$. $\mathrm{BP}(v,v)$ denotes the special case of the union of $v$ and of all the outstanding branches falling off of $v.$
\end{definition}

We can now describe what occurs when  the peaking criterion is inapplicable.

\begin{definition}[RC-viable]
\label{def:RCV}

$T$ is {\em RC-viable}  (with respect to $S$) if
for every $T' = (V',E')$ that  is an outstanding branch attached to $w \in V_{H(S)}$, $f(V' \cup \{ w \},w) \leq \mathcal{T}$.
\end{definition}

\begin{lemma}
If no ordered pair $(u,v) \in V \times V$ satisfies the peaking criterion then $T$ is {RC-viable} (with respect to $S$).
 \label{lemma:RCViability}
\end{lemma}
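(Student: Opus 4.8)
The plan is to argue by contradiction and reduce immediately to Lemma~\ref{lemma:PCRecurseInto}. Suppose $T$ is not RC-viable with respect to $S$. Then by Definition~\ref{def:RCV} there is an outstanding branch $T' = (V',E')$ attached to some hub-tree node $w \in V_{H(S)}$ with $f(V' \cup \{w\}, w) > \mathcal{T}$. By the definition of an outstanding branch, $T' = T_{-w}(\eta)$ for some neighbor $\eta$ of $w$ in $T$ that is not a vertex of the hub tree, so $V' = V_{-w}(\eta)$. Hence $f\bigl(V_{-w}(\eta) \cup \{w\}, w\bigr) > \mathcal{T}$.

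The next step is to check the sink-freeness hypothesis needed by Lemma~\ref{lemma:PCRecurseInto}: as noted right after the definition of outstanding branches, every outstanding branch contains no sinks, so $S \cap V_{-w}(\eta) = S \cap V' = \emptyset$.

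Now I apply Lemma~\ref{lemma:PCRecurseInto} with $\eta$ playing the role of $u$ and $w$ playing the role of $v$: since $f\bigl(V_{-w}(\eta) \cup \{w\}, w\bigr) > \mathcal{T}$ and $S \cap V_{-w}(\eta) = \emptyset$, there exist $u', v' \in V_{-w}(\eta) \cup \{w\} \subseteq V$ satisfying the peaking criterion. This contradicts the hypothesis that no ordered pair in $V \times V$ satisfies the peaking criterion, completing the proof.

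I do not expect any real obstacle here; the content has all been front-loaded into Lemma~\ref{lemma:PCRecurseInto} and the observation that outstanding branches are sink-free. The only things to be careful about are (i) correctly matching the variable roles when invoking Lemma~\ref{lemma:PCRecurseInto} (the branch hangs \emph{off} $w$, so the ``inner'' node is $\eta$ and the ``anchor'' node is $w$), and (ii) recording explicitly that $u',v'$ produced by that lemma lie in $V$, so that they are a legitimate witness against the stated hypothesis.
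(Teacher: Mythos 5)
Your proposal is correct and follows essentially the same route as the paper: both reduce the claim to Lemma~\ref{lemma:PCRecurseInto} applied to an outstanding branch $T_{-w}(\eta)$, using the fact that outstanding branches contain no sinks. The only cosmetic difference is that you phrase it as a global contradiction while the paper argues the contrapositive branch by branch; the content is identical.
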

\begin{proof}
	Let $T' := (V',E') = T_{-w}(\eta)$   be an arbitrary outstanding branch attached to some node $w \in V_{H(S)}$. If  $f(V' \cup \{ w \},w)  >  \mathcal{T}$ then from Lemma \ref{lemma:PCRecurseInto}, $T'$ would contain an ordered  pair $(u,v)$ satisfying the peaking criterion.  Since no such pair exists, $f(V' \cup \{ w \},w)  \le  \mathcal{T}$.  This is true for all outstanding branches and thus  $T$ is RC-viable.
	\qed
\end{proof}

\begin{lemma}
\label{lemma:RCV no branch sink}
 Let $(\Sout,\Pout)$  be optimal relative to working tree $T=(V,E)$ and    $T$ is  {\em  RC-viable.}  Then in  the optimal feasible sink configuration 
$(S^*,\mathcal{P}^*)$ referenced in Definition \ref {def:ropt} conditions (C1)-(C3), $S^*$   can be assumed not to contain any 
sink in an outstanding branch.
\end{lemma}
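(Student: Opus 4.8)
The plan is an exchange argument on an extremal choice of $(S^*,\mathcal P^*)$. The one structural fact needed is that no outstanding branch contains a sink of $S$: the hub tree contains every path $\Pi(s,r)$ with $s\in S$, hence contains all of $S$, while outstanding branches are by definition disjoint from the hub tree. Together with (C2) ($\,S^*\setminus V=\Sout\setminus V\,$) this forces any sink of $S^*$ lying in an outstanding branch to lie in $S^*\setminus\Sout$, so by (C3)(c) its block is contained in $V$. Among all optimal feasible sink configurations $(S^*,\mathcal P^*)$ satisfying (C1)--(C3) relative to $(\Sout,\Pout)$ and $T$, fix one minimizing the number $N$ of its sinks lying in outstanding branches; I will derive a contradiction from $N>0$.

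So assume some sink of $S^*$ lies in an outstanding branch attached to a hub $w\in V_{H(S)}$, and let $B$ be the union of $\{w\}$ with all outstanding branches attached to $w$. Note $B\subseteq V$ and that $B\setminus\{w\}$ meets the rest of $T$ only through $w$. RC-viability gives $f(V'\cup\{w\},w)\le\mathcal T$ for every outstanding branch $V'$ attached to $w$, so max-composition at $w$ gives $f(B,w)\le\mathcal T$. Build $(\bar S^*,\bar{\mathcal P}^*)$ by deleting from $S^*$ every sink lying in an outstanding branch attached to $w$, adding $w$ as a sink if it is not already one, and redistributing the old block $P^*_p$ of the sink $p$ that served $w$ as follows: the new block of $w$ absorbs all of $B$ together with every component of $P^*_p\setminus\{w\}$ other than the one containing $p$, while that $p$-component remains the block of $p$ (this is vacuous when $p$ was deleted, i.e.\ $p$ itself lay in an outstanding branch attached to $w$, since then $p$'s block is covered by $B$ and the absorbed components). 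Every other block is left untouched — legitimately, since a non-deleted sink $s'$ has $w\notin P^*_{s'}$, so $P^*_{s'}$ can never enter an outstanding branch attached to $w$ and is disjoint from $B$.

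It remains to verify three things. \emph{Cardinality:} at least one sink was deleted (namely the one inside a branch at $w$ that we started with) and at most one was created, so $|\bar S^*|\le|S^*|=k^*$ and $(\bar S^*,\bar{\mathcal P}^*)$ is still optimal and feasible in cardinality. \emph{Feasibility of the blocks:} the $p$-component that stays with $p$ only lost vertices, so it is still feasible by set monotonicity, and it stays connected because what was removed hangs off it only through $w$; for the new block of $w$, each slice towards an outstanding branch lies inside some $V'\cup\{w\}$ and costs $\le\mathcal T$ by RC-viability, and each remaining slice is a component $Q_i$ of $P^*_p\setminus\{w\}$ whose vertices were all served by $p$ along paths running through $w$, so path monotonicity gives $f(Q_i\cup\{w\},w)\le f\bigl(Q_i\cup\Pi(w,p),p\bigr)\le f(P^*_p,p)\le\mathcal T$; max-composition at $w$ then yields $f(\bar P^*_w,w)\le\mathcal T$. \emph{Properties (C1)--(C3):} $\Sout$ is untouched since every deleted sink lies in $S^*\setminus\Sout$, so $\Sout\subseteq\bar S^*$ and $\bar S^*\setminus V=\Sout\setminus V$; the sinks of $S$ are leaves of $T$ lying in the hub tree, hence outside $B$, so their committed blocks (which meet $V$ only in their own vertex) are untouched and the containments in (C3)(b) persist; and the only new sink $w$ lies in $V$ with $\bar P^*_w\subseteq V$ — the absorbed components $Q_i$ lie in $P^*_p\setminus P_p\subseteq V$, and when $p$ itself was deleted one has $p\in S^*\setminus\Sout$, hence $P^*_p\subseteq V$. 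Therefore $(\bar S^*,\bar{\mathcal P}^*)$ is a valid configuration with $N$ strictly smaller, contradicting minimality, so $N=0$.

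The step I expect to be the main obstacle is the feasibility bookkeeping of the redistribution: showing that carving $B$ out of $P^*_p$ still leaves a legal block and, above all, that the leftover slices $Q_i$ can be re-homed into the new block of $w$ without exceeding $\mathcal T$. This is the delicate case in which the hub $w$ was being served from inside one of its own outstanding branches, and it is exactly where RC-viability, path monotonicity, and the single-edge attachment of outstanding branches to the hub tree are all used simultaneously; everything else is routine.
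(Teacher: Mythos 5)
Your proof is correct and uses the same exchange the paper does: the paper's proof is a one-sentence assertion that RC-viability lets a branch sink be ``moved to $w$'' while preserving feasibility and (C1)--(C3), and your argument is exactly that move, carried out in full (extremal choice of $(S^*,\mathcal P^*)$, absorption of $B$ and the stray components of $P^*_{s(w)}$ into a new block at $w$, and the max-composition/path-monotonicity verification). The bookkeeping you supply --- in particular the case where $w$ is served from inside one of its own branches --- is precisely the detail the paper elides with ``this might require modifying $\mathcal P^*$''.
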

\begin{proof}
Suppose   $S^*$ did contain a sink $s$ located in an outstanding branch of $T$ attached to some  $w \in V_{H(S)}.$   Then RC-viability implies that $s$  could be moved to $w$, not increasing the size of $S^*$,  while maintaining the  feasibility of the sink configuration  (this might require modifying $\mathcal{P}^*$) and the validity of (C1)-(C3). \qed
\end{proof}

This last lemma permits assuming that  all sinks in  $S^*\setminus\Sout$  are in  $V_{H(S)}$.

\subsection{Subroutine: Reaching Criterion}
\label{subsec: Subroutine: Dual Peaking Criterion}

The definition and lemmas below  will justify a mechanism for  greedily performing closed  commits.

\begin{definition}
A   node $v \in V_{H(S)}$ \emph{can evacuate} to $s \in S$ if $f(\mathrm{BP}(v,s),s) \leq \mathcal{T}$.
\end{definition}

\begin{definition}[Reaching criterion]
 Let $T$ be RC-viable with respect to $S$ and  $(u,v) \in V_{H(S)} \times V_{H(S)}$ be an ordered pair of nodes.
 Then $(u,v)$ satisfies the {\em Reaching Criterion (RC)} if and only if (Fig.~\ref{fig:DualPeakingx})
\begin{itemize}
\item  $v$ is the parent of $u$ in $T_{H(S)}$,
\item  $T_{-v}(u)$ is self-sufficient and 
\item  $\mathrm{BP}(v,v) \cup V_{-v}(u)$ is not self-sufficient.
\end{itemize} 
\end{definition}

\begin{figure}[t]
	\centering
	\includegraphics[width=0.5\textwidth]{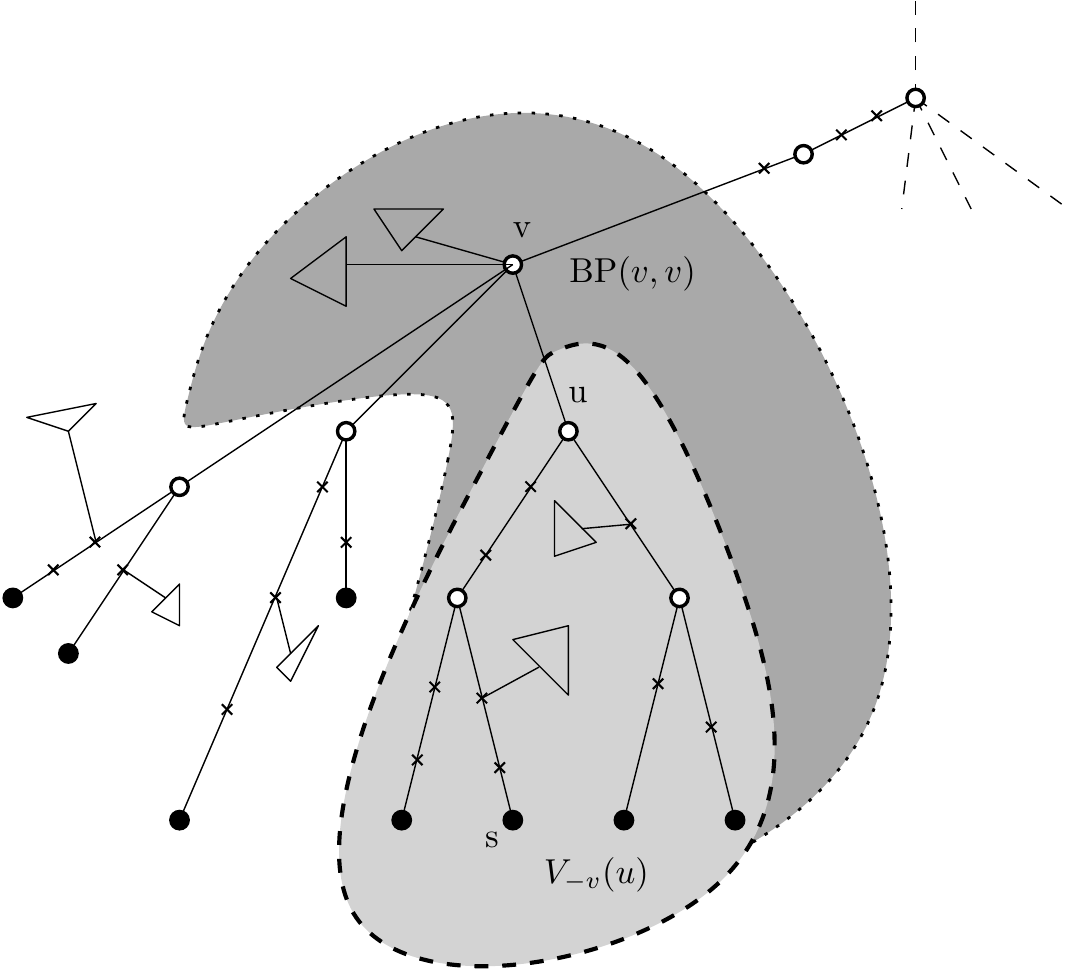}
	\label{fig:DualPeaking}
	\caption{Reaching criterion.  Filled circles are sinks, unfilled  circles are hubs, triangles are outstanding branches.  
$V_{-v}(u)$  is the light gray area;  $BP(v,v)$ is the dark gray area; it contains $v$ and the outstanding branches falling off of it.
If $T$ is  RC-viable,  sinks never need to  be placed in outstanding branches. If $T_{-v}(u)$ is self-sufficient, then  $T_{-v}(u)$ can be evacuated  to the sinks in $V_{-v}(u)$.   If $BP(v,v) \cup V_{-v}(u)$ is not self-sufficient, RC-viability implies that $v$  will not be served by  any sink $s \in V_{-v}(u)$ because assigning $v$ to sink $s \in T_{-v}(u)$ would  force all nodes in $BP(v,v)$ to also be assigned to $s$. This would be  infeasible unless a new sink was placed in $T_{-v}(u)$.  But this new sink should better  be placed at $v$, because $v$ could also serve any outstanding branch attached to it (due to RC-viability), obviating  the need to assign $v$ to $s$.}
	\label{fig:DualPeakingx}
\end{figure}

\begin{lemma}[Reaching Lemma]
Let $T$ be  RC-viable with respect to $S$ and  $(u,v)$  satisfies the reaching criterion.

 Partition $T_{-v}(u)$  into the subtrees  induced by by the corresponding sinks in $V_{-v}(u) \cap S$ as  implied by the self-sufficiency of $T_{-v}(u)$ (Definition \ref{def:self-sufficiency}). Using Algorithm \ref{alg:CommitBlock}  commit those subtrees to the blocks associated with those sinks. 
 
 This process   maintains $(\Sout,\Pout)$ as an optimal partial sink configuration.
\label{lemma:RCTrimTree}
\label{theorem:RCTrimTree}
\end{lemma}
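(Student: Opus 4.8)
The plan is to mirror the structure of the proof of the Peaking Lemma: start from the optimal feasible configuration $(S^*,\mathcal{P}^*)$ guaranteed by (C1)--(C3), use RC-viability (via Lemma~\ref{lemma:RCV no branch sink}) to assume all sinks of $S^*\setminus\Sout$ lie in $V_{H(S)}$, and then argue that $v$ is \emph{not} served by any sink inside $V_{-v}(u)$ in (a suitably modified) $\mathcal{P}^*$; once that is established, the commit of $T_{-v}(u)$ to its internal sinks is consistent with $\mathcal{P}^*$ and the resulting partial configuration stays optimal. First I would set up notation exactly as in (\ref{eq:S*order}): write $S^* = \{s_1,\ldots,s_i\}\cup S \cup \{s_{j+1},\ldots,s_{k^*}\}$ with the three groups being $\Sout\setminus V$, $S=\Sout\cap V$, and $S^*\setminus\Sout\subseteq V$, and for each $u'\in V$ let $s(u')\in S^*$ be the unique sink with $u'\in P^*_{s(u')}$; by (C3) the whole path $\Pi(u',s(u'))$ stays in $T$.

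The key claim is: \emph{no sink of $S^*$ lying in $V_{-v}(u)$ serves $v$ in $\mathcal{P}^*$}. Suppose some $s\in S^*\cap V_{-v}(u)$ had $v\in P^*_s$. Since $v$ is the parent of $u$ in the hub tree, the path $\Pi(s,v)$ passes through $u$, so by connectivity of $P^*_s$ (it is a subtree) the entire segment $\Pi(u,v)$, in particular $u$, is in $P^*_s$. Moreover, because the bulk path $\mathrm{BP}(v,v)$ consists of $v$ together with outstanding branches hanging off $v$, and those branches contain no sinks (they are outstanding), each node of $\mathrm{BP}(v,v)$ must be served by a sink whose connecting path runs through $v$ and hence through $u$ --- so those nodes are also in $P^*_s\subseteq V(u)\cup\{\text{descendants}\}$; in short $\mathrm{BP}(v,v)\cup V_{-v}(u)$ would be served using only sinks in $V_{-v}(u)$, i.e.\ it would be self-sufficient, contradicting the third bullet of the reaching criterion. (Here I'd use set monotonicity together with max-composition to cleanly assemble the pieces.) Therefore $v$, and hence every node of $\mathrm{BP}(v,v)$, is served by a sink outside $V_{-v}(u)$.

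With the claim in hand I would finish as in the Peaking Lemma. Because $T_{-v}(u)$ is self-sufficient, it admits a partition into blocks $\{P_s : s\in S\cap V_{-v}(u)\}$ with each $f(P_s,s)\le\mathcal{T}$. I would first argue $S^*$ contains no new sink ($S^*\setminus\Sout$) inside $V_{-v}(u)$ that is actually needed there in a way incompatible with the commit: either all of $V_{-v}(u)$ is already served by its internal $\Sout$-sinks in $\mathcal{P}^*$ (then modifying $\mathcal{P}^*$ to use the self-sufficiency partition on $V_{-v}(u)$ and reassigning any stray pieces upward only shrinks blocks, preserving feasibility and not increasing $|S^*|$), or some $s_\ell\in S^*\setminus\Sout$ lies in $V_{-v}(u)$, in which case --- since by the claim such a sink cannot be serving anything above $u$ --- its block $P^*_{s_\ell}$ is contained in $V_{-v}(u)$, and I can replace the portion of $\mathcal{P}^*$ inside $V_{-v}(u)$ by the self-sufficiency partition, which uses only $|\;S\cap V_{-v}(u)\;|$ sinks, all already in $\Sout$; this can only decrease $|S^*|$, so by optimality it in fact stays equal and no such $s_\ell$ was there. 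In either case we obtain a feasible optimal $(\bar S^*,\bar{\mathcal P}^*)$ whose restriction to $V_{-v}(u)$ is precisely the self-sufficiency partition. Performing the closed commits of Algorithm~\ref{alg:CommitBlock} --- each $P_s$ merged into its sink's block, $T_{-v}(u)$ removed from the working tree, $v$ becoming a leaf --- then updates $(\Sout,\Pout)$ to $(\barSout,\barPout)$ and $T$ to $\bar T$ so that (C1)--(C3) hold with witness $(\bar S^*,\bar{\mathcal P}^*)$: (C2) holds because the remaining sinks in $\bar V$ are still leaves and $v$ was not a sink; (C3) holds because each committed $P_s\subseteq P^*_s$ and the leftover $P^*_s\setminus P_s$ lies in $\bar V$.

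The main obstacle I anticipate is the bookkeeping for ``reassigning stray pieces upward'' when $\mathcal{P}^*$'s blocks inside $V_{-v}(u)$ do not already coincide with a self-sufficiency partition: I need to check that replacing them wholesale with the self-sufficiency partition, and handing any resulting fragments of old blocks that stuck out of $V_{-v}(u)$ to sinks outside (which exist and serve $v$ by the key claim), keeps every block a subtree with cost $\le\mathcal{T}$ --- this is where set monotonicity and path monotonicity, plus the fact that the ``stuck-out'' fragments are exactly the parts of $\mathrm{BP}(v,v)$ and its continuation that we already showed are served from outside, need to be combined carefully. Everything else is a direct adaptation of the Peaking Lemma argument.
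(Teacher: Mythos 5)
Your overall strategy matches the paper's, but the proof of your key claim has a genuine gap. You suppose some $s \in S^* \cap V_{-v}(u)$ serves $v$, deduce (correctly) that all of $\mathrm{BP}(v,v) \cup V_{-v}(u)$ is then served by sinks of $S^*$ lying in $V_{-v}(u)$, and conclude that this set is self-sufficient, contradicting the reaching criterion. But self-sufficiency (Definition~\ref{def:self-sufficiency}) means servability by $\Sout \cap V'$, i.e.\ by the already-committed sinks $S \cap V_{-v}(u)$, whereas your argument only establishes servability by $S^* \cap V_{-v}(u)$, which may contain sinks of $S^* \setminus \Sout$ located at hub-tree nodes of $V_{-v}(u)$. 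No contradiction follows: a legitimate optimal witness $\mathcal{P}^*$ can place a new sink inside $V_{-v}(u)$ whose block climbs through $u$ and $v$ and absorbs $\mathrm{BP}(v,v)$; the claim is simply false for an arbitrary witness and only holds after the witness is \emph{modified}. What non-self-sufficiency actually yields in this situation is that $S^* \cap V_{-v}(u)$ must contain some $s' \in S^* \setminus \Sout$; the missing step is to swap $s'$ out of $S^*$ and install $v$ as a sink in its place (assigning $V_{-v}(u)$ its self-sufficiency partition and giving $P^*_{s'} \setminus V_{-v}(u)$ to $v$), producing an equally small feasible witness in which $v$ is no longer served from inside $V_{-v}(u)$.

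Your later attempt to rule out new sinks inside $V_{-v}(u)$ cannot repair this, because it is circular: you invoke the key claim to argue that such a sink's block is confined to $V_{-v}(u)$ (hence removable), but the key claim in turn needs the absence of such sinks (or the swap) to go through. The two statements must be resolved simultaneously, and the resolution is the sink-relocation argument above, which your proposal does not contain. Once that is supplied, the remainder of your reallocation and the verification of (C1)--(C3) follow the paper's proof essentially verbatim.
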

{\em 
 Note:  The commits performed above   are  {\em closed commits} on leaf sinks in  $T_{-v}(u)$ as defined in Section \ref {subsection: greedy construction}.  Fig.~\ref{fig:commits}(c).}
 
\begin{proof}

As in the proof of Lemma \ref{lemma:PeakingCriterionPutSink} simplify by writing  	$V_u =V_{-v}(u)$ and $V_v = V_{-u}(v)$ and recalling  that $V = V_u \cup V_v.$  Furthermore,  for every node $u\ \in V$,  set $s(u) \in S^*$ to denote the unique sink  such that 
 $u \in P^*_{s(u)}.$

Let $(S^*, {\mathcal P}^*)$ be a feasible sink configuration given by (C1)-(C3), $k^* = |S^*|$ and $j = |\Sout|$.  Recall from (\ref{eq:S*order}) that $S^*$ can be written as 
$$ S^* = \{    \overbrace{ s_1,\,\ldots,\,s_i}^{\Sout\setminus V},\,   \overbrace{s_{i+1},\,\ldots,\,s_j}^{S= \Sout\cap V},\, \overbrace{s_{j+1},\,\ldots,\,s_{k^*}}^{S^*\setminus\Sout \subseteq V}\}.$$
Without loss of generality assume that  $S \cap V_u =\{s_{i+1},\,\ldots,\,s_r\}$ with $r \le j.$
For  $i <  \ell \le r$ let $\Pnewi {s_{\ell}}$ be the nodes in $V_u$ that evacuate to $s_{\ell}$ as implied by the self-sufficiency of $T_{-v}(u).$  By definition, 
$\cup_{i < \ell \le r}\left(\Pnewi {s_{\ell}}  \cup  \{s_\ell\}\right)= V_u$ and, 
$\forall \ell,$  $f(\Pnewi {s_{\ell}},s_{\ell}) \le \mathcal{T}$.

 We claim  that $(S^*, {\mathcal P}^*)$  satisfies the 3 properties below (or if it doesn't, it can be replaced by a  new $(S^*, {\mathcal P}^*)$ that does).

\medskip

\par\noindent\underline{Property 1:}   $s(v) \not \in V_u.$\\
Suppose  $s(v) \in V_u.$   The non self-sufficiency of  $BP(V,u) \cup V_{-v}(u)$   implies that $S^* \cap V_u$  must then include some sink $s' \not \in S$ and (C3)(c) implies that  $P_{s'} \subseteq V.$   In $S^*$, replace $s'$ with $v$ and modify $\mathcal{P}^*$ as follows:
$$
P^*_{s} :=
\left\{
\begin{array}{ll}
\Bigl(P^*_{s} \setminus V_u \Bigr) \cup \Pnewi {s} \cup \{s_\ell\} \ \ & \mbox{if  $s = s_\ell$ where $i < \ell \le r$}\\[0.05in]
P^*_{s'} \setminus V_u & \mbox{if $s=v$} \\[0.05in]
 P^*_{s}      & \mbox{Otherwise}\\
\end{array}
\right.
$$
By construction 
 this modified $(S^*,\mathcal{P}^*)$  is also an optimal feasible sink 
configuration, maintains (C1)-(C3) and has $s(v) =v \in V_v.$

\medskip
\par\noindent\underline{Property 2:} 
{\bf If $v' \in V_v$ then $s(v') \in V_v.$}\\
 If $s(v') \in V_v$, then 
$v \in \Pi(v',s(v'))$  so $s(v) = s(v') \in V_v$,  contradicting Property 1.

\medskip
\par\noindent\underline{Property 3:} {\bf If $u' \in V_u$ and   $s(u') \in V_v$ then $s(u') = s(v)$}.\\
 If $s(u') \in V_v$  then $v \in \Pi(u',s(u'))$   so $ s(u') =s(v)$.

 \medskip

We now prove the Lemma by creating a new  optimal feasible sink configuration $(\bar S^*, \bar {\mathcal P}^*)$  for which (C1)-(C3) will be correct.  First set $\bar S^* := S^*$. Then (possibly) reallocate the nodes in $V_u$ by creating a new 
$\bar {\mathcal P}^*$ as follows: 
$$
\bar P^*_{s_\ell} :=
\left\{
\begin{array}{ll}
\Bigl(P^*_{s_\ell} \setminus V_u \Bigr) \cup \Pnewi {s_{\ell}} \cup \{s_\ell\} \ \  & \mbox{if $i < \ell \le r$}\\[0.05in]
P^*_{s_\ell} \setminus V_u   & \mbox{if $s_\ell = s(v)$}\\[0.05in]
 P^*_{s_\ell}      & \mbox{Otherwise}\\
\end{array}
\right.
$$

From Property 1, if $s_\ell = s(v)$ then  $\ell  \not\in [i+1,r]$ so this formula is well-defined.  Also from Property 1,  $P^*_{s(v)}\setminus V_u$ is a tree (since the nodes in $V_u$ are removed from $P^*_{s(v)}$ by removing the single edge $(u,v)$).

Now consider   $i < \ell \le r.$  From Property 2,  $P^*_{s_\ell} $ contains no nodes in $V_v$  so $\Bigl(P^*_{s_\ell} \setminus V_u\Bigr)\cup\{s_\ell\}$ is a tree.  Since, by construction, each 
$\Pnewi {s_{\ell}} \cup \{s_\ell\}$ is a tree, each $\Bigl(P^*_{s_\ell} \setminus V_u \Bigr) \cup \Pnewi {s_{\ell}}\cup\{s_\ell\} $ is also a subtree.
  Thus, all  of the $\bar P^*_{s_\ell}$ are trees.

The only nodes that are reallocated in the move from  the  $P^*_{s_\ell}$ to the  $\bar P^*_{s_\ell}$ are nodes in  $V_u$. 
From Property 3 the only sinks that serve nodes in $V_u$ are the ones in $V_u$  and, possibly,  $s(v)$.  Since  $\cup_{i < \ell \le r} \Pnewi {s_{\ell}}= V_u \setminus S$
the $\bar P^*_{s_\ell}$ then form a legal partition.
Thus  $(S^*, \bar {\mathcal P}^*)$   forms an optimal feasible sink configuration and  $(\Sout,\Pout)$ is optimal relative to $(S^*, \bar {\mathcal P}^*)$ .  It is technically feasible  that for some $\ell$, $\bar P^*_{s_\ell} = \emptyset$ after the reallocation.  This  can not happen though because removing this $s_\ell$ from $\bar S^*$ would create a smaller feasible solution, contradicting the optimality of $S^*$.

Now, for every    $i < \ell \le r$ perform the closed $Commit(\Pnewi {s_{\ell}}, s_\ell)$ and  let
$(\barSout, \barPout) = (\Sout,\barPout)$ be the final resulting partial sink configuration.
Label the new  $T$ and $S$  as  $\bar T= (\bar V=V_v, \bar E)$ and $\bar S$.  
For completeness we note that 
 $S^*$  can now be appropriately partitioned as
$$  S^* = \{    \overbrace{ s_1,\,\ldots,\,s_i,s_{i+1},\,\ldots,\,s_r}^{\Sout\setminus \bar V},\,   \overbrace{s_{r+1},\,\ldots,\,s_j}^{\bar S= \Sout\cap \bar V},\, \overbrace{s_{j+1},\,\ldots,\,s_{k^*}}^{ S^*\setminus\Sout \subseteq\bar  V}\}.$$
To conclude, it follows  directly by the construction $(\barSout,\barPout)$ is optimal relative to $(S^*, \bar {\mathcal P}^*)$.
\qed
\end{proof}

It is important to note that after  $T_{-v}(u)$ is removed by the reaching criterion, the remaining tree $T$ might no longer be RC-viable. 
The peaking criterion would need to be checked again on $T,$ in order to reimpose  RC-viability.

\subsubsection{Testing for self-sufficiency.}
\label{subsec: ss testing}

 Self-sufficiency is expensive to test. The following specialization will be more efficient to use:

\begin{definition}[Recursive self-sufficiency] \   \\
 Let  $v \in V_{H}(S)$.  Recall that $T(v)=(V(v),E(v))$  is the subtree of $T$  rooted at $v.$
  $T(v)$  is {\em recursively self-sufficient} if for all $u \in V_{H(S)} \cap V(v)$, $T(u)$  is self-sufficient.
\end{definition}

Recursive self-sufficiency can be tested in a bottom-up manner.

\begin{lemma}Let  $v \in V_{H}(S)$ such that  $T(v)$ is a 
RC-viable rooted subtree of $T.$
\begin{enumerate}
\item 
Let $u \in V_{H(S)}$ be a child of $v$ such that \\ (i)   $T(u)=T_{-v}(u)$ is recursively self-sufficient, and \\ (ii) there is a sink $s \in S \cap V_{-v}(u)$ such that $v$ can evacuate to $s$.
 
 \medskip
 
 Then $BP(v,v) \cup T_{-v}(u)$ is recursively self-sufficient. 
 
 \bigskip
 
 \item Now suppose that in addition to the existence of  $u$ as in (1),\\ for every child $u'$ of $v$ in $V_{H(S)}$, $T(u')=T_{-v}(u')$ is recursively self-sufficient.\\ Then $T(v)$ is recursively self-sufficient.
 \end{enumerate}
 \label{lemma:RecursiveSS}
\end{lemma}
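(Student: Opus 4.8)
The plan is to prove both parts constructively, by assembling explicit partitions into blocks of cost at most $\mathcal{T}$ out of the self-sufficiency partitions guaranteed by the hypotheses, plus one extra block harvested from the ``$v$ can evacuate to $s$'' condition.

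For part (1), I would first observe that since $T(u)=T_{-v}(u)$ is \emph{recursively} self-sufficient, every hub $w\in V_{H(S)}\cap V_{-v}(u)$ already satisfies ``$T(w)$ self-sufficient'', so the only claim left for ``$\mathrm{BP}(v,v)\cup T_{-v}(u)$ recursively self-sufficient'' is that $\mathrm{BP}(v,v)\cup V_{-v}(u)$ can be served by $S\cap V_{-v}(u)$. Let $s^\ast\in S\cap V_{-v}(u)$ be the sink supplied by (ii), so $f(\mathrm{BP}(v,s^\ast),s^\ast)\le\mathcal{T}$. The structural step is this: because $v,s^\ast\in V_{H(S)}$ and the hub tree is a connected subtree of $T$, the downward path $v=x_0,\,x_1=u,\,\ldots,\,x_\ell=s^\ast$ stays inside $V_{H(S)}$ and its interior nodes $x_1,\ldots,x_{\ell-1}$ are non-sink hubs. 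I would then delete from $T_{-v}(u)$ the ``comb'' $C:=\mathrm{BP}(v,s^\ast)\cap V_{-v}(u)$, i.e.\ the spine $x_1,\ldots,x_\ell$ together with all outstanding branches hanging off $x_1,\ldots,x_{\ell-1}$, and note that every edge leaving $C$ into the rest of $T_{-v}(u)$ is a hub-tree edge from some $x_i$ to a hub-tree child $w\ne x_{i+1}$; hence each connected component of $T_{-v}(u)\setminus C$ is exactly a rooted subtree $T(w)$ with $w$ a hub in $V_{-v}(u)$, which by hypothesis (i) is self-sufficient. The partition of $\mathrm{BP}(v,v)\cup V_{-v}(u)$ is then: the block $\mathrm{BP}(v,s^\ast)$ assigned to $s^\ast$ (cost $\le\mathcal{T}$ by (ii)), plus the blocks of a partition induced by the self-sufficiency of each component $T(w)$. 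A short check confirms these are pairwise disjoint subtrees covering $\mathrm{BP}(v,v)\cup V_{-v}(u)$ and that the sinks they use are precisely $S\cap V_{-v}(u)$, since $s^\ast$ is the only sink inside $C$. This, together with the hubs handled by (i), gives recursive self-sufficiency of $\mathrm{BP}(v,v)\cup T_{-v}(u)$.

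For part (2), I would decompose $V(v)=\mathrm{BP}(v,v)\,\sqcup\,\bigsqcup_{u'}V_{-v}(u')$ over the children $u'$ of $v$ in $V_{H(S)}$ (outstanding branches off $v$ carry no sinks). For the distinguished child $u$, part (1) already supplies a partition of $\mathrm{BP}(v,v)\cup V_{-v}(u)$ served by $S\cap V_{-v}(u)$; for each other hub-child $u'$, recursive self-sufficiency of $T_{-v}(u')$ gives in particular a partition of $V_{-v}(u')$ served by $S\cap V_{-v}(u')$. Gluing all of these yields a partition of $V(v)$ served by $S\cap V(v)$, so $T(v)$ is self-sufficient; and since every hub strictly below $v$ lies inside some $V_{-v}(u')$ and already has a self-sufficient subtree, $T(v)$ is recursively self-sufficient.

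The step I expect to be the main obstacle is the structural claim in part (1) that deleting the bulk path $\mathrm{BP}(v,s^\ast)$ from $T_{-v}(u)$ leaves only components each equal to a rooted hub-subtree $T(w)$: this is exactly what allows hypothesis (i) --- recursive rather than merely plain self-sufficiency of $T(u)$ --- to be applied componentwise, and it rests on the hub tree being connected and on outstanding branches containing neither sinks nor hubs. Everything else is bookkeeping: verifying the glued families are legitimate partitions into subtrees of cost $\le\mathcal{T}$, and, if desired, ruling out empty blocks by the same argument used at the end of the Reaching Lemma's proof.
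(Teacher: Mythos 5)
Your proof is correct and follows essentially the same route as the paper's: reduce to plain self-sufficiency via the recursive hypothesis, remove the bulk path $\mathrm{BP}(v,s^\ast)$ and serve the resulting rooted forest componentwise, then glue in the remaining hub-children for part (2). The only quibble is terminological --- the component roots $w$ need only be hub-tree nodes in $V_{H(S)}$, not hubs --- but since recursive self-sufficiency quantifies over all of $V_{H(S)}\cap V(u)$ this changes nothing.
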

\begin{proof} 
$v$ is the only node in $V_{H(S)}$ that is in $BP(v,v) \cup T_{-v}(u)$ but not in $T_{-v}(u)$.  Thus, from the recursive self-sufficiency of $T_{-v}(u)$, to prove (1) it suffices to prove that $BP(v,v) \cup T_{-v}(u)$   itself is self-sufficient.
Recall  that $v$  being able to evacuate  to sink $s$ means that  $BP(v,s)$ is supported by $s.$
Consider the  remaining rooted graph induced by $V_{-v}(u) \backslash BP(v,s)$. This is a rooted forest.  By the recursive self-sufficiency of $T_{-v}(u),$  each rooted tree in this forest is self-sufficient.  (1) follows.

To prove (2), similarly note that since every child of $v$ in $T_{H(S)}$ is recursively self-sufficient every node $v' \in V_{H(S)} \cap V(v)$ {\em except for $v$} must satisfy that  $T(v')$ is self-sufficient.  It thus suffices to prove that $T(v)$ itself is self-sufficient.

From (1) it is  already know that $BP(v,v) \cup V_{-v}(u)$ is self-sufficient.  Note that removing $BP(v,v) \cup V_{-v}(u)$from $T(v)$ leaves a rooted forest in which the root of each forest is a child $u'$ of $v$ in $T(v)$.  Since each such tree $T(u')$ is given to be self-sufficient, all of $T(v)$ is self-sufficient.
 \qed
\end{proof}

If Lemma \ref{lemma:RecursiveSS}  (1) holds we  say that $s$ {\em  is a witness to Lemma \ref{lemma:RecursiveSS} for $T(v)$} and store  this witness at $v$. 
If we do this for every recursively self-sufficient subtree then, from the proof of Lemma \ref{lemma:RecursiveSS},  it is easy to retrieve  in $O(|V'|)$ time a partition $\mathcal{P}'$ of $T'$ that witnesses the self-sufficiency of $T'$.  See Algorithm \ref{alg:FindPartitionRecursiveSS}.

Recursive self-sufficiency will provide an efficient test for the reaching criterion via the following 
 immediate corollary to Lemma \ref {lemma:RecursiveSS}:
\begin{corollary}
\label{corollary: SStesting}
Let  $v \in V_{H}(S)$ such that $T(v)$ 
is a
RC-viable rooted subtree of $T.$  Let $u_i$, $i=1,\ldots,j$ be the children of $v$ in $T_{H(s)}$ and assume  that all the $T_{-v}(u_i)$ are recursively self-sufficient.  Then exactly one of the following two cases must occur
\begin{itemize}
\item[(i)]
$\exists i,$  such that  for all sinks $s\in S$ in $T_{-v}(u_i)$, 
$f(BP(v,s),s) >  \mathcal{T}.$\\
$\Rightarrow$  $(u_i,v)$ satisfies the reaching criterion.
\item[(ii)]
$\forall i,$  there exists $s\in S$ in $T_{-v}(u_i)$ such that $f(BP(v,s),s) \le  \mathcal{T}.$\\
 $\Rightarrow$ $T(v)$ is recursively self-sufficient.
\end{itemize}
\end{corollary}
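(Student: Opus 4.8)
The plan is to dispose of the ``exactly one'' claim by a pure logical observation and then prove the two implications separately, the second being the only place any real argument is needed. Cases (i) and (ii) are literal negations of one another: (i) is ``$\exists i$ such that $\forall s\in S\cap V_{-v}(u_i)$, $f(\mathrm{BP}(v,s),s)>\mathcal{T}$'', and (ii) is ``$\forall i$, $\exists s\in S\cap V_{-v}(u_i)$ with $f(\mathrm{BP}(v,s),s)\le\mathcal{T}$''. Over the finite index set $\{1,\dots,j\}$ and the finite sink sets $S\cap V_{-v}(u_i)$, swapping this quantifier pattern flips truth value, so exactly one case holds; nothing about the hypotheses is used here.

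For case (ii), I would simply invoke Lemma~\ref{lemma:RecursiveSS}(2). Pick any child, say $u_1$. By hypothesis $T(u_1)=T_{-v}(u_1)$ is recursively self-sufficient, and the case~(ii) assumption gives a sink $s\in S\cap V_{-v}(u_1)$ with $f(\mathrm{BP}(v,s),s)\le\mathcal{T}$, i.e.\ $v$ can evacuate to $s$; thus $u_1$ plays the role of ``$u$ as in (1)'' in Lemma~\ref{lemma:RecursiveSS}. Since, again by hypothesis, every child $u'$ of $v$ in $V_{H(S)}$ has $T(u')=T_{-v}(u')$ recursively self-sufficient, and $T(v)$ is RC-viable by assumption, Lemma~\ref{lemma:RecursiveSS}(2) yields that $T(v)$ is recursively self-sufficient.

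For case (i), fix the index $i$ furnished by the hypothesis. The RC-viability hypothesis supplies the precondition in the definition of the reaching criterion, so it suffices to check its three bullets for $(u_i,v)$. The first is immediate since $u_i$ is a child of $v$ in $T_{H(S)}$. The second is immediate too: $T_{-v}(u_i)$ is recursively self-sufficient, hence self-sufficient (take $u'=u_i$ in the definition). The third bullet --- that $\mathrm{BP}(v,v)\cup V_{-v}(u_i)$ is \emph{not} self-sufficient --- is the only substantive step, and I expect it to be the main obstacle. The key observation is that the only sinks of $S$ lying in $\mathrm{BP}(v,v)\cup V_{-v}(u_i)$ are those in $S\cap V_{-v}(u_i)$, because $v$ is a hub (not a sink) and outstanding branches contain no sinks. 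So if a witnessing partition $\mathcal{Q}$ existed, the block $P\in\mathcal{Q}$ containing $v$ would be served by some $s'\in S\cap V_{-v}(u_i)$ with $s'\in P$; being a subtree containing both $v$ and $s'$, $P$ contains $\Pi(v,s')$, and every outstanding branch hanging off a node of $\Pi(v,s')$ must also lie entirely inside $P$ (such a branch is sink-free, so its block must reach a sink through its attachment point, which already belongs to $P$). Hence $P\supseteq \mathrm{BP}(v,s')$, and by set monotonicity $f(P,s')\ge f(\mathrm{BP}(v,s'),s')>\mathcal{T}$ by the case~(i) assumption, contradicting $f(P,s')\le\mathcal{T}$. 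This rules out self-sufficiency, so $(u_i,v)$ satisfies the reaching criterion.

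In summary, the work concentrates entirely in the converse-type claim ``if $v$ cannot evacuate to any sink in $V_{-v}(u_i)$, then $\mathrm{BP}(v,v)\cup V_{-v}(u_i)$ is not self-sufficient,'' which is the companion of Lemma~\ref{lemma:RecursiveSS}(1) going the other way and requires the connectivity argument above about how the bulk path must be absorbed into whichever block serves $v$. Everything else is bookkeeping or a direct citation of Lemma~\ref{lemma:RecursiveSS}.
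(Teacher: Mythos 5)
Your proof is correct and is essentially the argument the paper intends: it declares the corollary an ``immediate'' consequence of Lemma~\ref{lemma:RecursiveSS}, with case~(ii) being exactly your citation of part~(2) of that lemma, the dichotomy being the quantifier negation you note, and the one substantive step --- that failure of $v$ to evacuate to any sink of $V_{-v}(u_i)$ forces $\mathrm{BP}(v,v)\cup V_{-v}(u_i)$ to be non-self-sufficient --- being precisely the block-absorption argument the paper only sketches informally in the caption of its reaching-criterion figure. Your write-up supplies that missing detail correctly (the block containing $v$ must swallow all of $\mathrm{BP}(v,s')$ because outstanding branches are sink-free, whence set monotonicity gives the contradiction), so nothing further is needed.
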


\begin{algorithm}[t]
	\begin{algorithmic}[1]
		\State $T' = (V',E')$, rooted at $v \in V'$, sinks $S' \subseteq V'$
		\Comment $T'$ is recursively self-sufficient wrt $S'$
		\State $W : V' \rightarrow S'$, where $W(u)$ is a witness to Lemma \ref{lemma:RecursiveSS} for subtree rooted at $u$
		\State$\{P_s : s \in S'\}$, a collection of sets, all initialized to empty
		
		\State $T_0 := T'$
		\Comment We will delete nodes from $T_0$, so $T_0$ may become a forest
		
		\While{$T_0$ is non-empty}
		\State $T'_0 := $ arbitrary connected component of $T_0$, viewed as rooted subtree of $T'$
		\State $v := $ root of $T'_0$
		\State $s := W(v)$
		\State $P_s := P_s \cup \mathrm{BP}(v,s)$
		\State Remove all nodes in $\mathrm{BP}(v,s)$ from $T_0$
		\EndWhile
		
		\State $\{P_s : s \in S'\}$ is a partition witnessing self-sufficiency of $T'$
	\end{algorithmic}
	\caption{Finding partition for recursively self-sufficient trees}
	\label{alg:FindPartitionRecursiveSS}
\end{algorithm}

The algorithm can walk up the hub-tree from its leafs (sinks), testing recursive self-sufficiency using case (ii) of the corollary.  This only fails if case (i) is encountered, yielding  a $(u_i,v)$ pair satisfying the reaching criterion.
The process also terminates if it reaches $r$ and finds that $T(r) = T$  is recursively self-sufficient but in that case the algorithm itself terminates because $T$ can be supported by $S.$ 
This automatically leads to the next corollary

\begin{corollary}
\label{corollary: SSFinal}
Let  $T$ be RC viable.
Then one of the following two cases must occur
\begin{itemize}
\item [(i)] $\exists (u,v) \in E_{H(S)}$, $v$ the parent of $u,$  that satisfies the reaching criterion
\item [(ii)]$T$ is recursively self-sufficient and can be fully evacuated to the nodes in $S.$ 
\end{itemize}
\end{corollary}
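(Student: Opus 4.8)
The plan is to derive this corollary by a bottom‑up induction over the hub tree $T_{H(S)}$, invoking Corollary~\ref{corollary: SStesting} at each node we visit. The one preliminary fact I would establish first is that RC‑viability is inherited downward: if $T$ is RC‑viable with respect to $S$, then for every $v\in V_{H(S)}$ the rooted subtree $T(v)$ is RC‑viable with respect to $S\cap V(v)$. This holds because, $T$ being rooted at $r$, the parent in $T$ of any hub‑tree node lies in $V_{H(S)}$, so every neighbour of a hub‑tree node $w$ that is outside $V_{H(S)}$ is a child of $w$; hence each outstanding branch $T_{-w}(\eta)=T(\eta)$ attached to some $w\in V_{H(S)}\cap V(v)$ is entirely contained in $V(v)$, and the hub tree of $T(v)$ with respect to $S\cap V(v)$ is exactly $V_{H(S)}\cap V(v)$. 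Consequently the outstanding branches of $T(v)$ form a sub‑collection of those of $T$, and RC‑viability of $T$ forces RC‑viability of $T(v)$. This is what makes Corollary~\ref{corollary: SStesting} applicable at every step of the walk‑up.

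Next I would run the induction. The leaves of $T_{H(S)}$ are precisely the sinks of $S$ (they are leaves of $T$), and for such an $s$ the rooted subtree $T(s)=\{s\}$ is trivially recursively self‑sufficient since $f(\{s\},s)=0\le\mathcal{T}$. Now walk up the hub tree: take $v\in V_{H(S)}$ with children $u_1,\dots,u_j$ in $T_{H(S)}$, and assume inductively that each $T_{-v}(u_i)=T(u_i)$ is recursively self‑sufficient. By the inheritance remark $T(v)$ is RC‑viable, so Corollary~\ref{corollary: SStesting} gives an exhaustive dichotomy: either some $u_i$ satisfies $f(\mathrm{BP}(v,s),s)>\mathcal{T}$ for every sink $s\in S$ in $T_{-v}(u_i)$, in which case $(u_i,v)$ satisfies the reaching criterion and we stop with alternative~(i); or, for every $i$, there is a sink $s\in S$ in $T_{-v}(u_i)$ to which $v$ can evacuate, in which case (by the ``$\Rightarrow$'' of case (ii), i.e.\ Lemma~\ref{lemma:RecursiveSS}(2)) $T(v)$ is recursively self‑sufficient and the induction proceeds to the parent of $v$.

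If the walk‑up never encounters the reaching criterion, it terminates at the root $r\in V_{H(S)}$ with $T(r)=T$ recursively self‑sufficient. Applying the definition of recursive self‑sufficiency to $r$ itself, $T=T(r)$ is self‑sufficient, i.e.\ $V$ can be partitioned into blocks $P_s$, $s\in S$, with $f(P_s,s)\le\mathcal{T}$; equivalently, $T$ can be fully evacuated to the nodes of $S$, which is alternative~(ii). Since at each hub‑tree node exactly one of the two Corollary~\ref{corollary: SStesting} cases holds, the procedure is well defined and one of (i),(ii) must occur. I expect the only real subtlety to be the bookkeeping in the inheritance claim of the first paragraph (verifying that restricting to $T(v)$ creates no new outstanding branches and that its hub structure is the restriction of that of $T$), together with the trivial edge case $j=1$, where Corollary~\ref{corollary: SStesting} still applies verbatim; the remainder is a direct unwinding of the definitions and of the already‑proved Lemma~\ref{lemma:RecursiveSS} and Corollary~\ref{corollary: SStesting}.
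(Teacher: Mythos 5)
Your proposal is correct and follows essentially the same route as the paper, which proves this corollary only by the prose remark that one can walk up the hub tree from the sink leaves, applying Corollary~\ref{corollary: SStesting} at each node until either case~(i) yields a reaching-criterion pair or the root is reached with $T(r)=T$ recursively self-sufficient. Your added verification that RC-viability of $T$ is inherited by each rooted subtree $T(v)$ (so that Corollary~\ref{corollary: SStesting} is indeed applicable at every step) is a detail the paper leaves implicit, and you supply it correctly.
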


\subsection{The Evolution of the Hub Tree}
\label{subsec: Hub Evolution}

We have seen how,  when the peaking criterion can no longer be applied, the  working tree $T=(V,E)$ is  RC-viable with respect to 
$S = V \cap \Sout.$  Let $T_{H(S)} = (V_{H(S)}, E_{H(S)})$ be the 
directed hub-tree  
with root  $r$.

From  Corollary   \ref{corollary: SSFinal}, either  $T$ itself is recursively self-sufficient (and the algorithm terminates)  or  there exists some  $(u,v)$ in $T_{H(s)}$ that satisfies the reaching criterion. This permits removing the tree $T_{-v}(u)$ rooted at $u$, resulting in a new tree $\bar T$.  Since    $\bar T$   might no longer be RC-viable it needs to be checked again for the peaking criterion.

The remainder of this subsection examines what can happen next. It will show that if $\bar T$ does not remain RC-viable then there is exactly one edge, lying on a very specific known path, that satisfies the peaking criterion.  
 The removal of this edge will result in a new RC-viable $\barbarT$.
Deriving this  will require the following definitions:

\begin{definition}   In what follows $u,v \in V_{H(S)}.$ See Fig.~\ref{fig:hubtree2}.
\label{def: HT stuff}
\begin{itemize}
\item Set $\bf p(u)$ to be the parent of $u$.\\  Note that $r$ has no parent  and  $\forall u \in V_{H(S)}\setminus \{r\},$  $p(u) \in V_{H(S)}$. 
\item Set $\bf p_H(u)$   to be the  lowest hub -node on path $\Pi(p(u),r).$\\
$p_H(u)$ is the {\em hub-parent} of $u.$   
\item  Recall that 
$T(v)=(V(v),E(v))$ is  the directed subtree of $T$ rooted at $v$.
 Set $\bf S(v)$ to be the sinks in $T(v)$ that can support the bulk path from $v$ 
$$S(v) = \{ s \in S \cap V(v) \,:\,   f(BP(v,s), s) \le \mathcal{T}\}.$$
Furthermore, with each $s \in S(v)$ associate the child of $v$ whose subtree contains $s,$ i.e.,
$$u_v(s) = \mbox{unique $u$ such that $p(u) = v$ and  $s \in V(u)$}.$$
\end{itemize}
\end{definition}

\begin{figure}[t]
\centerline{\includegraphics[width=2.5in]{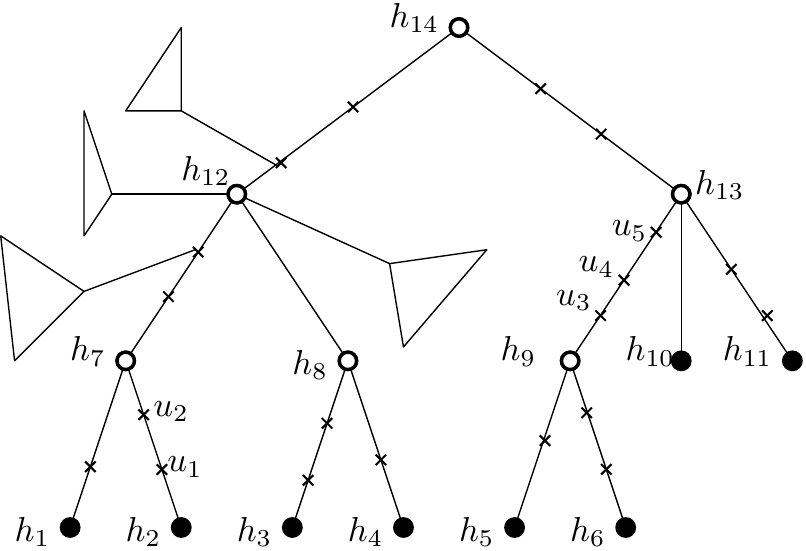}}
\caption{A labeled hubtree $T_{H(S)}.$ Solid nodes are the sinks in $S$; unfilled nodes are hubs. Triangles are outstanding branches.  $p(u_1) = u_2.$   $p_H(u_1) = h_7.$   $\Pi(h_{9},h_{13})$ is the path  $h_9,u_3,u_4,u_5,h_{13}.$ }
\label{fig:hubtree2}
\end{figure}

\begin{figure}
\centerline{\includegraphics[width=4in]{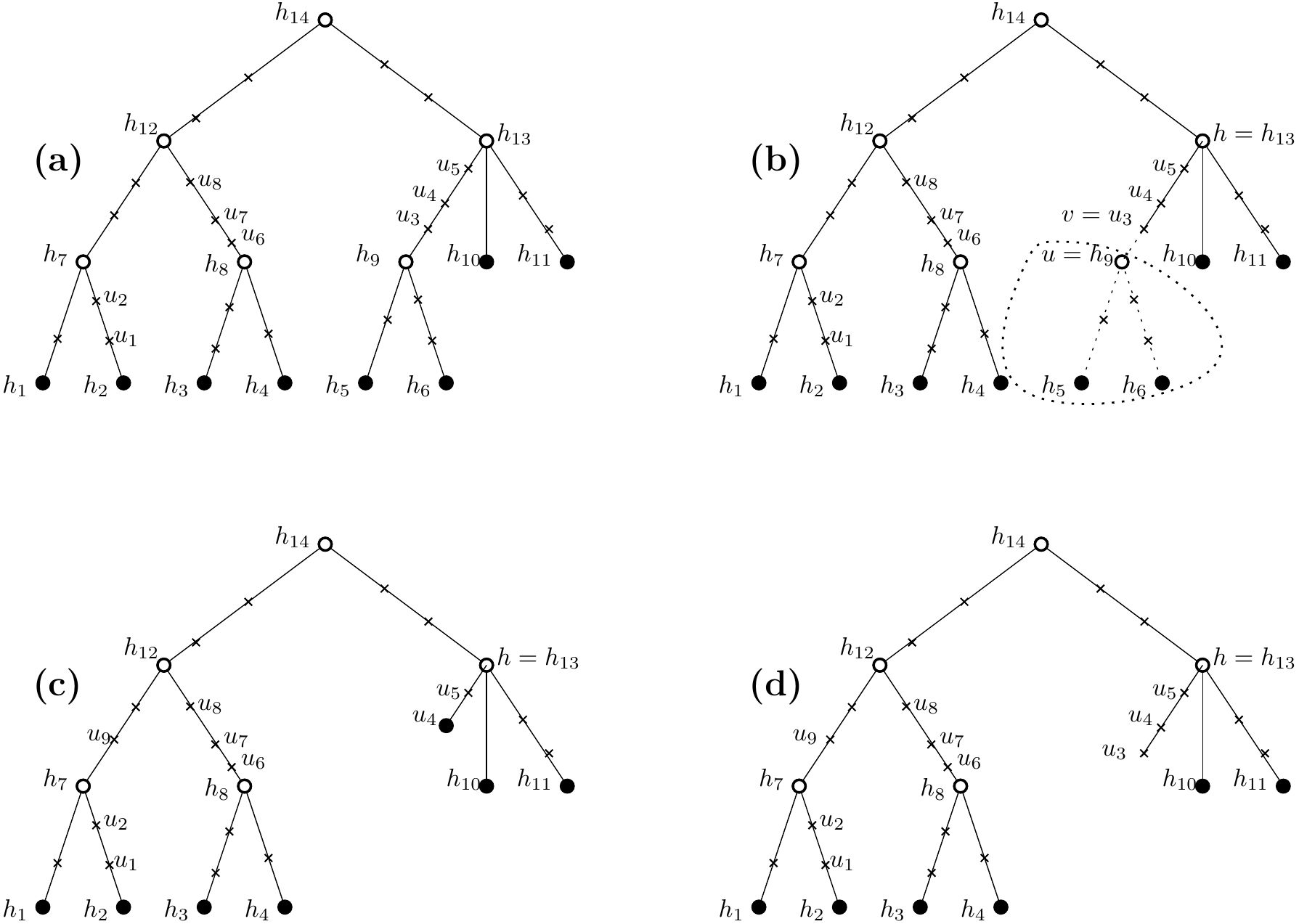}}
\caption{Illustration of the hub tree  evolution in Lemma \ref{lem:hubevolution}.  Note that outstanding branches are not drawn.  (a) is the original hub tree.  (b)  illustrates the edge $(h_9,u_3)$ satisfying the reaching criterion.  The circled subtree  and dotted edge    $(h_9,u_3)$ are then removed and $h= h_{13}.$
If  $f(V_{-h_{13}}(u_5)\cup\{h_{13}\},h_{13})> \mathcal{T}$ then some edge on $\Pi(u_3,h_{13})$ satisfies the peaking criterion so the only change to the hub tree is exactly one new sink being added on on $\Pi(u_3,h_{13})$  (with the subtree below it removed). This is Case 1 and is illustrated in (c).   
 If $f(V_{-h_{13}}(u_5)\cup\{h_{13}\},h_{13}) \le  \mathcal{T}$ and $h$ had at least three children then  $\Pi(u_3,h_{13})$ becomes an outstanding branch as in (d).  
This is case  2(a).  }
\label{fig:hubevolutiona}
\end{figure}

\begin{figure}
\centerline{\includegraphics[width=4in]{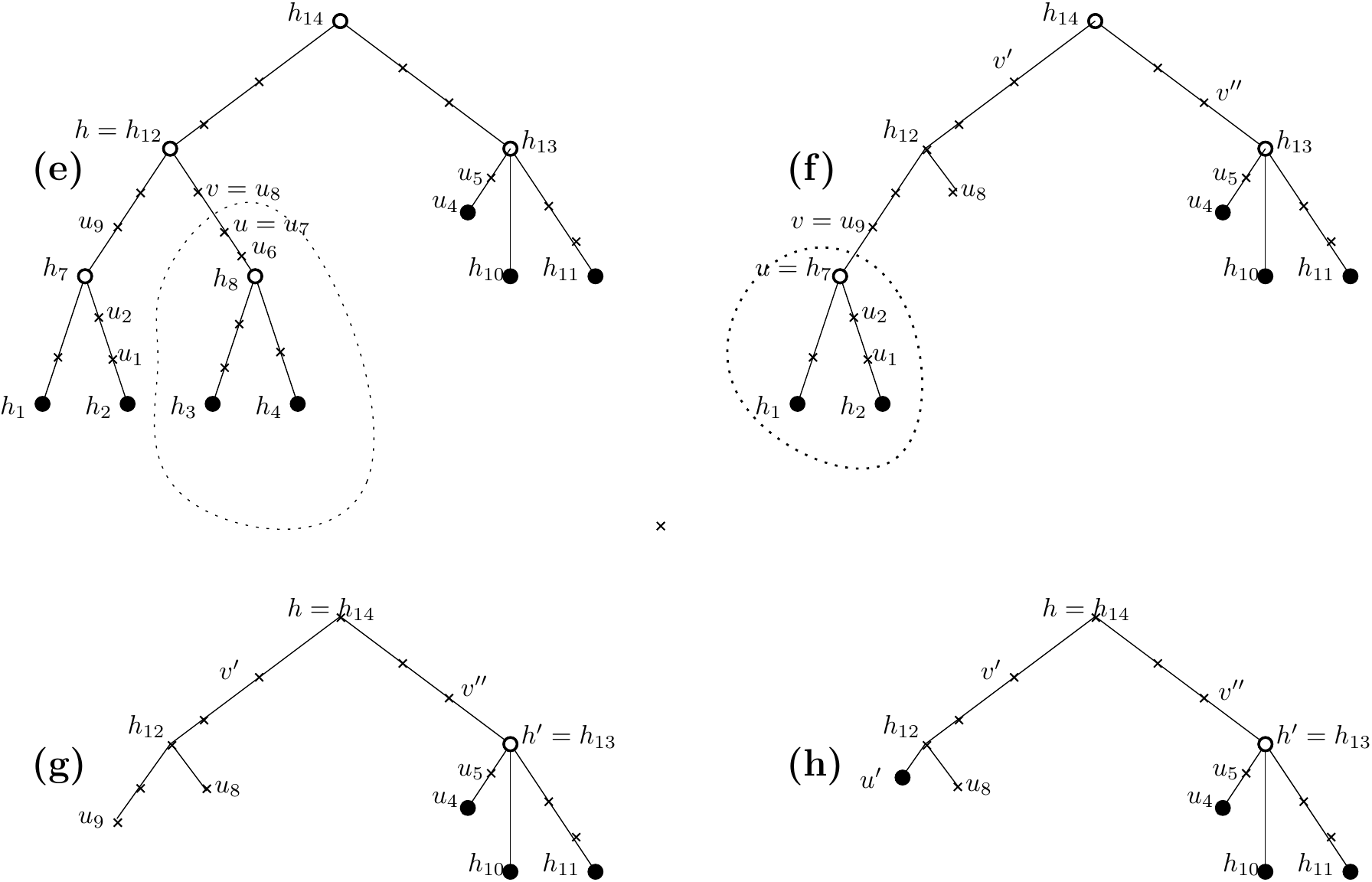}}
\caption{Continuation of the cases in  Lemma \ref{lem:hubevolution}.  (e) is the initial hub-tree.      $V_{-u_8}(u_7)$ and edge $(u_7,u_8)$ are now  removed using the reaching criterion so  $v' = u_8$ and $h = h_{12}$. (f) illustrates the situation when $f(V_{-h_{12}}(u_8) \cup\{h_{12}\},h_{12}) \le \mathcal{T}$ so $\Pi(u_8,h_{12})$ becomes an outstanding branch.  Since $h_{12}$ is not the root and originally only had two hub children the hub tree remains the same except that $h_{12}$ is no longer  a hub. This is case 2(b). Next, starting from (f),  $V_{-u_9}(h_7)$ and  edge  $(h_7,u_9)$ are  removed using the reaching criterion.  $h'=h_{13}$ will now become the new root of the hub tree. If path $\Pi(u_9,h_{13})$  does not contain any edge satisfying the peaking criterion then $\Pi(u_9,h_{13})$  together with its outstanding branches becomes an outstanding branch.  This is case 2(c)(i) and illustrated in (g).   If path $\Pi(u_9,h_{13})$  does  contain an edge $(u',p(u'))$ satisfying the peaking criterion then $\Pi(u',h_{13})$   becomes a path in the new hub tree. This is case 2(c)(ii), illustrated in (h).  Note that the node $v''$ can be {\em anywhere} on $\Pi(u_9,h_{13})$, including  on $\Pi(h_{14},h_{13})$. The remaining two cases 2(c)(iii) and 2(c)(iv) are simpler and not illustrated. }
\label{fig:hubevolutionb}
\end{figure}

\begin{lemma} 
\label{lem:hubevolution}
Let $T_{H(S)}$ be rooted at $r$ and  suppose $(u,v)$ satisfies the reaching criterion. 
Set 
$$
h = \left\{
\begin{array}{ll}
v  & \mbox{if $v$ a hub in $T_{H(S)},$}\\
p_H(v) & \mbox{otherwise.}
\end{array}
\right.
$$
Furthermore, if $h \not = v$ set
$$v' = \mbox{immediate child of $h$ such that $v  \in V_{-h}(v').$}$$

We partition the possibilities into 7 different scenarios as described below -- (1),  (2a), (2b),  (2ci), (2cii), (2ciii), (2civ) -- and state the behavior in each separately:
\begin{itemize}
\item [(1)] $h \not= v$ and $f(V_{-h}(v') \cup\{h\},h)  >  \mathcal{T}$.
\item [(2)]   $h=v$ \quad or  \quad  $h \not=v$ and $f(V_{-h}(v') \cup\{h\},h)  \le \mathcal{T}$.   
\begin{itemize}
\item   [(2a)]   {\bf $h$ has at least three children in  $T_{H(S)}$}.
\item   [(2b)] {\bf $h$ has  exactly two children in $T_{H(S)}$ and $h\not=r$} . 
\item    [(2c)]{\bf $h$ has exactly two children  in $T_{H(S)}$ and $h=r$:}\\
		Let $h'$ be  the unique remaining hub-node such that $p_H(h') =r$\\ and  set $v''=p(h').$ 
		\begin{itemize}
		\item  [(2ci)]  $f(V_{-h'}(v'') \cup\{h'\},h')  \le  \mathcal{T}$ \quad and \quad $h'$ is not a sink.
		\item  [(2cii)] $f(V_{-h'}(v'') \cup\{h'\},h')  >  \mathcal{T}$ \quad and \quad $h'$ is not  a sink.
		\item  [(2ciii)]  $f(V_{-h'}(v'') \cup\{h'\},h')  \le  \mathcal{T}$ \quad and \quad $h'$ is  a sink.
		\item  [(2iv)] $f(V_{-h'}(v'') \cup\{h'\},h')  >  \mathcal{T}$ \quad and \quad $h'$ is  a sink.
		\end{itemize}
\end{itemize}
\end{itemize}

Let $(\barSout,\barPout)$, $\bar T$, $\bar S$ be the result after applying the Reaching Lemma  to $(u,v)$,  removing $T_{-v}(u)$   and committing its nodes to the sinks in $S' = \{s \in S \cap V_{-v}(u)\}$. 

Let $(\barbarSout,\barbarPout)$, $\barbarT$, $\barbarS$ then be the result after  applying the {\em next} peaking phase. ${\barbarT}_{H\left(\barbarS\right)}$ 
is the new hub tree that results. All other variables will be renamed accordingly.

The results in the 7 scenarios  then satisfy (See Figs.~\ref{fig:hubevolutiona} and \ref{fig:hubevolutionb}): 
\begin{itemize}
\item  {\bf Case 1:} $\bar T$ is not RC-viable relative to $\bar S$. \\ 
Then some edge  $(u',p(u')) \in \Pi(v,h)$ is the 
 unique  edge that satisfies the peaking criterion for $\bar T$, $\bar S$. Furthermore, after
 (open) $Commit(\bar V_{-p(u')}(u'),u')$ creates $(\barbarSout,\barbarPout)$, $\barbarT$, $\barbarS$

\begin{itemize}
\item New  hub tree ${\barbarT}_{H\left(\barbarS\right)}$ is  $T_{H(S)}$ with  all nodes in $T_{-p(u')}(u')$ removed and 
\begin{itemize}
\item Node  $u'$ added  back as  sink.\\[0.03in] 
\end{itemize}
\end{itemize}
\item {\bf Case 2a:} $\bar T$ is RC-viable relative to $\bar S$.
\begin{itemize}
\item  
New  hub tree ${\barbarT}_{H\left(\barbarS\right)}$ is  $T_{H(S)}$ with  all nodes in $T_{-v}(u)$ removed.\\[0.03in]
\end{itemize}
\item {\bf Case 2b:} $\bar T$ is RC-viable relative to $\bar S$. 
\begin{itemize}
\item  New  hub tree ${\barbarT}_{H\left(\barbarS\right)}$ is  $T_{H(S)}$ with  all nodes in $T_{-v}(u)$ removed 
\item $h$ remains as node in $T_{H(S)}$  but is no longer a hub.\\[0.03in]
\end{itemize}
\item {\bf Case 2ci:}
$\bar T$ is RC-viable relative to $\bar S$. 
\begin{itemize}
\item New hub tree $\barbarT_{H(\bar S)}$ is   $T_{H(S)}$ with  all nodes in $T_{-v}(u)$ removed  and re-rooted at $h'$.\\[0.03in]
\end{itemize}
\item  {\bf Case 2cii:} 
Some edge  $(u',p(u')) \in \Pi(v,h')$ is the 
 unique  edge that satisfies the peaking criterion for $\bar T$, $\bar S$. 
  Furthermore, after (open) $Commit(\bar V_{-p(u')}(u'),u')$  creates $(\barbarSout,\barbarPout)$, $\barbarT$, $\barbarS$
\begin{itemize}
\item New  hub tree ${\barbarT}_{H\left(\barbarS\right)}$ is $T_{H(S)}$ with  all nodes in $T_{-p(u')}(u')$ removed and
\begin{itemize}
\item re-rooted at $h',$
\item with new hub $u'$ added  as a  sink.\\[0.03in] 
\end{itemize}
\end{itemize}
\item  {\bf Case 2ciii:}   $V$ is fully served by $h'$, so $\barbarV = \emptyset$ and algorithm terminates.\\[0.03in]
\item  {\bf Case 2civ:} 
Some edge  $(u',p(u')) \in \Pi(v,h')$ is the 
 unique  edge that satisfies the peaking criterion for $\bar T$, $\bar S$. 
  Furthermore, after (open) $Commit(\bar V_{-p(u')}(u'),u')$  creates $(\barbarSout,\barbarPout)$, $\barbarT$, $\barbarS$ with $|\barbarS| =2$.  Then exactly one of the following two cases occur
\begin{itemize}
\item  $\barbarS = \barbarV =\{u',h'\}$  and algorithm terminates after performing $Commit(\{u'\},u')$ and $Commit(\{h'\},h')$, or
\item $|\barbarV| >2$ and new  hub tree ${\barbarT}_{H\left(\barbarS\right)}$ is $T_{H(S)}$ with  all nodes in $T_{-p(u')}(u')$ removed and tree re-rooted at $v''.$
\end{itemize}
\end{itemize}
Furthermore for all non-terminating cases,  for all nodes $w \in  \barbarV_{H\left(\barbarS\right)},$
\begin{equation}
\label{eq:barbarSnew}
\barbarS(w) = 
\left\{
\begin{array}{lcl}
   S(w) \setminus S'   &  \quad & \mbox {In cases  2a,  2b and 2ci}\\
\left(S(w) \setminus S'\right) \cup I(w,u')  && \mbox{In cases 1, 2cii and 2civ}
\end{array}
\right.
\end{equation}
where 
\begin{equation}
\label{eq:barbarSnew2}
I(w,u') =
\left\{
\begin{array}{cl}
\{u'\}  &   \mbox { If  $ u' \in \barbarT(w)$  and  $f(\barbarBP(w,u'), u') \le \mathcal{T}$}\\
\emptyset & \mbox{otherwise}
\end{array}
\right.
\end{equation}
\end{lemma}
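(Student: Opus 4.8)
The plan is to assemble the statement from the two primitives that are already justified: the Reaching Lemma (Lemma~\ref{theorem:RCTrimTree}), which effects the removal of $T_{-v}(u)$ (note $(u,v)\in E_{H(S)}\subseteq E$, so $T_{-v}(u)$ is just the rooted subtree $T(u)$) and the commits to the sinks $S'=S\cap V_{-v}(u)$, and the Peaking Lemma (Lemma~\ref{lemma:PeakingCriterionPutSink}) together with Lemma~\ref{lemma:PCRecurseInto} and Corollary~\ref{corollary:PCStoppingCondition}, which govern the single open commit (or termination) of the subsequent peaking phase. Since both primitives preserve optimality of $(\Sout,\Pout)$ relative to the working tree, that part of the claim is immediate; the content is the combinatorial description of the new hub tree and of $\barbarS(\cdot)$. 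Writing $V_u=V_{-v}(u)$, $V_v=V_{-u}(v)$, so $\bar V=V_v$ and $\bar S=S\setminus S'$, I would begin with a \emph{localization} claim: for an edge $(a,b)\in\bar E$ the component $\bar T_{-b}(a)$, and whether it contains a sink, differ from the situation in $T$ only when $T(u)\subseteq T_{-b}(a)$, i.e.\ only for edges on $\Pi(u,r)$ with the $a$-side pointing away from $r$; moreover, when $v\neq h$, the sinks lying strictly below any node of $\Pi(v,h)$ are exactly those in $S'$, so precisely those components become sink-free. Because $T$ is RC-viable (Lemma~\ref{lemma:RCViability}), it follows that a new peaking pair, or a failure of RC-viability, can occur only inside the pendant $Q:=(\Pi(v,h)\setminus\{h\})$ together with the outstanding branches attached to its nodes, and only when $v\neq h$; when $v=h$ no pendant is created ($T(u)$ is deleted wholesale and $h$ still carries a sink through another of its subtrees), so $\bar T$ is automatically RC-viable and we are in scenario (2).

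The second step is the walk-up dichotomy on whether $f(Q\cup\{h\},h)\le\mathcal{T}$, which is exactly the $\bar T$-value of $f(V_{-h}(v')\cup\{h\},h)$. If it fails (so $h\neq v$, scenario~(1)), $\bar T$ is not RC-viable; Lemma~\ref{lemma:PCRecurseInto} applied to $V_{-h}(v')\cup\{h\}$ produces a peaking pair, and localization forces it onto the spine $\Pi(v,h)$, since a peaking pair strictly inside an outstanding branch of $T$ attached to a spine node would contradict RC-viability of $T$ (iterate path monotonicity to push the offending cost up to the attachment point, then invoke set monotonicity). Uniqueness of the spine peaking edge $(u',p(u'))$ is the monotonicity argument used in proving the Peaking Lemma: if $u_1'$ were strictly below $u_2'$ on the spine and both satisfied the criterion, then $V_{-p(u_1')}(u_1')\cup\{p(u_1')\}\subseteq V_{-p(u_2')}(u_2')$, and iterating path monotonicity up the spine from $p(u_1')$ to $u_2'$ and then using set monotonicity would give $f(V_{-p(u_2')}(u_2'),u_2')>\mathcal{T}$, contradicting the criterion at $u_2'$. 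After the open $Commit(\bar V_{-p(u')}(u'),u')$, node $u'$ is a leaf sink, hence on a sink-to-root path, so the new hub tree is $T_{H(S)}$ with all of $T_{-p(u')}(u')$ deleted and $u'$ reinstated as a sink; one checks that no further peaking edge survives (the spine above $u'$ now lies on a sink-root path and its branches remain RC-viable), so the peaking phase stops, giving Case~1. If instead $f(Q\cup\{h\},h)\le\mathcal{T}$ (or $v=h$), the same estimate read in reverse shows that $\bar T$ is RC-viable and, in cases 2a and 2b, also peaking-edge-free, so $\barbarT=\bar T$; splitting on the hub-degree of $h$ in $T_{H(S)}$ then gives Case~2a (degree $\ge 3$: $h$ stays a hub) and Case~2b (degree exactly $2$, $h\neq r$: $h$ becomes an ordinary path node still on a sink-root path), the new hub tree in each being the set of nodes of $\bar T$ still on some sink-to-root path (informally, $T_{H(S)}$ with $T_{-v}(u)$ removed). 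The remaining scenarios arise when $h=r$ has exactly two children, which forces re-rooting at the unique hub $h'$ with $p_H(h')=r$: the material below $r$ on the $v$-side, together with $r$, the path $\Pi(r,h')\setminus\{h'\}$ and their outstanding branches, now forms one pendant $\hat Q$ attached to $h'$, and one re-runs the same absorb-versus-new-sink dichotomy at $h'$ by testing whether $f(\hat Q\cup\{h'\},h')\le\mathcal{T}$, i.e.\ whether the $\bar T$-value of $f(V_{-h'}(v'')\cup\{h'\},h')$ is $\le\mathcal{T}$, with $v''=p(h')$ in the re-rooted tree. This reproduces Case~2ci ($\hat Q$ absorbed, $h'$ not a sink: re-root at $h'$, done), Case~2cii ($\hat Q$ too heavy, $h'$ not a sink: one more peaking commit on $\Pi(v,h')$, argued exactly as in Case~1 but with the re-rooted parent function, after which the hub tree is $T_{H(S)}$ with $T_{-p(u')}(u')$ removed, re-rooted at $h'$, and $u'$ added as a sink), and the degenerate possibilities 2ciii and 2civ in which $h'$ is itself the unique remaining sink on that side: if $f(\hat Q\cup\{h'\},h')\le\mathcal{T}$ then $h'$ serves all of $\bar V$ and the algorithm halts with $\barbarV=\emptyset$ (2ciii), while otherwise Lemma~\ref{lemma:PCRecurseInto} gives one peaking commit leaving $|\barbarS|=2$, whereupon Corollary~\ref{corollary:PCStoppingCondition} either finishes the two-node working tree $\{u',h'\}$ with two trivial commits or, if $|\barbarV|>2$, leaves the generic two-sink hub tree rooted at $v''$ (2civ).

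For equation~(\ref{eq:barbarSnew}): $\barbarS(w)$ is by definition the set of $s\in\barbarS\cap\barbarV(w)$ with $f(\barbarBP(w,s),s)\le\mathcal{T}$. The reaching step deletes exactly the sinks of $S'$, so only sinks in $S(w)\setminus S'$ survive as candidates; for such an $s$ the path $\Pi(w,s)$ is unchanged, and $\barbarBP(w,s)$ can differ from $\mathrm{BP}(w,s)$ only by outstanding branches that were newly created (the pendant $Q$, or the re-rooting pendant), each of which satisfies its RC-viability bound at an attachment point lying on $\Pi(w,s)$; using max composition one checks that folding in such a branch does not flip the $\le\mathcal{T}$ test, which yields $\barbarS(w)=S(w)\setminus S'$ in the no-new-sink cases 2a, 2b, 2ci. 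When a new sink $u'$ is created (Cases 1, 2cii, 2civ), the only extra candidate for $\barbarS(w)$ is $u'$ itself, which qualifies precisely when $u'\in\barbarT(w)$ and $f(\barbarBP(w,u'),u')\le\mathcal{T}$ — that is, precisely $I(w,u')$ — giving the second line of~(\ref{eq:barbarSnew}). I expect the real work to be concentrated in two places. First, the re-rooting family (Case~2c): since ``outstanding branch'' is defined relative to the root, after re-rooting at $h'$ one must re-identify from scratch which edge-orientations along $\Pi(h',r)$ can newly satisfy the peaking criterion (they are exactly those pointing back toward the old root), and re-prove both uniqueness and the restored RC-viability along the new spine. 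Second, the $\barbarS(\cdot)$ bookkeeping: pinning down the exact sense in which absorbing an RC-viable branch into a bulk path leaves the feasibility test invariant is the fiddly point, and it is there that the max-composition axiom has to be applied most carefully.
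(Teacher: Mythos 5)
Your proposal tracks the paper's own proof almost step for step: the localization of any new peaking edge to the spine $\Pi(v,h)$ (the paper phrases this as: with the exception of the edges on $\Pi(v,h)$, all other edges in $\bar T$ still either have a sink beneath them or lie in an outstanding branch), the uniqueness of that edge once a sink is placed on it, the dichotomy on $f(V_{-h}(v')\cup\{h\},h)$ versus $\mathcal{T}$ separating scenario (1) from scenario (2), the split on the hub-degree of $h$ and on whether $h=r$, and the re-rooting at $h'$ with the same dichotomy replayed on $\Pi(v,h')$ in case 2c. Your observation that $h=v$ is just the degenerate instance of case 2 with an empty pendant is exactly the paper's remark that $\Pi(v,h)$ then contains no edges. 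On all the structural claims the two arguments coincide.

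The one place where your written justification does not actually deliver what you assert is the verification of Eq.~(\ref{eq:barbarSnew}) in cases 2a, 2b and 2ci. When $h\neq v$, for a node $w$ above $h$ and a surviving sink $s$ below $h$ through a different child, the new bulk path $\barbarBP(w,s)$ strictly contains the old $\mathrm{BP}(w,s)$: it acquires the freshly created outstanding branch $V_{-h}(v')$ attached at $h$. Set monotonicity gives $f(\barbarBP(w,s),s)\ge f(\mathrm{BP}(w,s),s)$, which is precisely the direction that could \emph{eject} $s$ from $\barbarS(w)$, and the max-composition axiom decomposes $f$ only at the sink $s$ itself --- it says nothing about splitting off a branch attached at the interior node $h$. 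The RC-viability bound $f(V_{-h}(v')\cup\{h\},h)\le\mathcal{T}$ is a bound taken from $h$, not from $s$, so ``folding in the branch does not flip the test'' is not a consequence of the axioms you invoke. To be fair, the paper dispatches this entire bookkeeping claim with ``it is easy to derive by definition,'' so you are not behind the published argument here, and you do correctly flag this as the fiddly point; but as written the appeal to max composition is not a proof. Closing it requires either an extra property of $f$ (or of how the $S(\cdot)$ lists are used downstream) that makes absorbing an RC-viable branch at an interior attachment point cost-neutral for distant sinks, or a direct recomputation of $\barbarS(w)$ in these cases. Everything else in your write-up is sound and matches the paper.
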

\begin{proof}
See Figs.~\ref{fig:hubevolutiona} and \ref{fig:hubevolutionb}.

Suppose $(u,v)$ satisfies the reaching criterion and subtree $T_{-v}(u)$ is removed from $T$ resulting in $\bar T.$

First assume that $h \not = v.$
With the exception of the edges on $\Pi(v,h)$, all other edges in $\bar T$ still either have a sink beneath them or are on an outstanding branch.  Thus, any edge that satisfies the peaking criterion must be on $\Pi(v,h)$.  Furthermore, if any edge $ (u',p(u')) \in \Pi(v,h)$ satisfies the peaking criterion and a sink is placed on $u'$, no other edge on  $\Pi(v,h)$ can then satisfy the peaking criterion. So, at most one edge may satisfy the peaking criterion after the removal of $T_{-v}(u)$.

\smallskip

\par\noindent\underline{Case 1:}
If  $f(V_{-h}(v') \cup\{h\},h)  > \mathcal{T}$,  Lemma \ref{lemma:PCRecurseInto} states that such an edge $ (u'',v'')$ satisfying the peaking criterion must exist so the Lemma is correct for Case 1.

\smallskip 

\par\noindent\underline{Case 2:} 

If  $h \not = v$  and $f(V_{-h}(v') \cup\{h\},h) \le \mathcal{T}$, then by definition, the branch containing 
$\Pi(v,h)$  that falls off of $h$   becomes an outstanding branch falling off of $h.$  Practically, this is equivalent to removing the edges on path  $\Pi(v,h)$ from the hub tree.

 If $h=v$  the path $\Pi(v,h)$ is just the vertex $v$.  So  $\Pi(v,h)$ contains no edges  and the situation is now the same as the previous paragraph, i.e., the (non-existent)  edges on path  $\Pi(v,h)$  are trivially removed.  Thus $h=v$  and   $h\not=v$  with   $f(V_{-h}(v') \cup\{h\},h) \le \mathcal{T}$   result in the same type of structure.
 
\smallskip 

\par\noindent\underline{Case 2a:} 
If $h$ originally had at least three children in  $T_{H(S)}$, then $h$ still retains at least two children containing  sink leaves below it.  Thus $h$ remains a hub. As noted above, the path $\Pi(v,h)$ (if it exists) is an outstanding  branch and  therefore does not contain any edge satisfying the peaking criterion.  Since all edges satisfying the peaking criterion must lie on the path no such edges exist.
 Thus the lemma is correct for  Case 2(a).

\smallskip

Next assume that $h$ originally had two children in  $T_{H(S)}$. $h$ now only has one branch below it that contains a sink. So $h$ is no longer a hub.  This splits into cases 2b and 2c.

\smallskip
\par\noindent\underline{Case 2b:}  If $h \not=r$ then $p_H(h)$ exists and   remains  a hub  because all of its old branches containing sinks still contain sinks. Again, the only possible location for an edge that satisfies the peaking criterion would be on the branch that  contained  $\Pi(v,h)$ (if it exists) but since this is now an outstanding branch, no such edge exists. Thus the lemma is correct for  Case 2(b).

\par\noindent\underline{Case 2c:} 
 $h=r$,    now only has one  branch (the one that did not contain $u$) containing   sinks, so it is no longer a valid root for the hub tree.   Consider the tree as being re-rooted at $h'$. The old subtree rooted at $h'$ remains rooted at $h'.$ In addition, $h'$ is now the root for  the path $\Pi(v,h').$  Note that all branches  falling off $\Pi(v,h')$ are outstanding branches because of $T$'s RC-viability.  Thus the only possible edges that could satisfy the peaking criterion are on $\Pi(v,h').$ 

\smallskip
\par\noindent\underline{Case 2ci:} 
No edge on $\Pi(v,h')$ can satisfy the peaking criterion and thus the lemma is correct for  Case 2(c)(i).

\smallskip
\par\noindent\underline{Case 2cii:} 
Lemma \ref{lemma:PCRecurseInto}  implies that  an edge satisfying the peaking criterion must exist on $\Pi(v,h)$.  Furthermore, similar to the argument in Case 1, only one such edge can exist
and thus the lemma is correct for  Case 2(c)(ii).  

\smallskip
\par\noindent\underline{Case 2ciii:} 
No edge on $\Pi(v,h')$ can satisfy the peaking criterion and thus $h'$ serves all of $V.$

\smallskip
\par\noindent\underline{Case 2civ:} 
Lemma \ref{lemma:PCRecurseInto}  implies that  an edge satisfying the peaking criterion must exist on $\Pi(v,h)$. Similar to the argument in Cases 1 and 2cii, only one such edge can exist. Thus, the one new sink $u'$ is created so
$\barbarS = \{u',h'\}.$  If $\barbarS=\barbarV$ then   the algorithm terminates as in Corollary \ref{corollary:PCStoppingCondition}(3). Otherwise $v''$ is on the path conecting the two sinks 
and thus the lemma is correct for  Case 2(c)(iv).  

\medskip

Finally,  again by checking the cases individually, it is easy to derive by definition  that 
 (\ref{eq:barbarSnew})  and (\ref{eq:barbarSnew2})  correctly state the  new values for $\barbarS(w)$.
\qed
\end{proof}

Lemma  \ref{lem:hubevolution} implies that after  the Reaching Lemma is applied,
at most one  edge in  the remaining working tree $T$, located on an easily identifiable path, can satisfy the peaking criterion before $T$ becomes RC-viable again.   Furthermore,  the new resulting hub tree can be constructed easily from the old one.

\section{Designing an Algorithm for The Discrete Bounded Cost Problem}
\label{Sec: Bounded Algorithm}

Combining the pieces from Section \ref {Section: Bounded Cost}  yields a 
generic algorithm for solving the discrete bounded-cost problem.  This is shown in Algorithm \ref {alg:BoundedCostFull2}.

\begin{algorithm}[t]
	\begin{algorithmic}[1]
		\State  $T := (V,E) := T_{\mathrm{in}}$
		\State $S := S_\mathrm{out} = \emptyset, \mathcal{P}_{\mathrm{out}} := \emptyset$. 
		\State
		\Procedure{Peaking.Phase}{}
		\While{Some edge $(u,v) \in E$ satisfies Peaking Condition}
		\Comment Lemma \ref{lemma:PeakingCriterionPutSink}
		\State  Commit$(V_{-v}(u) \cup \{v\},\, s)$
		\EndWhile
		\If {$S = \{s\}$ for some $s \in S$}
		\Comment Corollary \ref{corollary:PCStoppingCondition}
		\State{Commit$(V,s)$.}
		\ElsIf{$S = \emptyset$}
		\Comment Corollary \ref{corollary:PCStoppingCondition}
		\State {Choose any $v \in V$ and  Commit$(V,v)$.}
		\ElsIf{$V=S=\{s,s'\}$}
		\Comment Corollary \ref{corollary:PCStoppingCondition}
		\State {Commit$(\{s\},s)$  and  Commit$(\{s'\},s')$.}
		\EndIf
		\If{$|\Sout| >k$}
		\State{BREAK and Return ``Infeasible"}
		\EndIf
		\EndProcedure
		\State

		\Procedure{Reaching.Phase}{}
		\If {$\exists (u,v) \in E_{H(S)}$ that satisfies Reaching Condition}	
		\Comment Lemma \ref{lemma:RCTrimTree}
		\State Remove $V_{-v}(u)$ from $T$
		\State {Commit blocks for $V_{-v}(u)$}
		\Else
		\Comment {$T$ is recursively self-sufficient}
		\State{Commit all of $V$ to $S$}
		\Comment  Corollary \ref{corollary: SSFinal}
		\EndIf
		\EndProcedure

		\State
		\State
		\Comment{Start Algorithm}
		\State \Call{Peaking.Phase}{}	
	         \State
		\While{Tree $T$ is not empty}
		\State{Create Hub Tree $T_{H(S)}$ from $T,S$}
		\State \Call{Reaching.Phase}{}	
		\If {Tree $T$ is not empty}
		\State \Call{Peaking.Phase}{}	
		\EndIf
		\EndWhile
		\State Output $S_\mathrm{out}$, $\mathcal{P}_{\mathrm{out}}$
	\end{algorithmic}
	\caption{Generic Bounded cost algorithm}
	\label{alg:BoundedCostFull2}
\end{algorithm}

This  algorithm initializes  by setting 
$(\Sout,\Pout) = (\emptyset,\emptyset)$,   $T := (V,E) := T_{\mathrm{in}}$ and $S = \Sout \cap V = \emptyset.$
This   $(\Sout,\Pout)$ is trivially optimal relative to $T.$

The algorithm then attempts to find edge $(u,v)$ that satisfies the peaking criterion. Every time it finds such an edge it performs an open commit.  From Lemma \ref{lemma:PeakingCriterionPutSink}, this maintains   $(\Sout,\Pout)$ as being  optimal relative to $T.$ 

 If adding a sink via the peaking criterion ever sets  $|\Sout|>k$, the algorithm reports that no feasible sink configuration exists.
If no  edge satisfying the criterion can be found and  $|S|=| \Sout \cap V|<2$ then the algorithm  finds an optimal feasible configuration using 
Corollary \ref{corollary:PCStoppingCondition}.   More specifically
\begin{itemize}
\item 
 if $|S|=0$ then ${\rm Commit}(V,v)$  for any $v \in V$ 
\item if $S=\{s\}$ then ${\rm Commit}(V,s)$ and the algorithm concludes with $(\Sout,\Pout)$ being an optimal feasible configuration for the original $\Tin.$
\end{itemize}

If no  edge satisfying the peaking criterion  exists and  $|\Sout| \ge 2$ then   $T$ is $RC$-viable.  The algorithm  then attempts  to find an edge satisfying the reaching criterion. 
If it succeeds,   it performs the corresponding closed commits and  returns to trying to find an edge satisfying the peaking criterion.   By Lemma \ref{lemma:RCTrimTree} this maintains   $(\Sout,\Pout)$ as being  optimal relative to $T.$  If no edge satisfying the reaching criterion  exists, then by
  Corollary \ref{corollary: SSFinal}, $T$ can  be fully committed to the sinks in $S;$  the resulting sink configuration  $(\Sout,\Pout)$ is  an optimal feasible configuration for the original $\Tin.$

The algorithm described is  {\em generic} because it does not specify an order or methods for finding edges that satisfy  the peaking or reaching criteria.  The remainder of this section develops efficient techniques for both.
 It proceeds as follows:
 
 \begin{itemize}
 \item [\ref{subsec:centroid}] Implementation of the first peaking phase via a tree centroid decomposition
 \item[\ref{subsec:peaking bin}] Implementation of  all other peaking phases.
 \item[\ref{subsec:new reaching bs}] Creation of the hub tree after a peaking phase
 \item[\ref{subsec:reaching imp}] Implementation of the reaching phase after constructing the hub tree.
 \end{itemize}
The decomposition into these parts will make it easier to apply  parametric searching  in section \ref{Section: Full Problem}  to  solve the Minmax $k$-sink problem.

\subsection{Implementing the Peaking Phase}

\subsubsection{The First Peaking Phase  via Tree Centroid Decomposition.}
\label{subsec:centroid}

The peaking phase checks the peaking criterion  on  all possible directed\footnote{It thus needs to check each  edge in the tree twice; once in each direction.} edges $(u,v)$, committing $V_{-v}(u)$ to $S_u$  if appropriate.  
Explicitly  checking every edge would require  
$O(n)$ oracle calls.
This subsection develops a  method that only requires $O(\log n)$  (amortized) oracle calls plus $O(n \log n  + nk)$ extra work for the first peaking phase. Section \ref{subsec:peaking bin}  will then show how to implement each subsequent peaking phase using $O(\log n)$  (actual) oracle calls plus $O(nk)$ total extra work over all the remaining phases.

We start by noting that information garnered when checking an edge for the peaking criterion will often imply that many other edges will not satisfy the criterion and therefore need not  be tested. The algorithm will take advantage of this and  create an order for checking the edges -- based on a recursive centroid decomposition of $\Tin$ -  that will essentially guarantee that either many calls will not have to be made OR that  the average {\em size} of an oracle call will be small.  The asymptotic subadditivity of the oracle $\mathcal{A}$ will  then yield an amortized running time of the first peaking stage equivalent to $O\log n)$ oracle calls.

\begin{definition} Let $(u,v)$ be any directed edge. 
\label{def:labels}
\begin{itemize}
\item  [(i)] $(u,v)$ satisfies Condition {\bf L1} if  
\begin{equation}
\label{eq:def L1}
 f(V_{-v}(u) \cup \{v\}, v) >  f(V_{-v}(u), u) > \mathcal{T}.
 \end{equation}
\item  [(ii)] $(u,v)$ satisfies Condition {\bf L2} if
\begin{equation}
\label{eq:def L2}
f(V_{-v})(u), u)  \le f(V_{-v}(u) \cup \{v\}, v) \le \mathcal{T}.
\end{equation}
\end{itemize}
\end{definition}

\begin{definition} (Fig.~\ref{fig:AboveBelow})
Let $u$ be a neighbor of $v$ and $u'$ a neighbor of $v'$ Then
\begin{itemize}
\item Directed edge $(u',v')$ is {\em above}  directed edge $(u,v)$ if $v$ is on path $\Pi(u,u')$ and $u'$ is on path $\Pi(v,v')$.
\item Directed edge $(u',v')$ is {\em below}  directed edge $(u,v)$ if
$(u,v)$ is above  $(u',v').$
\end{itemize}
\end{definition}

 \begin{figure}
 \centerline{\includegraphics[width=3in]{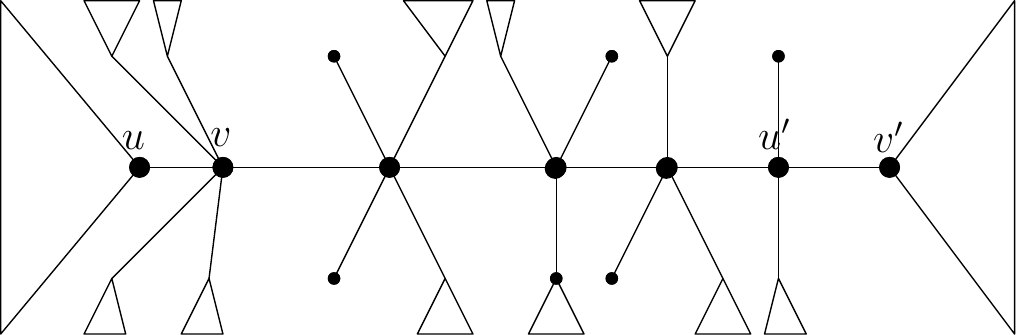}}
 \caption{Edge $(u',v')$ is {\em above}  $(u,v).$  Edge $(u,v)$ is {\em below} $(u',v').$}
 \label{fig:AboveBelow}
 \end{figure}

\begin{lemma}  Let $(u,v)$ be a directed edge.   Then, one of the following three cases must hold with the corresponding consequences.
\label{L1L2 lemma}
\begin{itemize}
\item [(i)]
$(u,v)$  satisfies  {\bf L1}. Then 
all edges  $(u',v')$ above  $(u,v)$  satisfy  {\bf L1}.
\item [(ii)] $(u,v)$  satisfies  {\bf L2}.
Then  all edges  $(u',v')$  below  $(u,v)$  satisfy  {\bf L2}.
\item[(iii)] $(u,v)$  does not satisfy  {\bf L1} or {\bf L2}.
Then 
(a) $(u,v)$ satisfies the peaking criterion,  (b)  all edges  $(u',v')$ above  $(u,v)$  satisfy  {\bf L1} and (c)
all edges  $(u',v')$  below  $(u,v)$  satisfy  {\bf L2}.
\end{itemize}
\end{lemma}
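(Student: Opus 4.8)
The plan is to attach to each directed edge $(u,v)$ the two numbers $a(u,v):=f(V_{-v}(u),u)$ and $b(u,v):=f(V_{-v}(u)\cup\{v\},v)$, and to work entirely with them. The first observation is that path monotonicity (property~3 of Section~\ref{subsec:minmaxdef}), applied with $U=V_{-v}(u)$ and $s=v$ (whose unique neighbour in $U$ is $u$), yields $a(u,v)\le b(u,v)$ for \emph{every} directed edge. In view of this, \textbf{L1} says exactly $a(u,v)>\mathcal{T}$, \textbf{L2} says exactly $b(u,v)\le\mathcal{T}$, and ``neither \textbf{L1} nor \textbf{L2}'' says exactly $a(u,v)\le\mathcal{T}<b(u,v)$; since $a(u,v)\le b(u,v)$, these three alternatives partition all possibilities, giving the trichotomy. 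Furthermore, in the situation of Section~\ref{subsec:centroid} the working tree still contains no sinks, so ``$a(u,v)\le\mathcal{T}<b(u,v)$'' coincides with the peaking criterion for $(u,v)$, which settles part~(iii)(a).

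The engine of the proof is a single monotonicity claim, which I would state and prove first: \emph{if $(u',v')$ is above $(u,v)$, then $a(u',v')\ge b(u,v)$}. Unwinding the definition of ``above'', the unique tree-path reads $u,\dots,v,\dots,u',v'$; hence $V_{-v}(u)\cup\{v\}$ is contained in $V_{-v'}(u')$, and so is the path $\Pi(u',v)$ joining $u'$ to $v$. I would then chain path monotonicity edge-by-edge along $\Pi(u',v)$ --- shifting the sink of the served set from $v$ up to $u'$ one step at a time, noting that at each step the newly adjoined node has a single neighbour in the current set because we are walking a path in a tree --- to obtain $f\big(V_{-v}(u)\cup\Pi(u',v),\,u'\big)\ge f\big(V_{-v}(u)\cup\{v\},\,v\big)=b(u,v)$. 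Finally set monotonicity (property~2) with fixed sink $u'$, together with $V_{-v}(u)\cup\Pi(u',v)\subseteq V_{-v'}(u')$, gives $a(u',v')=f(V_{-v'}(u'),u')\ge f\big(V_{-v}(u)\cup\Pi(u',v),\,u'\big)\ge b(u,v)$. (When $u'=v$ the path $\Pi(u',v)=\{v\}$ is trivial and only set monotonicity is needed.)

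With this claim, and recalling that ``$(u',v')$ below $(u,v)$'' is literally ``$(u,v)$ above $(u',v')$'', all remaining assertions become short computations built from the claim and from $a\le b$: if $(u,v)$ satisfies \textbf{L1} and $(u',v')$ is above it then $a(u',v')\ge b(u,v)\ge a(u,v)>\mathcal{T}$, so $(u',v')$ satisfies \textbf{L1}, which is part~(i); the same chain, under the hypothesis $\mathcal{T}<b(u,v)$ of case~(iii), gives part~(iii)(b). Dually, if $(u,v)$ satisfies \textbf{L2} and $(u',v')$ is below it then the claim (applied to the pair in the opposite order) gives $b(u',v')\le a(u,v)\le b(u,v)\le\mathcal{T}$, so $(u',v')$ satisfies \textbf{L2}, which is part~(ii); and under the hypothesis $a(u,v)\le\mathcal{T}$ of case~(iii) the same inequalities give part~(iii)(c).

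I expect the only genuine work to be the monotonicity claim. Within it, the delicate point is the chained use of path monotonicity: one must verify that the walk from $v$ to $u'$ never exposes a second neighbour of an adjoined node inside the set already built, which is exactly where the tree structure is used, and one must read the ``above/below'' definition carefully enough to be sure the path really is oriented $u,\dots,v,\dots,u',v'$ so that the containment $V_{-v}(u)\cup\{v\}\subseteq V_{-v'}(u')$ is forced. Everything else follows mechanically from properties 1--3.
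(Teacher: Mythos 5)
Your proof is correct and is essentially the paper's argument spelled out in full: the paper dismisses this lemma with the one-line proof ``follows immediately from the definitions and path monotonicity,'' and your key claim ($a(u',v')\ge b(u,v)$ for $(u',v')$ above $(u,v)$, obtained by chaining path monotonicity along $\Pi(v,u')$ and then applying set monotonicity) is exactly the content of that remark. The one loose point is your justification of (iii)(a): during the first peaking phase the working tree \emph{can} contain sinks (added as leaves by earlier open commits), but any edge with a sink inside $V_{-v}(u)$ lies above a previously committed edge and has therefore already been labelled \textbf{L1}, so the ``no sink in $T_{-v}(u)$'' clause of the peaking criterion does hold for every edge to which case (iii) is actually applied.
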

\begin{proof}
Follows immediately from the definitions and path  monotonicity. \qed
\end{proof}

\begin{lemma}
If at anytime during the first peaking phase edge $(u',v')$ satisfies either  {\bf L1} or {\bf L2},   $(u',v')$ will never satisfy the peaking condition anytime later during the first peaking phase.
\label{L1L2 lemma2}
\end{lemma}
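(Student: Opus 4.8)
The plan is to follow the fate of the vertex set $V_{-v'}(u')$ — the component of the working tree containing $u'$ once the edge $(u',v')$ is deleted — as the first peaking phase runs. During that phase the working tree changes only through open commits, and each open commit picks an edge $(x,w)$ satisfying the peaking criterion, deletes $V_{-w}(x)\setminus\{x\}$ from the working tree, and keeps $x$ behind as a sink leaf. So it suffices to understand how a single open commit affects a given edge $(u',v')$ of the working tree.

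The key local claim I would prove is that, if $(u',v')$ is an edge of the working tree just before an open commit at $(x,w)$, then just afterwards one of the following holds: (i) $(u',v')$ is no longer an edge of the working tree; (ii) $V_{-v'}(u')$ now contains a sink of $S$; or (iii) $V_{-v'}(u')$ is unchanged, with $v'$ still the neighbour of $u'$ across the edge — and then, since the working tree is always a connected induced subtree of $\Tin$, the induced subtree on $V_{-v'}(u')$, and hence the values $f(V_{-v'}(u'),u')$ and $f(V_{-v'}(u')\cup\{v'\},v')$, are unchanged as well. I would establish this by a case analysis on the relative position of the two edges $(x,w)$ and $(u',v')$: deleting both edges breaks the working tree into three pieces $P,Q,R$ with $(x,w)$ joining $P$ and $Q$ and $(u',v')$ joining $Q$ and $R$ (plus the degenerate case $\{x,w\}=\{u',v'\}$). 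Enumerating where $x$ and $u'$ lie among $P,Q,R$: if the deleted set $V_{-w}(x)\setminus\{x\}$ is disjoint from $V_{-v'}(u')$ we land in case (iii); if $x\in V_{-v'}(u')$ then the surviving sink $x$ stays on the $u'$-side of $(u',v')$, which is case (ii); and if $V_{-v'}(u')$, in particular $u'$, is entirely swallowed by $V_{-w}(x)\setminus\{x\}$, then the edge disappears, which is case (i). This argument uses only that the working tree stays a connected induced subtree of $\Tin$, the definition of $V_{-w}(x)$, and that the open commit keeps $x$.

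Granting the local claim, the lemma follows by induction on the open commits performed strictly between the moment at which $(u',v')$ satisfies \textbf{L1} (resp.\ \textbf{L2}) and any later moment of the first peaking phase. If some intervening commit triggers case (i), the edge is gone and cannot satisfy the peaking criterion ever again. If some intervening commit triggers case (ii), then from then on $V_{-v'}(u')$ contains a sink of $S$, which the peaking criterion explicitly forbids. If every intervening commit triggers case (iii), then at the later moment $V_{-v'}(u')$ and $V_{-v'}(u')\cup\{v'\}$, hence $f(V_{-v'}(u'),u')$ and $f(V_{-v'}(u')\cup\{v'\},v')$, are exactly what they were when \textbf{L1} (resp.\ \textbf{L2}) held; thus $f(V_{-v'}(u'),u')>\mathcal{T}$ (resp.\ $f(V_{-v'}(u')\cup\{v'\},v')\le\mathcal{T}$). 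In each case one of the two requirements of the peaking criterion, namely $f(V_{-v'}(u'),u')\le\mathcal{T}$ and $f(V_{-v'}(u')\cup\{v'\},v')>\mathcal{T}$, fails, so $(u',v')$ does not satisfy the peaking criterion at that later moment. I expect the case analysis behind the local claim to be the main obstacle — one must be careful to cover every mutual position of the two edges together with the degeneracies — though it is pure tree combinatorics and needs no property of the cost function.
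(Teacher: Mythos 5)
Your proposal is correct and follows essentially the same argument as the paper: track how $V_{-v'}(u')$ can change under the open commits of the phase, and observe that either it is unchanged (so the \textbf{L1}/\textbf{L2} inequality persists and one half of the peaking criterion fails), or a sink is committed inside it (violating the no-sink requirement, permanently, since sinks are never removed during the phase), or the edge is deleted outright. Your version merely spells out the commit-versus-edge case analysis more explicitly than the paper does; no gap.
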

\begin{proof}

First suppose that $(u',v')$ satisfied {\bf L1} at some time. If $T_{-v'}(u')$  never changes during the phase then (\ref{eq:def L1}) will remain satisfied  and $(u',v')$ will never  satisfy the peaking criterion.   $T_{-v'}(u')$ can only  change during the phase if a  sink is committed {\em inside} $V_{-v}(u)$.  But  sinks are never removed during  the phase so if  a sink is placed in $V_{-v}(u)$, $(u',v')$ will still not be able to  satisfy the peaking criterion  during  the phase.

Now suppose that $(u',v')$ satisfies {\bf L2} at some time. If $T_{-v'}(u')$  never changes during the phase then (\ref{eq:def L2}) will remain satisfied  and $(u',v')$ will never  satisfy the peaking criterion nor will any edge below it.   The only way for  $T_{-v'}(u')$ to change in this case is for some sink to be placed {\em above} it and remove i$T_{-v'}(u')$.  But, once it is removed $(u,v)$ will obviously never again satisfy the peaking condition.
 \qed
\end{proof}

At the start of the first peaking phase all edges will be initialized and marked as   
{\bf U}(nknown).   

Whenever  a  $(u,v)$ is tested for the peaking criterion,   one of  the three cases in  Lemma \ref {L1L2 lemma} will occur. If case (i), label all edges $(u',v')$ above   $(u,v)$  as {\bf L1}.  If case (ii),  label all edges $(u',v')$ below   $(u,v)$  as {\bf L2}.  If case (iii) do both  before removing the edge and committing
$T_{-v}(u)$ to $u.$

After  labelling an edge  {\bf L1} use   Breadth-First Search  to label all edges above it as {\bf L1} as well.  If the procedure ever encounters an edge already labelled {\bf L1} it does not continue past that edge (since all of the edges above it were already  labelled {\bf L1}).
Thus the {\em  total} time to mark edges as {\bf L1}  in the phase is $O(n).$  A similar analysis shows that the total time required to mark edges as {\bf L2} in the phase is also $O(n).$

The algorithm will check the edges in $(u,v)$  in a  special order to be  defined below. When checking an edge $(u,v)$   it first checks whether 
it is already marked as {\bf L1} or {\bf L2}.  
If it is, it skips it since   from  Lemma \ref{L1L2 lemma2}, it doesn't satisfy the peaking criterion.
 Only if  $(u,v)$  is  still marked {\bf U} does the algorithm actually run the oracle to evaluate  $f(V_{-v}(u), u)$ and 
$f(V_{-v}(u) \cup {v}, v) $ After  completing the calculation it  marks further edges using Lemma  \ref{L1L2 lemma} and then performs a commit if  required.

\medskip

Recall that the {\em centroid}  $\rho(T)$ of a tree $T$ with $n$ nodes is a node $u$ such that all subtrees falling off of $u$ contain  $\le n/2$ nodes. A centroid exists and can be found in $O(n)$ time \cite{kariv1979algorithmic}.  The algorithm will use a standard recursive centroid decomposition process to specify the  edge checking order.

The process creates two sequences  $\mathcal{F}_i$  and $L_i$,  containing, respectively, forests of trees,  and sets of vertices.   For node $u \in \Vin,$ let $\mathcal{N}(u)$ denote the set of neighbors of $u$ in the full working tree $T= \Tin$. 

\medskip

\par\noindent\underline{Stage $i=0:$}  Set $\mathcal{F}_0 = \{\Tin\}$. 

\medskip

\par\noindent\underline{Stage $i$,\, $i>0:$}
Initialize $L_{i} := \emptyset$ and $\mathcal{F}_{i} = \emptyset$. \\
For every tree $T'$ in the forest $\mathcal{F}_{i-1}$,\\
 \hspace*{.2in}   Remove $\rho(T')$ from $T'$,  resulting in a forest of subtrees\\  
 \hspace*{.2in}    Move  the resulting forest of subtrees  into $\mathcal{F}_{i}$\\
  \hspace*{.2in}    Add  $\rho(T')$ into $L_{i}$. 
 %
\medskip

This processes terminates when  $\mathcal{F}_t$ is empty.  Note  that every $v \in V$ is  chosen as  the centroid of exactly  one tree   in this process so the $L_i$ are a partition of $V.$
Set $W_j = \bigcup_{i=1}^j L_i.$  Note that $W_t = \Vin.$

From the definition of a centroid, trees in $\mathcal{F}_{i}$  all have size $\le n/2^{i}$ so $t = O(\log n)$. Furthermore, the trees in $\mathcal{F}_i$ are disjoint so each stage requires only $O(n)$ time and the entire decomposition uses $O( n \log n)$ time.

The peaking phase will now process the edges in $E$ by examining $i=1,2,\ldots,t$ in order and,  for every $v \in L_i$, checking all $(u,v)$ where $u \in N(v).$
Since  $W_t = V,$  this checks all edges.

 See 
Algorithm \ref{alg:Centroid Processing}. When edge $(u,v)$ is encountered in line 5, the algorithm first determines if it is still marked {\bf U}.  If it is,  the algorithm  saves it in set $E'_i$ and performs the appropriate oracle calls but defers the actual checking of the peaking condition to later
 in the stage.\footnote{$(u,v)$ could have been checked immediately.    The deferment is introduced to simplify the later use of  parametric searching
 in Section \ref{Section: Full Problem}.}.

Lines 12-14  check for the degenerate case in which 
 the current tree centroid $v=\rho(T')$ can support $\Tin.$  Since $T$ can be served by just the one sink $v$, the algorithm terminates.

Otherwise,  the algorithm  examines the results of the oracle calls on edges in $E'_i$, applying Lemma 
\ref {L1L2 lemma}  to appropriately label edges as {\bf L1} or  {\bf L2} and then  applying the Peaking Lemma to the edges that satisfy the criterion to create new sinks and commit blocks to them.

\begin{algorithm}[h]
	\begin{algorithmic}[1]
	\State Perform the tree centroid decomposition process defined in the text
	        \For {$i$ = 1 to $t$}
	        \Comment { Stage $i$. Checks $(u,v)$ for $v \in L_i.$}
	         \State {Set $E'_i = \emptyset$}
	         \For {all $v \in L_i$}   
		   \Comment {Process all centroids $v = \rho(T')$ for some $T' \in \mathcal{F}_{i-1}$ }
	         	\For {all $u \in   \mathcal{N}(v)$ }
	         	     \If  {$ (u, v)$ is marked  {\bf U}}
	         		\State{Add $(u,v)$ to $E'_i$}
	         		\State{Evaluate   $a(u,v) = f(V_{-v}(u), u)$,\   and  \ $b(u,v) = f(V_{-v}(u) \cup \{v\}, v)$}%
		           \EndIf
		           	        \EndFor
		  \EndFor
		  \For {all $v \in L_i$} 
		  \Comment{Checks for degeneracy condition  (\ref{eq:ass w})}
		          \If {$\forall u \in   \mathcal{N}(v)$,  $(u,v) \in E'_i$ and  $b(u,v) \leq \mathcal{T}$ }
	         	     \State {Commit $V$ to $v$ and terminate the algorithm.}
                       \EndIf
		  \EndFor  
		  \For { $(u, v) \in E'_i$}
                   \Comment{Mark edges appropriately and check peaking condition}
		   \State{Apply Lemma \ref {L1L2 lemma} appropriately based on whether  $a(u,v) \le  \mathcal{T}$ and $b(u,v) >  \mathcal{T}$ }
		   \If {$a(u,v) \le \mathcal{T}$  and $b(u,v) > \mathcal{T}$}
		   \Comment {$(u,v)$ satisfies peaking Criterion}
		   \State { apply Peaking Lemma to commit $V_{-v}(u)$ to $u$ }
		   \EndIf
		  \EndFor
%
	         	\EndFor         	
 		  
	\end{algorithmic}
	\caption{Processing the edges, stage by stage}
	\label{alg:Centroid Processing}
\end{algorithm}

 We  now examine the running time of  Algorithm  \ref {alg:Centroid Processing}. We already saw that the total cost of labelling edges is $O(n)$. The remainder of the algorithm with the exception of line 8 can also be implemented in $O(n)$ time.  Now  define
 $$
 C(u,v) = 
 \left\{
 \begin{array}{ll}
  V_{-v}(u)  & \mbox{if $(u,v)$ was  added to $E'_i$ in  line 7.} \\ 
  \emptyset & \mbox{otherwise.}
 \end{array}
 \right.
 $$
This is well defined since every edge appears in at most on $E'_i$.
By definition, the cost of implementing  line 8 for $(u,v)$ is  
  $O(t_\mathcal{A}(|C(u,v)|+1))$.

We first prove  a utility lemma. 
\begin{lemma}
\label{eq:Cempty}
Let $T_1$ and $T_2$ be two trees in $\mathcal{F}_{i-1}$,  $v_1 = \rho(T_1),$  $v_2=\rho(T_2)$ be their centroids and 
 $u_1 \in \mathcal{N}(v_1)$,  $u_2 \in \mathcal{N}(v_2)$.
 Then 
 \begin{equation}
 \label{eq:CC}
 C(u_u,v_1)\cap C(u_2,v_2) = \emptyset.
 \end{equation}
 In the statement of this lemma,  it is possible that $T_1 = T_2.$
 \end{lemma}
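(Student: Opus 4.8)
The plan is to reduce the statement to a structural fact about which directed edges can still carry the mark {\bf U} at the moment stage $i$ processes them, and then to combine this with the vertex–disjointness of the pieces produced by the centroid decomposition.

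First, if either $C(u_1,v_1)$ or $C(u_2,v_2)$ is empty there is nothing to prove, so assume both are nonempty. By the definition of $C(\cdot,\cdot)$ this means each $(u_j,v_j)$ was added to $E'_i$ in line~7 of Algorithm~\ref{alg:Centroid Processing} — i.e. each was still marked {\bf U} when stage $i$ reached it — and $C(u_j,v_j)=V_{-v_j}(u_j)$, where moreover all these subtrees are taken in the \emph{same} working tree, namely the one in force after the commits of stage $i-1$, since in stage $i$ every oracle call (lines 4--11) precedes every commit (lines 17--22). We may also assume the two directed edges are distinct (otherwise the claim is the trivial $C=C$ statement); note that if $\{u_1,v_1\}=\{u_2,v_2\}$ as undirected edges then $u_1=v_2$, $v_1=u_2$, and $V_{-v_1}(u_1)$ and $V_{-v_2}(u_2)=V_{-u_1}(v_1)$ are precisely the two components obtained by deleting that edge, hence disjoint. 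If $T_1=T_2$ then $v_1=v_2$ and $C(u_1,v_1)$, $C(u_2,v_2)$ are two distinct components of the working tree with $v_1$ deleted (distinct because $u_1\neq u_2$), again disjoint. So from now on assume $T_1\neq T_2$; being distinct trees of $\mathcal F_{i-1}$, they are vertex–disjoint subtrees of $\Tin$.

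The decisive observation — call it $(\star)$ — is this: \emph{for any $T'\in\mathcal F_{i-1}$ with centroid $v=\rho(T')$, and any neighbour $u$ of $v$ for which $(u,v)$ is still marked {\bf U} when stage $i$ processes it, one has $V_{-v}(u)\subseteq V(T')$.} Granting $(\star)$, $C(u_1,v_1)\subseteq V(T_1)$ and $C(u_2,v_2)\subseteq V(T_2)$, and $V(T_1)\cap V(T_2)=\emptyset$ finishes the proof. (As a byproduct, since $v$ is a centroid of $T'$ and $|T'|\le n/2^{\,i-1}$, $(\star)$ also yields $|C(u,v)|\le n/2^{\,i}$, which is the size estimate used afterwards for the running‑time analysis.) To prove $(\star)$, classify the neighbours $u$ of $v=\rho(T')$: there are the \emph{inward} ones, for which the component of $\Tin\setminus v$ containing $u$ lies entirely inside $T'$ (so $V_{-v}(u)\subseteq V(T')$ automatically), and \emph{at most one outward} neighbour $u$ — namely $\rho(T'')$ when $v$ is itself the attachment vertex of $T'$, or else the neighbour of $v$ on the path from $v$ to that attachment vertex — for which $V_{-v}(u)\supseteq \Vin\setminus V(T')$. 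Thus the entire content of $(\star)$ is that the outward edge $e^{\mathrm{out}}_{T'}$ of $T'$ is never marked {\bf U} when stage $i$ reaches it. To establish this, write $T'$ as a component of $T''\setminus\rho(T'')$ with $T''\in\mathcal F_{i-2}$, so $b:=\rho(T'')\in L_{i-1}$ and the attachment vertex $a\in V(T')$ is the unique vertex of $T'$ adjacent to $b$; then the edge $(a,b)$ was examined during stage $i-1$. One then traces, via Lemma~\ref{L1L2 lemma} and the breadth‑first propagation of the {\bf L1}/{\bf L2} marks, the consequence of each of the three possible outcomes of examining $(a,b)$ (it satisfies {\bf L1}, it satisfies {\bf L2}, or it is the peaking edge of stage $i-1$), distinguishing the sub‑cases $a=v$ and $a\neq v$, and shows that in every case $e^{\mathrm{out}}_{T'}$ inherits an {\bf L1} or {\bf L2} mark by the end of stage $i-1$; Lemma~\ref{L1L2 lemma2} then guarantees it is still marked when stage $i$ arrives.

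I expect this last step — showing that the outward edge of every $T'\in\mathcal F_{i-1}$ is already marked by the time stage $i$ reaches it — to be the main obstacle. It is the place where one must (a) correctly match the "above/below" relation on edges against the two orientations of $e^{\mathrm{out}}_{T'}$ (these behave differently depending on whether $a=v$, where $e^{\mathrm{out}}_{T'}$ shares the edge $\{a,b\}$, or $a\neq v$, where $e^{\mathrm{out}}_{T'}$ lies inside $T'$ yet still has an "escaping" removed subtree), and (b) keep track of the fact that commits performed during stage $i-1$ shrink the working tree, so that the subtrees occurring in the {\bf L1}/{\bf L2} conditions are the truncated working parts of the trees rather than the combinatorial trees of the decomposition themselves. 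Everything else — the reduction to the case $T_1\neq T_2$, the inward/outward dichotomy, and the disjointness of distinct components of a tree minus a vertex — is routine.
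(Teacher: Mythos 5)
Your reduction of the lemma to the claim $(\star)$ --- that every edge $(u,v)$ with $v=\rho(T')$ still marked {\bf U} at stage $i$ has $V_{-v}(u)\subseteq V(T')$ --- is where the proof breaks: $(\star)$ is false, not merely hard to prove. Take $i=2$, let $b=\rho(\Tin)$, let $T'$ be a component of $\Tin\setminus b$ attached to $b$ at a vertex $a$ with $a=\rho(T')$, and choose the instance so that $f(V(T'),a)>\mathcal{T}$ (so $(a,b)$ satisfies {\bf L1}) while every other component $T''$ of $\Tin\setminus b$ has $f(V(T'')\cup\{b\},b)\le\mathcal{T}$ (so those edges satisfy {\bf L2}). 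The degeneracy check of lines 12--14 does not fire, no edge peaks, and no mark propagates to the reversed edge $(b,a)$: it is not above any {\bf L1} edge nor below any {\bf L2} edge, because its cut-off set $\Vin\setminus V(T')$ is \emph{complementary} to $V(T')\cup\{b\}$ rather than nested with respect to it. Hence $(b,a)$ enters $E'_2$ with $C(b,a)=\Vin\setminus V(T')\not\subseteq V(T')$, so the ``outward'' edge of $T'$ can survive unmarked. (Separately, for $i\ge 3$ a tree of $\mathcal F_{i-1}$ can abut several earlier centroids, so ``at most one outward neighbour'' of $\rho(T')$ is also not true in general.)

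What saves the lemma in this example is that the very marks which fail to kill $(b,a)$ do kill every edge incident to the centroid of each other component $T''$ (the {\bf L1} propagation from $(a,b)$ marks the edges pointing away from $b$, and the {\bf L2} propagation marks those pointing toward $b$), so all competing $C$-sets are empty. Disjointness is thus an interaction between the \emph{pair} of edges, not a per-edge containment, and that is how the paper argues: for two unmarked edges at centroids of distinct trees of $\mathcal F_{i-1}$ it takes a separator $w\in W_{i-1}$ on $\Pi(v_1,v_2)$, uses the fact that the algorithm did not terminate (condition (\ref{eq:ass w})) to exhibit a branch of $w$ with $f(V_{-w}(z)\cup\{w\},w)>\mathcal{T}$, and then shows via Lemmas \ref{L1L2 lemma} and \ref{L1L2 lemma2} that if the two $C$-sets overlapped, one of $(u_1,v_1)$, $(u_2,v_2)$ would already carry an {\bf L1} or {\bf L2} mark. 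Your outline never invokes the degeneracy condition, and without it no argument of the $(\star)$ type can succeed.
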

 
 \begin{proof} 
 If $i=1$ the lemma is trivially true since $\mathcal{F}_{0}= \{T\}$  so  $v_1=v_2$ and thus
 $V_{-v_1}(u_1)$ and $V_{-v_1}(u_2)$ are disjoint.
 
 We therefore assume that $i >1$. We  also assume that $v_1 \not = v_2$ since otherwise   $V_{-v_1}(u_1)$ and $V_{-v_1}(u_2)$ are obviously  disjoint.
 We finally   assume that 
  \begin{equation}
  \label{eq:ass w}
  \mbox{There does not exist  $w  \in  W_{i-1}$ s.t.  
  $\forall z \in \mathcal{N}(w)$,  $f\left(V_{-w}(z) \cup \{w\}, w\right)\leq \mathcal{T}$. }
\end{equation}
This is because if such a $w$ existed then lines 12-14 in Algorithm \ref{alg:Centroid Processing} would have terminated the algorithm before the start of stage $i$.

 Observe that all nodes  in $V$  lie in $\Tin$ during stage  $0$ but by the end of stage $i-1$,  $T_1$ and $T_2$   are  disconnected.  By construction there must exist some node   $w \in  W_{i-1}$  (whose removal disconnected $T_1$ and $T_2$) that  lies on the  path  $\Pi(v_1,v_2)$ connecting $v_1$ and $v_2$.

 \begin{itemize}
 \item  (\ref{eq:ass w})  implies  $ \exists z\in\mathcal N(w)$ satisfying   $f(V_{-w}(z) \cup \{w\}, w) > \mathcal{T}$.  
 \item   If   $v_1 \not\in T_{-v_2}(u_2)$  and
 $v_2\not\in T_{-v_1}(u_1)$  then
  $V{-v_1}(u_1) \cap V{-v_1}(u_1)  = \emptyset $ 
  so  (\ref{eq:CC})  is  trivially true. (Fig.~\ref{lem:Cempty1})
  \item If  either of  $(u_1,v_1)$ or $(u_2,v_2)$ were  labelled {\bf L1} or {\bf L2} at the end of stage $i-1$ then (\ref{eq:CC})   would be trivially true.
 \end{itemize}

 \begin{figure}
 \centerline{\includegraphics[width=3in]{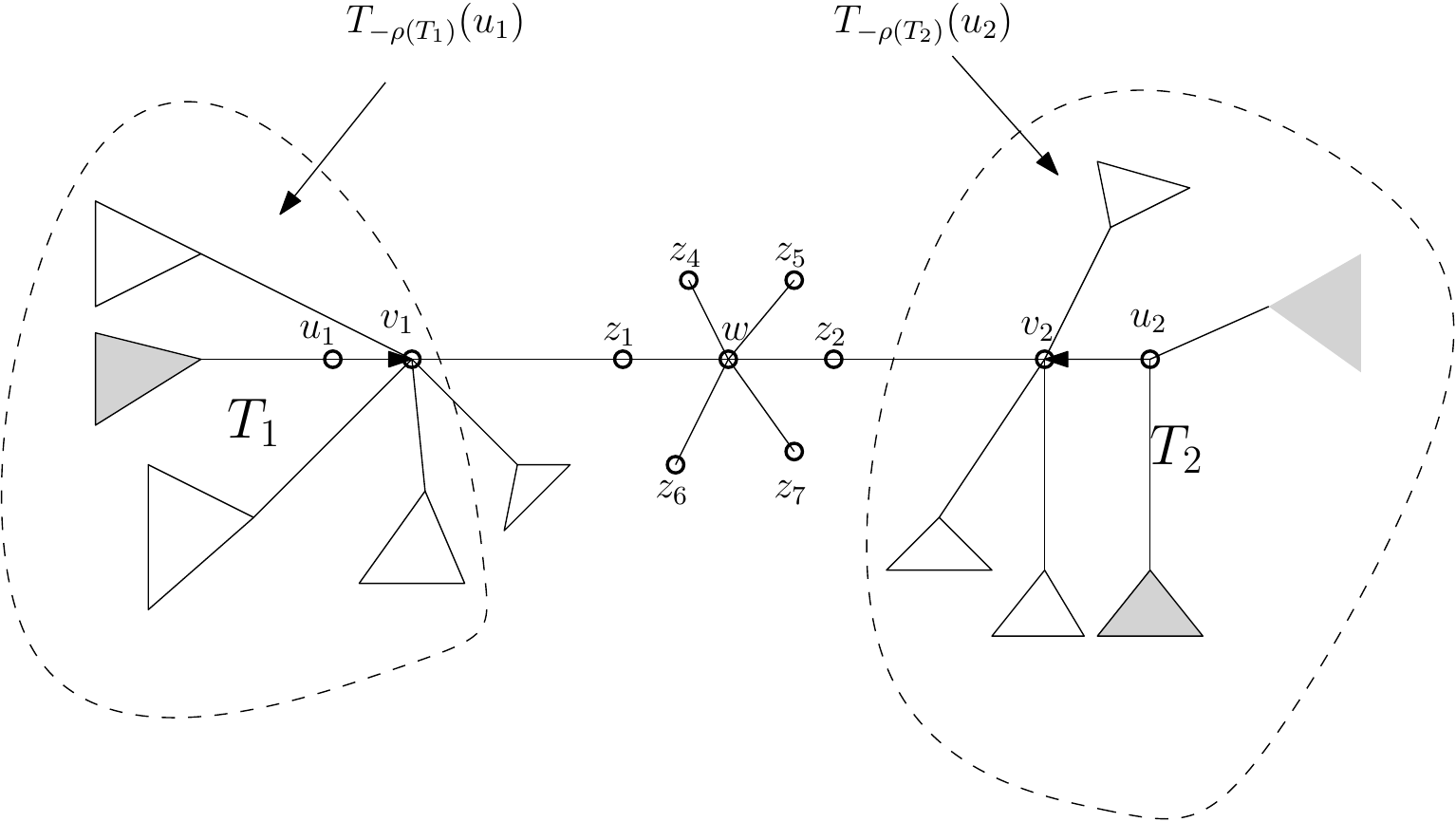}}
 \caption{Case in proof of Lemma \ref{eq:Cempty}  in which $v_1 \not\in T_{-v_2}(u_2)$  and
 $v_2\not\in T_{-v_1}(u_1)$.}
 \label{lem:Cempty1}
 \centerline{\includegraphics[width=3in]{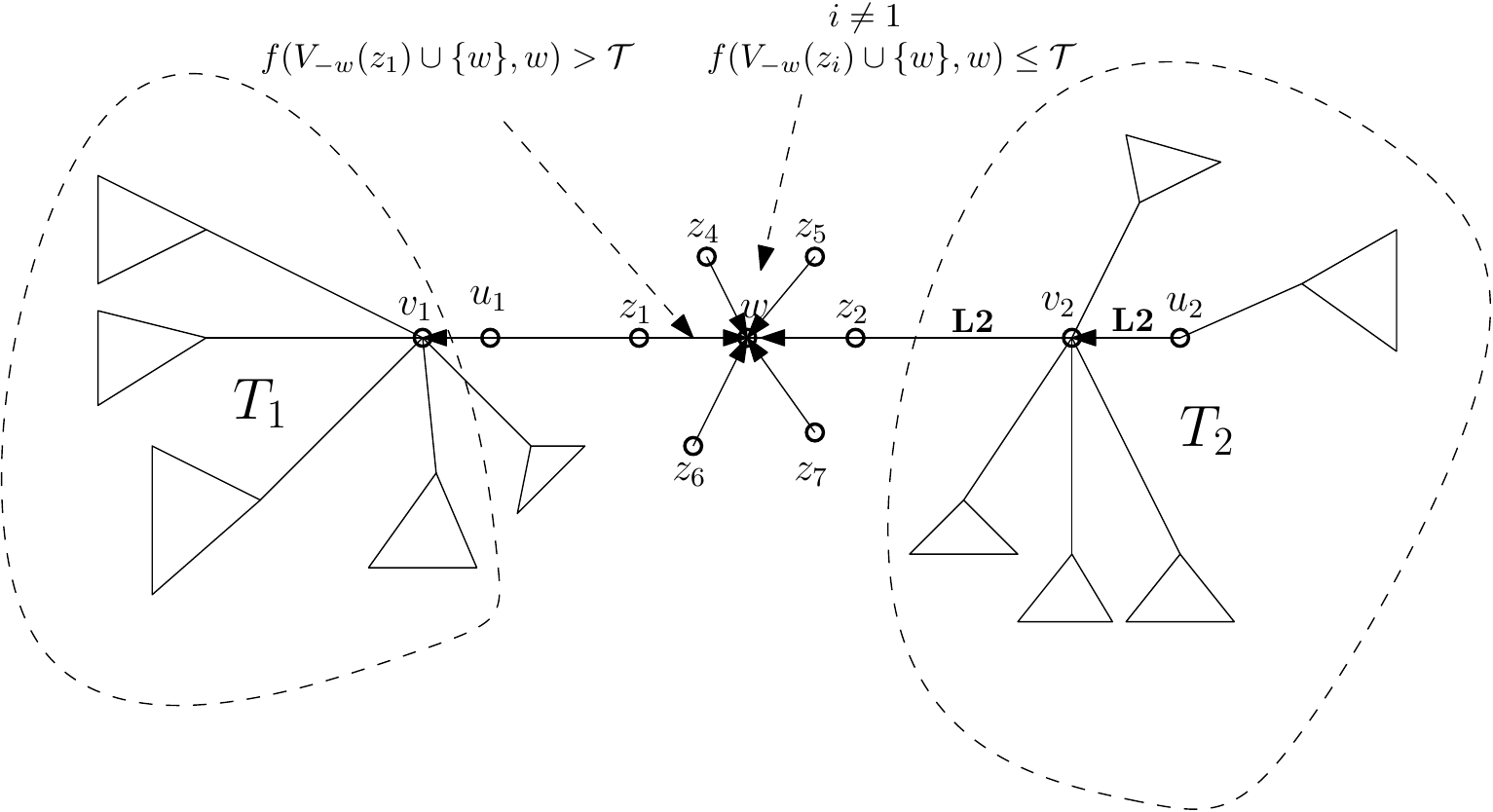}}
 \caption{Case in proof of Lemma \ref{eq:Cempty}  in which $v_1 \not\in T_{-v_2}(u_2)$  and
 $v_2 \in T_{-v_1}(u_1)$.}
 \label{lem:Cempty2}
 \end{figure}

If the Lemma is incorrect we may therefore assume that neither $(u_1,v_1)$ or $(u_2,v_2)$ were  labelled {\bf L1} or {\bf L2} at the end of stage $i-1$ 
and at least one of  $v_1 \in T_{-v_2}(u_2)$  or $v_2 \in T_{-v_1}(u_1)$  is true.  WLOG assume that $v_2 \in T_{-v_1}(u_1)$.

(Fig.~\ref {lem:Cempty2})
Label the neighbors of $w$ so that $z_1$ is on the path from $w$ to $v_1$,  $z_2$ is on the path from $w$ to $v_2$ and $z_3, \ldots, z_s$ are the others (if they exist).

 Note that, if, for any $j >1$,  $f(V_{-w}(z_j) \cup\{w\}, w) > \mathcal{T}$ then since $(u_1,v_1)$ is above  $(z_j,w)$,  $(u_1,v_1)$ would have been labelled {\bf L1} by the end of stage $i-1$  which we assumed was not the case.  Thus, for all $j > 1$,  $f(V_{-w}(z_j) \cup\{w\}, w) \le \mathcal{T}$.  (\ref{eq:ass w}) then implies
  $f(V_{-w}(z_1) \cup\{w\}, w) > \mathcal{T}$.

Next  note that if  $v_1 \in T_{-v_2}(u_2)$ the exact same argument would show that  for all $j \not=2$,  $f(V_{-w}(z_j) \cup\{w\}, w) \le \mathcal{T}$, otherwise  $(u_1,v_1)$  would have been labelled  
{\bf L1} by the end of stage $i-1$ , which we assumed was not the case.
In particular this  would imply $f(V_{-w}(z_1) \cup\{w\}, w) \le \mathcal{T}$, contradicting the result of the previous paragraph.

Thus  $v_2 \in T_{-v_1}(u_1)$   and $v_1 \not\in T_{-v_2}(u_2)$.  But this and the fact that $f(V_{-w}(z_2) \cup\{w\}, w) \le \mathcal{T}$ immediately imply that $(u_2,v_2)$ which is below  $(z_2,w)$ would have been labelled {\bf L2} by the end of stage $i-1$, contradicting  our assumptions.
 \qed 
 \end{proof}

We now prove
 \begin{lemma} \ 
 
 \begin{enumerate}
 \item  Algorithm  \ref{alg:Centroid Processing} works in  $O(\log n)$ stages
with the  specific oracle calls made during stage $i$ only dependent upon the results of the oracle calls made in stages $j <i$  and not on the results of any oracle calls during  stage $i.$
  \item   In each stage the total work performed by the oracle calls is  
 $O(t_{\mathcal{A}}(n)).$
 \item 
 The total amount of work performed by Algorithm \ref {alg:Centroid Processing} is
$$O\left( t_{\mathcal{A}}(n)\, \log n \right).$$
 \end{enumerate}
 \label{lem:stage}
  \label{lem:peaking centroid}
 \end{lemma}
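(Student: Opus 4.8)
The plan is to prove the three claims in order, where the first two carry the content and the third is bookkeeping.

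First, for (1), I would identify stage $i$ of Algorithm~\ref{alg:Centroid Processing} with level $i$ of the recursive centroid decomposition. Since every tree in $\mathcal{F}_i$ has at most $n/2^i$ vertices, the decomposition bottoms out after $t = O(\log n)$ levels, so there are $O(\log n)$ stages. For the dependency assertion I would rely on the deliberately deferred structure of the algorithm: in stage $i$ the set $E'_i$ is assembled, and all of the oracle values $a(u,v) = f(V_{-v}(u),u)$ and $b(u,v) = f(V_{-v}(u)\cup\{v\},v)$ are evaluated, in lines~4--11 --- strictly before any edge is relabelled \textbf{L1}/\textbf{L2} or any block is committed in that stage (which happen only in lines~17--21). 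An edge enters $E'_i$ exactly when it is still marked \textbf{U}, and that marking is determined solely by the relabellings and commits of stages $j<i$; hence the family of oracle calls issued in stage $i$ depends only on the outcomes of stages $j<i$. (This is exactly why the peaking check is deferred.)

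Next, for (2), I would fix a stage $i$ and bound its oracle cost by $\sum_{(u,v)\in E'_i} O\bigl(t_\mathcal{A}(|C(u,v)|+1)\bigr)$ with $C(u,v)=V_{-v}(u)$. The crucial ingredient is Lemma~\ref{eq:Cempty}, which guarantees that the sets $\{C(u,v):(u,v)\in E'_i\}$ are pairwise disjoint subsets of $V$, so their cardinalities sum to at most $n$. Then I would peel the $+1$ off each argument using the uniform estimate $t_\mathcal{A}(m+1)=O(t_\mathcal{A}(m))$, paying an additive $O(|E'_i|)$ overall, and apply asymptotic subadditivity $\sum_j t_\mathcal{A}(n_j)=O\bigl(t_\mathcal{A}(\sum_j n_j)\bigr)$ to the multiset of cardinalities $|C(u,v)|$, obtaining $O(t_\mathcal{A}(n))$. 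The leftover $O(|E'_i|)$ term is absorbed since $|E'_i|\le\sum_{v\in L_i}|\mathcal{N}(v)| = O(n)$ and $t_\mathcal{A}(n)=\Omega(n)$; the $b(u,v)$ calls cost the same up to this same $+1$.

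Finally, for (3), I would account for the non-oracle work: the centroid decomposition is $O(n)$ per level and $O(n\log n)$ overall; the \textbf{L1}/\textbf{L2} relabelling is $O(n)$ over the whole phase because each breadth-first marking halts on meeting an already-labelled edge; and the remaining bookkeeping of lines~3--21 together with all the commits touches each vertex and each directed edge $O(1)$ times, hence $O(n)$ in total. Combining the $O(t_\mathcal{A}(n))$ per-stage oracle cost over the $O(\log n)$ stages with the $O(n\log n)$ non-oracle work gives $O(t_\mathcal{A}(n)\log n)+O(n\log n)=O(t_\mathcal{A}(n)\log n)$, using $t_\mathcal{A}(n)=\Omega(n)$. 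The step I expect to need the most care is the manipulation in (2): we are only given $t_\mathcal{A}(n+1)=O(t_\mathcal{A}(n))$ and subadditivity, not $t_\mathcal{A}(2n)=O(t_\mathcal{A}(n))$, so the $+1$'s must be removed one edge at a time with their total cost controlled by $|E'_i|=O(n)$ before subadditivity is invoked --- and the disjointness from Lemma~\ref{eq:Cempty} (valid even when the two centroids coincide) is precisely what makes subadditivity yield $t_\mathcal{A}(n)$ rather than something larger.
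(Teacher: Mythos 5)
Your proposal is correct and follows essentially the same route as the paper: part (1) from the deferred-evaluation structure of the stages, part (2) from the pairwise disjointness of the sets $C(u,v)$ within a stage (Lemma~\ref{eq:Cempty}) combined with asymptotic subadditivity, and part (3) by adding the $O(n\log n)$ non-oracle bookkeeping over the $O(\log n)$ stages. Your extra care in peeling off the $+1$ in $t_{\mathcal{A}}(|C(u,v)|+1)$ is harmless and, if anything, slightly more explicit than the paper's one-line application of $t_{\mathcal{A}}(m+1)=O(t_{\mathcal{A}}(m))$.
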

 \begin{proof}
 (1) is from the definition of the algorithm.
 
For (2) Let $E_i$ be the edges processed in stage $i$, i.e., $(u,v) \in E_i$ if $v \in L_i.$ By definition,  $E'_i \subseteq E_i$.
From Lemma \ref {eq:Cempty} no vertex $w \in V$ can appear in more than one set $C(u,v)$  for  $(u,v) \in E_i.$  Thus
 $$\sum_{(u,v) \in E'_i}  |C(u,v)|  \le \sum_{(u,v) \in E_i} |C(u,v)|  \le  n.$$
 So, by asymptotic subadditivity, the total amount of work done in stage $i$ 
in  line 8  will be
 \begin{eqnarray*}
 O\left(\sum_{(u,v) \in E_i} t_\mathcal{A}(|C(u,v)|+1)\right) &=&
 O\left(\sum_{(u,v) \in E_i}O\left( t_\mathcal{A}(|C(u,v)|)\right) \right)\\
& =&  O\left(t_{\mathcal{A}} (n)\right)
 \end{eqnarray*}
 proving (2).
 
 (3) then follows from the fact that 
$t = O(\log n)$, the remainder of the work in  Algorithm \ref{alg:Centroid Processing} outside of line 8 is $O(n)$   and only another 
 $O(n \log n)$ time is  required for the decomposition.
\qed
 \end{proof}

 \subsubsection{Later Peaking Phases  by  Binary Search.}
 \label{subsec:peaking bin}

\begin{algorithm}[t]
	\begin{algorithmic}[1]
		\State \hspace*{.001in}
                 \Comment {$(u,v)$ satisfies the reaching condition in $T_H(S)$ and $V_{-v}(u)$ has been removed \quad}\\
                 \Comment{Checks the scenarios from Lemma \ref{lem:hubevolution}. $u',h,v', v'',h'$ as defined in Lemma \ref{lem:hubevolution}.}
		\State \ 
		\Procedure {$Psearch$}{$v, b$}
		\State  {Create the path  $a = v_1, v_2,\ldots, v_t = b$ in $T_H(S)$}
                  \State{Binary search to find smallest   $i \in [1,t-1]$ such that $f(V_{-v_{i+1}}(v_{i})\cup \{v_{i+1}\}, v_{i+1}) >{ \mathcal T}$}
                  \State{$Return(v_i,v_{i+1})$}
		\EndProcedure
		\State \ 
		\State \ 
                 \If {$h \not= v$}
		\State {$ \alpha = f(V_{-h}(v') \cup\{h\},h)$}
	       \EndIf
	        \If {($h \not= v$ AND $\alpha > \mathcal T$)}
		   \Comment {Scenario 1}
		   \State   $(u',p(u')) =PSearch(v,r)$	
		   \State  { Commit $V_{-p(u')}(u')$ to $u'$}
		\ElsIf {($h =r$ AND $r$ has exactly two children in $T_{H(s)}$)}
		   \Comment {Scenario 2c}
		\State  {$\beta = f(V_{-h'}(v') \cup\{h'\},h') $}
			\If {($\beta > {\mathcal T}$)}
			   \Comment {Scenario 2ci}
				 \State $(u',p(u')) =PSearch(v,h')$	
		   		\State  {Commit $V_{-p(u')}(u')$ to $u'$}
		   		      \If {$h'$ is a sink   and  $S = \{u',h'\}$}
		   		         \Comment {Scenario 2civ, first part}
		   		         \State {Commit$(\{u'\},u')$ and Commit$(\{h'\},h')$ and terminate}
		   		     \EndIf
                     \ElsIf {$h'$ is a sink}
                     \Comment{Scenario 2ciii}
                     \State{Commit$(V,h')$ and terminate}
			\EndIf
	        \EndIf
	\end{algorithmic}
	\caption{Peaking Phase after Reaching Phase}
	\label{alg:PeakingAfter Reaching}
\end{algorithm}

All later peaking phases start immediately after a reaching phase has  completed by finding an edge  $(u,v)$ satisfying the  reaching criterion.

Lemma \ref{lem:hubevolution}  splits this into seven  scenarios. In  Cases 2ciii and the first  half of Case 2civ the algorithm terminates. In Cases 2a, 2b and 2ci  the resulting tree remains RC-viable and therefore the peaking phase can be skipped. In the remaining  Cases
   1, 2cii and the second half of  Case 2civ  it is known that only one new edge $(u,v)$ might now satisfy the peaking criterion and that edge is on  the path
  $\Pi(v,h)$ or $\Pi(v,h')$ (defined in the Lemma). 
By path and set monotonicity, if this edge exists, it can be found   by binary searching on the path.   This is formalized in Algorithm \ref{alg:PeakingAfter Reaching}

The procedure performs $O(n)$ book-keeping, $O(1)$ oracle calls  and, possibly, one binary search requiring  an additional
$O(\log n)$ oracle calls.  Thus
\begin{lemma} 
Each individual peaking phase after the first one can be implemented using only $O(\log n)$ oracle calls and 
 $O(t_\mathcal{A} (n) \, \log n )$ time.
\end{lemma}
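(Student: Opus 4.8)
The plan is to prove the statement by directly inspecting Algorithm~\ref{alg:PeakingAfter Reaching} and reading off the oracle calls against the seven scenarios of Lemma~\ref{lem:hubevolution}. First I would dispose of the cases in which the peaking phase does essentially nothing. By Lemma~\ref{lem:hubevolution}, in Scenarios 2a, 2b and 2ci the post-reaching working tree is already RC-viable, so no edge can satisfy the peaking criterion and the phase performs only $O(n)$ bookkeeping (reconstructing the hub tree is charged to the subsequent hub-tree step, not to this phase); in Scenario 2ciii and the first branch of Scenario 2civ the algorithm halts after a constant number of commits, having made at most the $O(1)$ oracle evaluations of $\alpha$ and $\beta$. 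This leaves only Scenario~1, Scenario~2cii, and the second branch of Scenario~2civ as the cases that carry real work.

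In each of those three cases Lemma~\ref{lem:hubevolution} guarantees that at most one edge of the working tree can now satisfy the peaking criterion and that, if it exists, it lies on the hub-tree path $\Pi(v,h)$ (Scenario~1) or $\Pi(v,h')$ (Scenarios 2cii, 2civ). Writing this path as $v = v_1, v_2, \ldots, v_t$ with $v_{i+1} = p(v_i)$ and $t \le n$, the phase must locate the smallest index $i$ with $b_i := f\!\left(\bar V_{-v_{i+1}}(v_i)\cup\{v_{i+1}\},\, v_{i+1}\right) > \mathcal{T}$, which is exactly what the procedure $Psearch$ does; correctness of committing on the edge it returns is already furnished by Lemma~\ref{lem:hubevolution}, so the only thing left to justify is that $O(\log n)$ probes suffice. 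The step I expect to be the crux is verifying the monotonicity that makes this binary search legitimate: I would show $b_1 \le b_2 \le \cdots \le b_{t-1}$ by combining the two monotonicity axioms. Since $\bar V_{-v_{i+1}}(v_i)\cup\{v_{i+1}\} \subseteq \bar V_{-v_{i+2}}(v_{i+1})$ and $v_{i+1}$ lies in both sets, set monotonicity gives $b_i \le f\!\left(\bar V_{-v_{i+2}}(v_{i+1}),\, v_{i+1}\right)$; and because the sink $v_{i+2}$ attaches to $\bar V_{-v_{i+2}}(v_{i+1})$ precisely at $v_{i+1}$, path monotonicity gives $f\!\left(\bar V_{-v_{i+2}}(v_{i+1}),\, v_{i+1}\right) \le f\!\left(\bar V_{-v_{i+2}}(v_{i+1})\cup\{v_{i+2}\},\, v_{i+2}\right) = b_{i+1}$. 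Hence ``$b_i > \mathcal{T}$'' is a monotone predicate in $i$, so its first true index (or the certificate $b_{t-1} \le \mathcal{T}$ that no qualifying edge exists, which Lemma~\ref{lemma:PCRecurseInto} shows is the right stopping signal) is found with $O(\log t) = O(\log n)$ evaluations of $b_i$, each a single oracle call.

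Finally I would total the bill. Identifying $h, v', h', v''$, building the relevant path, performing the single open commit, and splicing the committed block out of the working tree are each $O(n)$ elementary work; the direct evaluations of $\alpha$ and $\beta$ are $O(1)$ oracle calls; and the one binary search adds at most $O(\log n)$ oracle calls. Thus every peaking phase after the first uses $O(\log n)$ oracle calls, and since a single oracle call costs $t_{\mathcal{A}}(n)$ with $t_{\mathcal{A}}(n) = \Omega(n)$ absorbing the $O(n)$ bookkeeping, the running time of the phase is $O\!\left(t_{\mathcal{A}}(n)\log n\right)$, as claimed. Apart from the monotonicity argument, the whole proof is a mechanical case analysis driven by Lemma~\ref{lem:hubevolution} and an accounting of Algorithm~\ref{alg:PeakingAfter Reaching}.
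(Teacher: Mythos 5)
Your proposal is correct and follows essentially the same route as the paper: dispose of the scenarios of Lemma~\ref{lem:hubevolution} in which the phase is trivial or terminates, then in Scenarios 1, 2cii and the second branch of 2civ locate the unique candidate edge on the known path by binary search, with the cost accounting giving $O(1)+O(\log n)$ oracle calls plus $O(n)$ bookkeeping absorbed by $t_{\mathcal{A}}(n)=\Omega(n)$. The only difference is that you spell out the monotone chain $b_1\le b_2\le\cdots\le b_{t-1}$ via set and path monotonicity, which the paper merely asserts in one line; that elaboration is accurate and strengthens rather than changes the argument.
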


\subsection{Creating and maintaining the hub tree.}
\label{subsec:new reaching bs}

\begin{algorithm}[t]
	\begin{algorithmic}[1]
		\State { }
                 \Comment     {The structure  $T_{H(S)}$  with root $r$ is given.  $s\in S$ is a known sink. \hspace*{.3in} }
		\Procedure {$Rsearch$}{$s, r$}
		\State  {Create the path  $s = v_1, v_2,\ldots, v_t = r$ in $T_H(S)$}
                  \State{Binary search to find largest   $i \in [1,t-1]$ such that $f(BP(v_i,s),s)  \le { \mathcal T}$}
                  \State{$Return(v_i)$}
		\EndProcedure
	
	\end{algorithmic}
	\caption{Binary Search in Reaching Phase}
	\label{alg:ReachingAfterPeaking}
\end{algorithm}

At the start of each reaching phase the algorithm must construct the appropriate hub tree.  This entails identifying an appropriate root $r$,  the hub nodes $V_{H}(S)$ and, for each hub node $v,$ pointers from  $v$  to its children and  to $p(v)$ and $p_H(v)$.  In addition,  the  sink set  $S(v)$ must be calculated for each node $v$.

At the completion of the first peaking phase,  the first hub tree must be built from scratch.  Everything except for the calculation of the  $S(v)$ can be easily done in $O(n)$ time.
Assume the hub-tree structure has been built and let $s \in S$ be any sink. $f(BP(u,s),s)$ is a non-decreasing  function as $u$ moves up the tree path  $\Pi(s,r)$.  Thus, 
 a binary search using 
 $O(\log n)$  oracle $\mathcal{A}$ calls  finds the highest node $u$ on $\Pi(s,r)$  satisfying  $f(BP(u,s), s) \le \mathcal{T}$.  This is shown in Algorithm \ref{alg:ReachingAfterPeaking} which uses
$O(t_\mathcal{A} (n) \, \log n )$ time.

After finding $u$ the algorithm walks up the path $\Pi(s,u)$ adding $s$ to every node on the path on this path.
  Since $|S| \le k$ this can be done using  a total of $O(k \log n)$ oracle calls and  $O(nk)$ extra time (for walking up all of the paths and creating the lists).
  For each $u,$   we maintain the list $S(u)$ of sinks  partitioned into sublists;  each sublist is associated with the hub child of $u$ that contains those sinks. Combining all of the above,  the time required for constructing the first hub tree is  $O(n k + k \log n\, t_{\mathcal{A}}(n))=   O (kt_{\mathcal{A}}(n)).$

At the start of every subsequent reaching phase, Lemma \ref{lem:hubevolution} shows that the hub tree could only have changed in a very constrained way from the previous hub tree.  After the closed commits of the previous reaching stage at most one new sink could have been  added (with a corresponding subtree removed) in the peaking stage. New edges are never added to the hub tree; once an edge is removed from the hub tree it never returns. The root can only  change in very restricted  circumstances. The structure of the new  hub tree  can easily be constructed from the old one in $O(n)$ time.  

After the new hub tree is built,  the  $S(v)$ lists need to be updated using  (\ref{eq:barbarSnew}) in  Lemma  \ref{lem:hubevolution}.  First, remove from $S(v)$ all  sinks that were committed in the last reaching phase. Since one sink can be removed in $O(n)$ time and at most $k$ sinks need to be removed this uses  $O(nk)$ time over the entire algorithm.
Finally, if a sink $s$ was created in the preceding peaking phase (according to Lemma  \ref{lem:hubevolution}  at most one such sink can be created) it needs to be added to the appropriate $S(v)$ lists.
  This can be done similarly as in the construction of the first hub tree, by calling $Rsearch(s,r)$, using  $O(n)$ time plus $O(\log n)$ calls to the oracle $\mathcal{A}$.  Since at most $k$ sinks can be added, the total work performed by the algorithm to create the hub tree at the start of each reaching phase {\em  taken over  the entire algorithm}  is $O(k t_{\mathcal{A}}(n) \, \log n)$ using $O (k \log n)$ oracle calls.

\subsection{Implementing the Reaching  Phase}
\label{subsec:reaching imp}
Assume that the hub tree is given along with the lists $S(u)$ for each node in the hub tree as introduced in Definition \ref{def: HT stuff}.
The self-sufficiency tests in Section \ref{subsec: ss testing}can now be restated  in terms of $S(u)$. 

\begin{lemma}
Let $v \in V_{H(S)}$ be  a non-hub node, $u$ its unique descendent in $T_{H(S)}$ and $T(v)$  the subtree of $T$ rooted at $v$.
If $T_{-v}(u)$ is recursively self-sufficient then one of the following two cases must occur:
 \begin{itemize}
\item[(i)]  $|S(v)| = 0$ and  $(u,v)$ satisfies the reaching criterion.
\item[(ii)]  $|S'(v)| > 0$ and $T(v)$ is recursively self-sufficient with every sink in $S(p(v))$ as a witness to its self sufficiency.
\end{itemize}
\label{lem: nr2}
\end{lemma}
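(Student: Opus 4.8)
The plan is to read Lemma~\ref{lem: nr2} as the specialization of Corollary~\ref{corollary: SStesting} to a non-hub node — so that $v$ has the single child $u$ in $T_{H(S)}$ ($j=1$) — rephrased through the sets $S(v)$, with an added identification of the witnesses in case (ii). First I would record the structural facts about a non-hub $v$: since $v$ is not a hub, $u$ is its only child in $T_{H(S)}$, every $T$-neighbour of $v$ other than $p(v)$ and $u$ starts an outstanding branch, $v$ is not itself a sink, and outstanding branches contain no sinks; hence $V(v)=BP(v,v)\cup V_{-v}(u)$ and $S\cap V(v)=S\cap V_{-v}(u)$. I would also note that recursive self-sufficiency of $T_{-v}(u)$ gives, by taking $w=u$ in the definition, that $T_{-v}(u)=T(u)$ is itself self-sufficient, and that $T(v)$ inherits RC-viability from the working tree (the reaching phase is only entered when $T$ is RC-viable), so Lemma~\ref{lemma:RecursiveSS} and Corollary~\ref{corollary: SStesting} are applicable at $v$.

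The dichotomy is then on $|S(v)|$, and it merely translates the two alternatives of Corollary~\ref{corollary: SStesting}: ``for all sinks $s\in S\cap V_{-v}(u)$, $f(BP(v,s),s)>\mathcal{T}$'' is exactly ``$|S(v)|=0$'', and its negation is ``$|S(v)|>0$''. When $|S(v)|=0$, Corollary~\ref{corollary: SStesting}(i) gives that $(u,v)$ satisfies the reaching criterion, which is conclusion (i). When $|S(v)|>0$, picking any $s\in S(v)$ and applying Lemma~\ref{lemma:RecursiveSS}(1) with child $u$ — its hypotheses being recursive self-sufficiency of $T_{-v}(u)$ and $f(BP(v,s),s)\le\mathcal{T}$, i.e.\ that $v$ can evacuate to $s$ — shows that $BP(v,v)\cup T_{-v}(u)=T(v)$ is recursively self-sufficient, which is conclusion (ii). For the witness statement I would observe that, by definition, a sink is a witness to Lemma~\ref{lemma:RecursiveSS} for $T(v)$ precisely when it lies in $S\cap V_{-v}(u)$ and $v$ can evacuate to it, i.e.\ precisely when it lies in $S(v)$; and every sink of $S(p(v))$ that lies in $T(v)$ lies in $S(v)$, since $BP(v,s)\subseteq BP(p(v),s)$ and $f$ is set-monotone. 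Finally the two cases are exhaustive and mutually exclusive, so exactly one holds.

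The step that needs real work — and the one I would write out in full, since in the excerpt Corollary~\ref{corollary: SStesting} is only asserted as ``immediate'' — is the reaching-criterion condition that $V(v)=BP(v,v)\cup V_{-v}(u)$ is not self-sufficient when $|S(v)|=0$. I would argue by contradiction: a partition $\{P_s\}_{s\in S\cap V(v)}$ witnessing self-sufficiency of $V(v)$ puts $v$ in some block $P_{s_0}$ with $s_0\in S\cap V_{-v}(u)$; since $P_{s_0}$ is a connected subtree containing $v$ and $s_0$ it contains the whole path $\Pi(v,s_0)$, and then for any node $x$ in an outstanding branch hanging off a node $\pi$ of that path, the block containing $x$ also contains $\pi$, because that block's internal path from $x$ to its own sink must exit the (sink-free) branch through $\pi$; hence that block coincides with $P_{s_0}$ by disjointness, since $\pi\in P_{s_0}$. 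Thus $BP(v,s_0)\subseteq P_{s_0}$, whence $f(BP(v,s_0),s_0)\le f(P_{s_0},s_0)\le\mathcal{T}$ by set monotonicity and $s_0\in S(v)$, contradicting $|S(v)|=0$. This ``absorption of outstanding branches into a single block'' argument — relying on sink-freeness of outstanding branches, connectedness of the blocks, and their pairwise disjointness — is the main obstacle; everything else is notation-chasing against Lemma~\ref{lemma:RecursiveSS} and Corollary~\ref{corollary: SStesting}.
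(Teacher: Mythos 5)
Your proposal is correct and follows the paper's route exactly: the paper proves this lemma in one line as ``a restatement of Corollary~\ref{corollary: SStesting} for the case where $v$ has the single hub-child $u$,'' which is precisely your translation of the corollary's dichotomy into the condition $|S(v)|=0$ versus $|S(v)|>0$. Your additional ``absorption of outstanding branches'' argument correctly supplies the non-self-sufficiency step that the paper leaves implicit inside Corollary~\ref{corollary: SStesting}, but it is extra detail rather than a different approach.
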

\begin{proof}
This lemma is essentially a restatement of Corollary \ref{corollary: SStesting} rewritten for this special case in which $v$ only has the one child  $u$ in $T_{H(S)}.$   \qed
\end{proof}

\begin{lemma}
Let $v$ be  a non-sink hub in $T_{H(S)}$, $u_1,\ldots,u_t$ be its hub-children and all the $T_{-v}(u_i)$ are recursively self-sufficient.
Then one of the following two cases must occur:
\begin{itemize}
\item[(i)]
$\exists i,$  such that  $S(u_i) \cap S(v) = \emptyset$ \\
$\Rightarrow$  $(u_i,v)$ satisfies the reaching criterion.
\item[(ii)]
$\forall i,$  $S(u_i) \cap S(v) \not= \emptyset$ \\
 $\Rightarrow$ $T(v)$ is recursively self-sufficient with  every sink in $S(v)$ as a witness to its self sufficiency.
\end{itemize}
\label{lem: nr3}
\end{lemma}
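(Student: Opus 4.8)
The plan is to obtain Lemma~\ref{lem: nr3} directly from Corollary~\ref{corollary: SStesting} together with a single set-theoretic identity relating the sink lists $S(\cdot)$ to the quantifiers appearing in that corollary. The reaching phase is only entered when the current working tree $T$ is RC-viable, so $T(v)$ is an RC-viable rooted subtree of $T$; combined with the hypothesis that every $T_{-v}(u_i)$ is recursively self-sufficient, Corollary~\ref{corollary: SStesting} applies to $v$ with the children $u_1,\dots,u_t$, which here are read as the children of $v$ in $T_{H(S)}$ (equivalently the nodes $u_v(s)$ of Definition~\ref{def: HT stuff}, since $p(u_v(s))=v$). That corollary gives a dichotomy: either some $(u_i,v)$ satisfies the reaching criterion, or $T(v)$ is recursively self-sufficient, the dividing line being whether, for every $i$, there is a sink $s\in S$ lying in $T_{-v}(u_i)=T(u_i)$ with $f(\mathrm{BP}(v,s),s)\le\mathcal{T}$.

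The key step is the identity
\[ S(u_i)\cap S(v) \;=\; S(v)\cap V(u_i) \;=\; \{\,s\in S\cap V(u_i)\,:\,f(\mathrm{BP}(v,s),s)\le\mathcal{T}\,\} \]
for each child $u_i$ of $v$ in $T_{H(S)}$. The second equality is just the definition of $S(v)$ intersected with $V(u_i)$. In the first equality, $S(u_i)\cap S(v)\subseteq S(v)\cap V(u_i)$ is immediate from $S(u_i)\subseteq V(u_i)$. For the reverse inclusion take $s\in S(v)$ with $s\in V(u_i)$; since $u_i$ is the child of $v$ on the path to $s$ we have $\Pi(u_i,s)\subseteq\Pi(v,s)$, so every outstanding branch attached to a node of $\Pi(u_i,s)$ is also attached to a node of $\Pi(v,s)$, whence $\mathrm{BP}(u_i,s)\subseteq\mathrm{BP}(v,s)$; both are subtrees containing $s$, so set monotonicity (property~2) yields $f(\mathrm{BP}(u_i,s),s)\le f(\mathrm{BP}(v,s),s)\le\mathcal{T}$, i.e. $s\in S(u_i)$, hence $s\in S(u_i)\cap S(v)$.

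Feeding this identity into the two alternatives of Corollary~\ref{corollary: SStesting} completes the dichotomy: case~(i) of the corollary, ``$\exists i$ such that no sink of $S$ in $T_{-v}(u_i)$ satisfies $f(\mathrm{BP}(v,s),s)\le\mathcal{T}$'', becomes ``$\exists i$ with $S(u_i)\cap S(v)=\emptyset$'', giving case~(i) of the lemma; case~(ii) becomes ``$\forall i$, $S(u_i)\cap S(v)\ne\emptyset$'', giving that $T(v)$ is recursively self-sufficient, which is case~(ii) of the lemma. For the witness claim in case~(ii): since $v$ is not a sink, each sink of $S(v)$ lies in $V(u_i)$ for exactly one $i$, so $S(v)=\bigcup_i\bigl(S(v)\cap V(u_i)\bigr)$; fixing $s\in S(v)$ and letting $u_i=u_v(s)$, we have that $T_{-v}(u_i)$ is recursively self-sufficient and $v$ can evacuate to $s$ (that is exactly $f(\mathrm{BP}(v,s),s)\le\mathcal{T}$), so by Lemma~\ref{lemma:RecursiveSS}(1) the sink $s$ is a witness to Lemma~\ref{lemma:RecursiveSS} for $T(v)$. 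As $s\in S(v)$ was arbitrary, every sink of $S(v)$ is such a witness.

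The one non-routine point, and hence the step I expect to be the main obstacle, is the bulk-path nesting $\mathrm{BP}(u_i,s)\subseteq\mathrm{BP}(v,s)$ and checking that both sides are genuine subtrees containing $s$, so that set monotonicity may be applied; everything past that is a transcription of Corollary~\ref{corollary: SStesting} and of the witness convention attached to Lemma~\ref{lemma:RecursiveSS}. A secondary bookkeeping point is to confirm that ``hub-children'' in the statement is exactly ``children in $T_{H(S)}$'', consistent both with the hypothesis on the $T_{-v}(u_i)$ and with the definition of $u_v(s)$.
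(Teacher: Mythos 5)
Your proof is correct and takes the same route as the paper: the paper's own proof is a one-line assertion that the lemma is a restatement of Corollary~\ref{corollary: SStesting} for the case of a hub node, and your argument simply supplies the details of that restatement, namely the identity $S(u_i)\cap S(v)=S(v)\cap V(u_i)$ via bulk-path nesting and set monotonicity, plus the witness bookkeeping from Lemma~\ref{lemma:RecursiveSS}(1). Nothing further is needed.
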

\begin{proof}
This lemma is essentially a restatement of Corollary \ref{corollary: SStesting} rewritten for this special case when $v$ has more than one child in  $T_{H(S)}.$ \qed
\end{proof}

 \begin{algorithm}[h]
	\begin{algorithmic}[1]
	\State{\%$ v_i$  are  topologically sorted so that if $v_i$ is the child  of $v_j$  then $i < j$.  $t=|V_{H(S)}|.$}
         \State{\% If line 26 is reached without Break Out, then $T$ is self-sufficient }
	\State{}
	\For {$i=1$ to $t$}
	 \State {$v := v_i$}
	 \If {$v$ is a non-sink hub}	
	  \Comment{Apply Lemma \ref{lem: nr3}}
	 	\For {$u$ a child of $v$ in $T_H(S)}$
	 		\If {$S(u) \cap S(v) = \emptyset$}
	 		\Comment{$(u,v)$ satisfies Reaching Criterion}
	 		\State {Remove $V(u)$ from $T$}
		       \State {Commit blocks for $V(u)$ to sinks in $V(u) \cap S$ using Lemma \ref{lemma:RCTrimTree}}
	 		\State {Break out of Procedure}
	 		\EndIf
	 	\EndFor
	 	\State{$S(v)$ are witnesses to recursive self-sufficiency of $T(v)$}
	\Else
		  \Comment{Apply Lemma \ref{lem: nr2}}
		      \State{Set $u$ to be the unique hub-child of $v$}
	 	       \If {$|S(u)| > 0$ and $|S(v)| =0$}
	 		\Comment{$(u,v)$ satisfies Reaching Criterion}
	 		 		\State {Remove $V(u)$ from $T$}
		       \State {Commit blocks for $V(u)$ to sinks in $V(u) \cap S$ using Lemma \ref{lemma:RCTrimTree}}
	 		\State {Break out of Procedure}
	 		\Else
			\Comment{ $T(v)$ is recursively  self-sufficient }
	 		\State{$S(v)$ are witnesses to recursive self-sufficiency of $T(v)$}
	 	\EndIf
	 \EndIf
	\EndFor
         \State{\% Entire $T$ is recursively  self-sufficient}
	 		 \State {Commit all of $T$ to $S$ and terminate algorithm.}	 		 		
	\end{algorithmic}
	\caption{Reaching Stage}
	\label{alg:Binary Reaching}
\end{algorithm}

Lemmas  \ref{lem: nr2}   and \ref{lem: nr3} permit   implementing a Reaching Phase in $O(n)$ time as shown in Algorithm \ref{alg:Binary Reaching}.  

First, in $O(|V_{H(S)}|) = O(n)$ time, preprocess the nodes in $V_{H(S)}$  by  topologically sorting them so that if $v_i$ is the child of $v_j$ then $i < j$.

Next, process the  nodes in $V_{H(S)}$  in this topological order.   This will ensure that a node will  be processed only after its hub-children have already been processed. 
By induction, after a node $v$ has been processed, if the algorithm hasn't halted, $T(v)$ will be recursively self sufficient.

Processing  a non-sink hub node $v$ uses   Lemma \ref{lem: nr3} to check if any of the edges leading to $h_i$ satisfy the reaching criterion.  If yes, the algorithm commits the proper nodes to sinks in $O(n)$ time and exits.  Otherwise the tree rooted at $v$ will be recursively-self sufficient and the algorithm  continues.

Processing a non-hub node $v$    uses Lemma \ref{lem: nr2} to check in $O(1)$ time if $(u,v)$ satisfies the reaching criterion, where $u$ is $v$'s unique hub child. If yes, the algorithm commits the proper nodes to sinks in $O(n)$ time and exits.  Otherwise $v$ will be recursively self sufficient and the algorithm continues.

If the algorithm completes the entire For loop and reaches line 27 then  the entire tree $T$ is recursively self-sufficient so $T$ can be fully committed to $S$ and the algorithm terminates.

Lines 8 and 17  can be implemented in $O(1)$  time because of the way the lists were stored. Lines 14 and 22 can also be implemented in $O(1)$ time since it is only necessary to set a flag stating that the entire list $S(v)$ are witnesses.

We have therefore just proven 
\begin{lemma}
If the hub tree is already given then the  reaching phase can be implemented in $O(n)$ time.
\end{lemma}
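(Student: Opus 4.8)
The plan is to verify that Algorithm~\ref{alg:Binary Reaching} both correctly realizes a reaching phase and runs in linear time. I would first establish correctness by induction on the topological order imposed on $V_{H(S)}$: since every hub-child of a node $v$ precedes $v$, by the time $v$ is processed all the subtrees $T_{-v}(u_i)$ hanging below it through its hub-children $u_i$ have already been certified recursively self-sufficient --- unless the algorithm has already halted after locating an edge satisfying the reaching criterion. Under that hypothesis, Lemma~\ref{lem: nr2} handles a non-hub $v$ (which has a unique hub-child $u$) and Lemma~\ref{lem: nr3} handles a non-sink hub $v$: each gives a dichotomy, namely that either some child edge $(u_i,v)$ satisfies the reaching criterion --- detectable by the test ``$|S(u)|>0$ and $|S(v)|=0$'' (resp.\ ``$S(u_i)\cap S(v)=\emptyset$'') --- or $T(v)$ is itself recursively self-sufficient with the sinks of $S(v)$ (resp.\ $S(p(v))$) serving as witnesses. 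This preserves the loop invariant. When a reaching-criterion edge $(u,v)$ is found, removing $V(u)$ and committing its blocks via Lemma~\ref{lemma:RCTrimTree} keeps $(\Sout,\Pout)$ optimal; if the loop instead runs to completion then $T(r)=T$ is recursively self-sufficient and Corollary~\ref{corollary: SSFinal}(ii) licenses committing all of $T$ to $S$.

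For the time bound, the topological sort costs $O(|V_{H(S)}|)=O(n)$, and the main loop visits each hub-tree node exactly once. A non-hub node costs $O(1)$: testing ``$|S(u)|>0$ and $|S(v)|=0$'' is a constant-time query once list cardinalities are stored, and flagging ``$S(v)$ are witnesses'' on line~22 is $O(1)$. A non-sink hub node $v$ costs time proportional to its number of hub-children, and these numbers sum to $O(|V_{H(S)}|)=O(n)$ over the loop. The one delicate point, which I expect to be the main obstacle, is carrying out each single test $S(u_i)\cap S(v)=\emptyset$ in $O(1)$ time; this is exactly where the data-structure convention of Section~\ref{subsec:new reaching bs} is used, in which $S(v)$ is kept partitioned into sublists indexed by the hub-children of $v$, so that ``$S(u_i)\cap S(v)=\emptyset$'' reduces to ``the $u_i$-sublist of $S(v)$ is empty'', an $O(1)$ check; likewise flagging witnesses on line~14 is $O(1)$.

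Finally, within a single reaching phase at most one commit operation of nontrivial size is performed --- either the one triggered by the discovered reaching-criterion edge, on the subtree $T(u)$, or the final commit of the entire recursively self-sufficient $T$ --- and in both cases the witnessing partition is recovered by Algorithm~\ref{alg:FindPartitionRecursiveSS}, which repeatedly peels off bulk paths $\mathrm{BP}(v,W(v))$ and runs in time linear in the size of the committed subtree, hence $O(n)$. Summing the $O(n)$ sort, the $O(n)$ total loop overhead, and this single $O(n)$ commit gives the claimed $O(n)$ bound for the reaching phase.
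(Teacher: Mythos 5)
Your proposal is correct and follows essentially the same route as the paper: a topological-order sweep of $V_{H(S)}$ applying Lemma \ref{lem: nr2} and Lemma \ref{lem: nr3}, with the $O(1)$ emptiness/intersection tests supplied by keeping each $S(v)$ partitioned into sublists indexed by hub-children, $O(1)$ witness flags, and a single linear-time commit via Algorithm \ref{alg:FindPartitionRecursiveSS}. No gaps to report.
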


\subsection{Combining the Pieces}
\label{subsec:comb bounded}
This section has shown how to implement the entire bounded cost algorithm.  It follows the generic structure of 
Algorithm \ref{alg:BoundedCostFull2}, alternating Peaking and Reaching Phases.  

The actual work was done by five  logically distinct parts listed  below.  This decomposition will permit the parametric search extension in the next section.
\begin{enumerate}
\item {\bf The First Peaking Phase}
\begin{itemize}
\item Implemented using  tree centroid decomposition method of Section \ref{subsec:centroid}.
\item Divided into $O(\log n)$ stages.
  Each stage performs $O(n)$ extra work plus   one  {\em amortized} Oracle call.
 \item Total time required $O\left( \log n\, t_\mathcal{A} (n))\right)$.\\
 Number of actual oracle calls made could be as high as $\Theta(n).$
  \end{itemize}
\item  {\bf Creating the   First Hub tree}
\begin{itemize}
\item  lmplemented using binary search method of  Section \ref{subsec:new reaching bs}
\item Total time required  $O(k\log n t_\mathcal{A} (n))$
\item Uses $O(k \log n)$ total oracle calls.
\end{itemize}
\item {\bf Reaching Phases}
\begin{itemize}
\item  Implemented  using Algorithm  \ref  {alg:Binary Reaching}  
\item  Uses $O(n)$ time with no oracle calls per each reaching phase\\
Assumes   pre-existing hub tree with  preconstructed lists lists $S(u)$
\item  At most $k$ reaching stages;  $O(nk)$ total time for all reaching stages
\end{itemize}
\item {\bf All other Peaking Phases}
\begin{itemize}
\item Implemented using binary search method of  Section \ref{subsec:peaking bin}
\item Uses   $O( \log n t_\mathcal{A} (n))$ time with $O( \log n)$ oracle  calls per peaking stage
\item  At most $k$ reaching stages;    $O( k \log n t_\mathcal{A} (n))$   total time for all reaching stages
\end{itemize}
\item {\bf Creating Hub Tree after Non-Initial Peaking phase}
\begin{itemize}
\item   lmplemented using binary search method of  Section \ref{subsec:new reaching bs}
\item Removes $k'$ sinks from old hub tree and adds at most one new sink
\item  Total time required $O(nk' +  \log n t_{\mathcal{A}}(n))$ using    $O( \log n)$ oracle calls
\item At most $k$ peaking stages;    $O( k \log n t_\mathcal{A} (n))$   total time for all peaking stages
\end{itemize}
\end{enumerate}

Combining these parts proves Theorem \ref{theorem:FastBC}.
For later use we denote this complete algorithm for solving the 
 bounded-cost minmax $k$-sink problem as $\mathcal{B}$ and its running time on an input of  size $n$ as $B(n).$

\section{Full Problem: Cost Minimization via Parametric Searching}
\label{Section: Full Problem}

By  binary searching over all possible values of $\mathcal{T}$  and using $\mathcal{B}$ to test the feasibility of these  $\mathcal{T}$, 
 it is straightforward to construct a \emph{weakly} polynomial time algorithm for the general minmax $k$-sink problem of finding  $\mathcal{T}^*$, the smallest $\mathcal{T}$ for which $k$ sinks suffice.

Modifying $\mathcal{B}$
to produce a \emph{strongly} polynomial time algorithm, as in Theorem \ref{theorem:FastC}, though, will require 
using  a variation on  Megiddo's {\em parametric searching} technique
\cite{megiddo1979combinatorial}. 

\begin{definition}
The {\em State} of algorithm $\mathcal{B}$ at any given time will be the current  $(\Sout,\Pout)$, the  edge labels in the first peaking phase and the $S(v)$ values in the hub tree.
\end{definition}
Note that all of the information saved by  $\mathcal{B}$, i.e.,  
$T$ and $S$  and the rest of the hub tree information, can be directly constructed from  its state.  Thus if two invocations of $\mathcal{B}$ on two different values
$\mathcal{T}'$ and  $\mathcal{T}''$ both stop mid-calculation in the same state there is no way to distinguish between them.


In the
 parametric search version,  $\mathcal{T}$  will no longer be a constant; instead we {\em interfere} with the normal course of   $\mathcal{B}$  by changing $\mathcal{T}$ during runtime.  

This interfered version is denoted by Algorithm  $\mathcal I$.  The decision to interfere is based on a  \emph{threshold range}
 $(\mathcal{T}^L, \mathcal{T}^H]$. 
 $\mathcal I$ starts with  $(\mathcal{T}^L, \mathcal{T}^H] = (0,+\infty]$ and always maintains the following invariants:
\begin{enumerate}[label=(\textrm{I}\arabic{*})]
\item    $\mathcal{T}^L <\mathcal{T}^H$. 
\item $\mathcal{T}^L$ never decreases and $\mathcal{T}^H$ never increases.
\item $\mathcal{T}^L$ will be  infeasible and $\mathcal{T}^H$  will be   feasible. 
\item At each step of $\mathcal I$,  the corresponding  state of $\mathcal B$ would  be identical  for ALL
values of  $\mathcal{T} \in  [\mathcal{T}^L, \mathcal{T}^H)$. (Note the flipping of open and closed intervals.)
\end{enumerate}

Intuitively, $\mathcal I$ ``pretends'' that is it running $\mathcal{B}$  for all $\mathcal{T} \in  [\mathcal{T}^L, \mathcal{T}^H)$ while pruning away ``useless values''. We will soon see that the properties above will imply that $\mathcal I$ terminates with the value $ \mathcal{T}^H$ being  the correct solution.

\medskip

This leads to defining  a {\em step} in $\mathcal I.$   There will be two types of steps,
{\em Stage-Steps} and  {\em If-Steps}.

The Stage-Steps will correspond to the stages in the first peaking phase.  The If-Steps will correspond to an oracle call and associated work performed AFTER the first peaking phase
\medskip

\par\noindent\underline{\bf Stage-Steps:}\\
$\mathcal{B}$ starts by  implementing  the first peaking phase using Algorithm  \ref{alg:Centroid Processing}. This is  divided into $t=O(\log n)$ stages, where $t$ is the number of levels in the centroid decomposition of $\Tin.$  Recall that decomposition itself only depends upon $\Tin$ and not $\mathcal T.$  
From Lemma \ref {lem:peaking centroid} (1),   the  full set of oracle calls made during   each of those stages  depends upon the results of the calls from {\em previous} stages and not on any calls  made during  the current one. 

A Stage-Step will correspond to the implementation of one stage as performed by lines 3-22 of Algorithm  \ref{alg:Centroid Processing}.
 AFTER making the oracle calls in lines 3-11,  the Stage-Step will binary search among the returned values  to find  $a^L$, the largest infeasible value and
 $a^H$, the smallest  feasible one.  It will then set ${\mathcal T}^L$ to be the larger of $a^L$ and the old  ${\mathcal T}^L$  and run the remaining lines  12-22 using  $T ={\mathcal T}^L.$  It will also set
 ${\mathcal T}^H$ to be the smaller of $a^H$ and the old  ${\mathcal T}^H$. 
 The details are in 
Figure \ref{fig:Stage Step}.

\medskip

\begin{lemma}
\label{lem:Stage Thresh}
Let $i \le t$ where $t$ is the number of stages in  the centroid decomposition of $\Tin.$ 
Let $\mathcal{T}^*$ be the optimal value of $\mathcal T$ and 
let $(\mathcal{T}^L, \mathcal{T}^H]$ be the threshold range after running $i$ Stage-Steps.
Then

\begin{enumerate}
\item  $\mathcal{T}^* \in (\mathcal{T}^L, \mathcal{T}^H]$.
\item Let  $\mathcal{T}'  \in [\mathcal{T}^L, \mathcal{T}^H)$.  
Then,  Algorithm  $\mathcal B$ run for $i$ stages with ${\mathcal T} = \mathcal{T}' $ would end in the same state as algorithm  $\mathcal B$ run for $i$ stages with ${\mathcal T} = \mathcal{T}^L$.
\item Algorithm  $\mathcal I$ run for $i$ Stage steps  would end in the same state as algorithm  $\mathcal B$ run for $i$ stages with ${\mathcal T} = \mathcal{T}^L$.
\end{enumerate}
\end{lemma}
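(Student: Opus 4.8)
The plan is to prove the three statements simultaneously by induction on $i$, the number of Stage-Steps executed, equivalently the number of centroid-decomposition stages of $\Tin$ processed so far, for $0\le i\le t$. The base case $i=0$ is immediate: the threshold range is $(0,+\infty]$, which contains $\mathcal{T}^*$ for any non-trivial instance, and before any stage has run $\mathcal I$ and every $\mathcal B$-run (whatever the value of $\mathcal T$) sit in the common, $\mathcal T$-independent initial state $(\Sout,\Pout)=(\emptyset,\emptyset)$ with all edges marked {\bf U} and an empty hub tree, so (2) and (3) hold vacuously.

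For the inductive step I would assume (1)--(3) hold after $i-1$ Stage-Steps, write $(\mathcal{T}^L_{i-1},\mathcal{T}^H_{i-1}]$ for the current threshold range, and let $\sigma$ be the common post-$(i{-}1)$ state: by inductive (3), $\mathcal I$ is in state $\sigma$, and by inductive (2), $\mathcal B$ run for $i-1$ stages is in state $\sigma$ for every $\mathcal{T}'\in[\mathcal{T}^L_{i-1},\mathcal{T}^H_{i-1})$. The first key point is that, by Lemma \ref{lem:peaking centroid}(1) (equivalently, by inspection of lines 3--11 of Algorithm \ref{alg:Centroid Processing}), the oracle calls opening stage $i$ are determined purely by $\sigma$: they query $f(V_{-v}(u),u)$ and $f(V_{-v}(u)\cup\{v\},v)$ for exactly the edges $(u,v)$ with $v\in L_i$ still marked {\bf U} in $\sigma$, and their returned values do not depend on $\mathcal T$. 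Hence $\mathcal I$ obtains exactly the same multiset $R=\{a(u,v),b(u,v)\}$ that any $\mathcal B$-run with $\mathcal{T}'\in[\mathcal{T}^L_{i-1},\mathcal{T}^H_{i-1})$ would obtain. The Stage-Step then binary-searches $R$ against the feasibility test $\mathcal B$ for $a^L_i$, the largest infeasible element of $R$ (taken to be $-\infty$ if none), and $a^H_i$, the smallest feasible one ($+\infty$ if none), and sets $\mathcal{T}^L_i:=\max(\mathcal{T}^L_{i-1},a^L_i)$ and $\mathcal{T}^H_i:=\min(\mathcal{T}^H_{i-1},a^H_i)$.

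Statement (1) then follows from monotonicity of feasibility in $\mathcal T$ (a larger threshold can only make the $k$-sink problem easier): an element of $R$ is feasible exactly when it is $\ge\mathcal{T}^*$, so $a^L_i<\mathcal{T}^*\le a^H_i$, and intersecting with the inductive $\mathcal{T}^L_{i-1}<\mathcal{T}^*\le\mathcal{T}^H_{i-1}$ yields $\mathcal{T}^L_i<\mathcal{T}^*\le\mathcal{T}^H_i$. For (2) and (3), the crucial observation is that every $c\in R$ satisfies $c\le a^L_i\le\mathcal{T}^L_i$ or $c\ge a^H_i\ge\mathcal{T}^H_i$, so no element of $R$ lies in the open interval $(\mathcal{T}^L_i,\mathcal{T}^H_i)$; consequently, for each $c\in R$, the comparisons ``$c\le\mathcal T$'' and ``$c>\mathcal T$'' have the same outcome for every $\mathcal T\in[\mathcal{T}^L_i,\mathcal{T}^H_i)$. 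Since $[\mathcal{T}^L_i,\mathcal{T}^H_i)\subseteq[\mathcal{T}^L_{i-1},\mathcal{T}^H_{i-1})$ keeps $\mathcal B$ in state $\sigma$ after $i-1$ stages for all such $\mathcal T$, and since lines 12--22 of Algorithm \ref{alg:Centroid Processing} (the degeneracy test and the L1/L2-labelling and commits of Lemma \ref{L1L2 lemma}) consult $\mathcal T$ only through exactly these comparisons, running stage $i$ from $\sigma$ lands in one common resulting state $\sigma'$ for every $\mathcal T\in[\mathcal{T}^L_i,\mathcal{T}^H_i)$, in particular for $\mathcal T=\mathcal{T}^L_i$; that is (2). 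Statement (3) is then automatic, because the $i$-th Stage-Step of $\mathcal I$ performs precisely this computation: it reaches $\sigma$, computes $R$, sets $\mathcal T^L:=\mathcal{T}^L_i$, and executes lines 12--22 with $\mathcal T=\mathcal{T}^L_i$, landing in $\sigma'$.

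The step I expect to be the main obstacle is the open/closed-interval bookkeeping: one must verify that the lower endpoint $\mathcal{T}^L_i$ --- which may itself equal an element $a^L_i$ of $R$ --- keeps every relevant ``$\le$'' comparison resolving the same way throughout $[\mathcal{T}^L_i,\mathcal{T}^H_i)$, and that excluding the upper endpoint $\mathcal{T}^H_i$ (which may equal $a^H_i\in R$) is exactly what prevents a returned value from sitting on the boundary and breaking state-invariance; this is why the invariant is phrased with the flipped half-open interval $[\mathcal{T}^L,\mathcal{T}^H)$ rather than $(\mathcal{T}^L,\mathcal{T}^H]$. Everything else is routine once Lemma \ref{lem:peaking centroid}(1) and monotonicity of feasibility are in hand.
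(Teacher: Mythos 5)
Your proposal is correct and follows essentially the same route as the paper: induction on the number of stages, using the fact that the stage's oracle calls depend only on the common state inherited from earlier stages, and that after the binary search no returned value lies strictly inside $(\mathcal{T}^L,\mathcal{T}^H)$, so every comparison ``$a\le\mathcal{T}$'' resolves identically for all $\mathcal{T}\in[\mathcal{T}^L,\mathcal{T}^H)$. The only cosmetic difference is that you make the monotonicity-of-feasibility step in part (1) explicit, whereas the paper argues (1) directly from the invariant that $\mathcal{T}^H$ is always set to a feasible value and $\mathcal{T}^L$ to an infeasible one.
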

\begin{proof}
	
(1)   $\mathcal{T}^* \leq \mathcal{T}_{H}$ because $\mathcal{T}^H$ is always set to be a feasible value of $\mathcal{T}$. Similarly, $\mathcal{T}^* > \mathcal{T}_{L}$ because $\mathcal{T}^L$ is always set to be a non-feasible value of $\mathcal{T}$.
	
(2) Will be proven  by induction on $i.$    Let stage $0$ be the starting process  of setting
	$(\mathcal{T}^L, \mathcal{T}^H) = (0, \infty).$  Then (2) is valid for $i =0.$
	
	Now assume (2) is correct for $i-1$.   
	From the induction hypothesis,   the state  of $\mathcal B$ for
${\mathcal T} = \mathcal{T}^L$ and ${\mathcal T} = \mathcal{T}'$ will be identical at the end of stage $i-1$, i.e., the start of stage $i$.  In particular,  for both
${\mathcal T} = \mathcal{T}^L,\mathcal{T}'$, 
all edges will be labelled  the same at the start of stage $i.$
	
	Next note that, after  completing  Line 4 in the  $i$'th Stage-Step   there does not exist any $a_j \in \mathcal{V}$ satisfying  $a_j \in (\mathcal{T}^L, \mathcal{T}^H).$  Thus, if
$\mathcal{T}' \in [\mathcal{T}^L, \mathcal{T}^H),$\\
$$a_j \le \mathcal{T}^L \ \Rightarrow \  a_i \le \mathcal{T}'
\quad\mbox{and } \quad
a_j > \mathcal{T}^L \  \Rightarrow \ a_j \ge \mathcal{T}^H \  \Rightarrow \ a_j > \mathcal{T}'.$$

Thus, for any $a_j$ evaluated during stage $i$ via an oracle call,  Algorithm $\mathcal B$  can not distinguish between  the answer to $a_j \le \mathcal{T}^L$ and $a_j \le \mathcal{T}'$.   Since the decisions made by 
$\mathcal B$ only depend upon the prior labels of edges and the results of the $a_j \le \mathcal{T}$ queries, 
$\mathcal B$ will behave identically for both
${\mathcal T} = \mathcal{T}^L,\mathcal{T}'$.

\medskip

		 (3) follows from the analysis of (2).		\qed
\end{proof}

\par\noindent\underline{\bf  If-Steps:}\\
 The remainder of of $\mathcal B$  will be divided into  {\bf  If-Steps}. The first If-Step starts right after the first peaking phase concludes.  All subsequent  If-Steps start whenever an oracle call  is made.
 
 Note that  $\mathcal B$  works by making oracle calls of the type $a = f( \cdot, \cdot )$ followed by a clause
{\em ``If  $a \le \mathcal{T}$''},  e.g., during the peaking phases or creation of a hub tree. Further note that $\mathcal B$ never actually uses the {\em value} of $a$ or $\mathcal T$ when deciding what to do next.   Its actions only depend upon whether $a \le \mathcal{T}$  or $a >\mathcal{T}$.
 
 \medskip
 
  $\mathcal{I}$, the interfered version of Algorithm $\mathcal B$,  will run the remainder of  $\mathcal B$ If-Step by If-Step but replacing each If-Step in $\mathcal B$ by the corresponding interfered If-Step in $\mathcal{I}$ as defined in Figure \ref{fig:If Step}.

\begin{figure}[t]

\par\noindent\underline {Stage-Step:}
\begin{enumerate}
 \item  Start by performing all of the oracle calls required by lines 3-11 of  Algorithm  \ref{alg:Centroid Processing}\\
 {\em By Lemma \ref  {lem:peaking centroid} (2), this only requires 
 $O(t_{\mathcal{A}}(n))$ time.}
\item  Let $\mathcal{V}= \{a_1,\, a_2,\, \ldots,a_r\}$ be the values returned by the oracle calls.\\
{\em Note that feasibility of $a_i$ can be tested in  $B(n)$ time by running $\mathcal{B}$ with 
$\mathcal{T} = a_i.$}
 \item In $O(r \log r + B(n) \log r ) = O(B(n)  \log n ) $ time, binary search in $\mathcal{V}$ 
 for a pair of values  
$$a^L = \max\{ a_i \in \mathcal{V} :\, a_i \mbox{ is  not feasible}\},\quad
 a^H = \min\{ a_i \in \mathcal{V} :\, a_i \mbox{ is feasible}\}
 $$
 If all of the $\mathcal{T}_i$ are feasible set $a^L = \mathcal{T}^L$; \\
 If  all of the $\mathcal{T}_i$ are not feasible set $a^H = \mathcal{T}^H$.
 \item Set $\mathcal{T}^L = \max\{a^L, \mathcal{T}^L\}$, 
 $\mathcal{T}^H = \min\{a^H, \mathcal{T}^H\}$  
 \item  Continue the stage using the value $\mathcal{T} = \mathcal{T}^L$  (the new updated value) when running lines 12-22 in Algorithm \ref{alg:Centroid Processing}.
 \end{enumerate}

\caption{A Stage-Step in Algorithm $\mathcal I$:}
\label{fig:Stage Step}
\end{figure}

\begin{figure}[t]
\par\noindent\underline {If-Step:}

\begin{enumerate}
\item Perform the evaluation $a:= f(\cdot,\cdot).$
\item Resolve the If-Clause and reset $\mathcal{T}^L, \mathcal{T}^H$ if necessary as follows:  
 	\begin{itemize}
 	\item (i) If $a \leq \mathcal{T}^L$,   resolve the associated If-clause as $f(\cdot,\cdot) \leq \mathcal{T}$.
 		\item(ii)  If $a \ge \mathcal{T}^H$, resolve the associated If-clause  as $f(\cdot,\cdot) > \mathcal{T}$.
 		\item If $a \in (\mathcal{T}^L, \mathcal{T}^H)$, run a separate version of $\mathcal B$ from scratch with value $\mathcal{T} := a$, and observe the output.
 \begin{itemize}
  \item (iii) If Output is `No': set $\mathcal{T}^L := a$.\\  Resolve the associated If-clause as $f(\cdot,\cdot) \leq \mathcal{T}$. 
  \item (iv)  If Output is `Yes', set $\mathcal{T}^H := a$.\\  Resolve the  associated  If-clause as $ f(\cdot,\cdot)  >\mathcal{T}$.
 \end{itemize}
 \end{itemize}
 \item Conclude the step by running the algorithm with the set value ${\mathcal T} = {\mathcal T}^L$ until the start of the next step.
\end{enumerate}
 \caption{An If-Step in Algorithm $\mathcal I$}
\label{fig:If Step}
\end{figure}
\medskip

\begin{lemma}
\label{lem:If Thresh}
Let $\mathcal{T}^*$ be the optimal value of $\mathcal T$ and 
let $(\mathcal{T}^L, \mathcal{T}^H]$ be the threshold range after running $m$ If-Steps after the conclusion of the first peaking phase.
Then

\begin{enumerate}
\item  $\mathcal{T}^* \in (\mathcal{T}^L, \mathcal{T}^H]$.
\item Let  $\mathcal{T}'  \in [\mathcal{T}^L, \mathcal{T}^H)$. 
Then,  Algorithm  $\mathcal B$ run for $m$ steps with ${\mathcal T} = \mathcal{T}' $ would end in the same state as algorithm  $\mathcal B$ run for $m$ steps with ${\mathcal T} = \mathcal{T}^L$.
\item Algorithm  $\mathcal I$ run for $m$ steps  would end in the same state as algorithm  $\mathcal B$ run for $m$ steps with ${\mathcal T} = \mathcal{T}^L$.
\end{enumerate}
\end{lemma}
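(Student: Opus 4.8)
The plan is to run essentially the same argument as in Lemma~\ref{lem:Stage Thresh}, only now the ``atomic'' unit is a single If-Step (one oracle call together with the bookkeeping it triggers) rather than a whole stage of the first peaking phase. For part~(1), observe that at the moment the first peaking phase ends, Lemma~\ref{lem:Stage Thresh}(1) already gives $\mathcal{T}^* \in (\mathcal{T}^L, \mathcal{T}^H]$; it then suffices to check that no If-Step can destroy this. Looking at Figure~\ref{fig:If Step}: cases~(i) and~(ii) leave both endpoints untouched; case~(iii) only raises $\mathcal{T}^L$ to a value $a$ on which $\mathcal B$ answered `No', hence an infeasible value, hence $a < \mathcal{T}^*$; and case~(iv) only lowers $\mathcal{T}^H$ to a value $a$ on which $\mathcal B$ answered `Yes', hence a feasible value, hence $a \ge \mathcal{T}^*$. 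So the invariant $\mathcal{T}^* \in (\mathcal{T}^L, \mathcal{T}^H]$ is preserved through all $m$ If-Steps.

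For part~(2) I would induct on $m$, writing $(\mathcal{T}^L_m, \mathcal{T}^H_m]$ for the range after $m$ If-Steps. The base case $m=0$ is exactly Lemma~\ref{lem:Stage Thresh}(2) applied with $i = t$ (the end of the first peaking phase). For the inductive step: by the inductive hypothesis the state of $\mathcal B$ entering If-Step~$m$ is the same for $\mathcal T = \mathcal{T}^L_{m-1}$ and for every $\mathcal{T}' \in [\mathcal{T}^L_{m-1}, \mathcal{T}^H_{m-1})$, so $\mathcal B$ issues the same oracle call and gets the same value $a$ in every such run. The only thing to verify is that the associated clause ``$a \le \mathcal T$'' is resolved identically for $\mathcal T = \mathcal{T}^L_m$ and for every $\mathcal{T}' \in [\mathcal{T}^L_m, \mathcal{T}^H_m)$, and this is a short case split on which of (i)--(iv) occurs: in cases~(i) and~(iii) we have $a \le \mathcal{T}^L_m \le \mathcal{T}'$, so both runs resolve the clause as $f(\cdot,\cdot) \le \mathcal T$; in cases~(ii) and~(iv) we have $a \ge \mathcal{T}^H_m > \mathcal{T}'$ (and additionally $a > \mathcal{T}^L_m = \mathcal{T}^L_{m-1}$ in case~(iv)), so both runs resolve it as $f(\cdot,\cdot) > \mathcal T$. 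Since $\mathcal B$'s subsequent behaviour depends only on its incoming state and on the outcomes of such clauses, and — as the text notes — $\mathcal B$ never consults the numerical value of $\mathcal T$ (so the bookkeeping between If-Steps is $\mathcal T$-independent on the whole interval), the two runs end If-Step~$m$ in the same state. Part~(3) then falls out of the same case split: in each of (i)--(iv), Algorithm~$\mathcal I$ resolves the If-clause exactly as $\mathcal B$ with $\mathcal T = \mathcal{T}^L_m$ would, and $\mathcal I$ continues the step with $\mathcal T = \mathcal{T}^L_m$; combined with the inductive hypothesis for $\mathcal I$ this gives that after $m$ If-Steps $\mathcal I$ is in the same state as $\mathcal B$ run with $\mathcal T = \mathcal{T}^L$ for the same number of steps.

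The genuinely delicate point — and the only place this could go wrong — is the open/closed endpoint bookkeeping in that case analysis: one must keep track of the flip between $[\mathcal{T}^L_m, \mathcal{T}^H_m)$ and $(\mathcal{T}^L_m, \mathcal{T}^H_m]$ and use the fact that, after the If-Step is processed, the queried value $a$ lies outside the open interval $(\mathcal{T}^L_m, \mathcal{T}^H_m)$. A secondary subtlety worth an explicit sentence is that when case~(iii) or~(iv) fires, $\mathcal I$ spawns a fresh copy of $\mathcal B$ on the \emph{constant} threshold $\mathcal T := a$; this is the plain bounded-cost algorithm (not $\mathcal I$ itself), so it terminates by Theorem~\ref{theorem:FastBC} and there is no recursive regress. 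Beyond these two points the argument is routine, being a line-by-line transcription of Lemma~\ref{lem:Stage Thresh}.
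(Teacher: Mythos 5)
Your proposal is correct and follows essentially the same route as the paper's proof: part (1) by noting $\mathcal{T}^L$ is only ever set to infeasible values and $\mathcal{T}^H$ only to feasible ones, and part (2) by induction on the number of If-Steps with the same four-way case split showing the clause ``$a \le \mathcal{T}$'' resolves identically for $\mathcal{T}^L$ and any $\mathcal{T}' \in [\mathcal{T}^L, \mathcal{T}^H)$, with (3) falling out of that analysis. Your version is, if anything, slightly more careful than the paper's about the base case and the endpoint bookkeeping.
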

\begin{proof}
	
 (1) Same as  the proof of Lemma \ref{lem:Stage Thresh} (1).
	
(2) 	From the induction hypothesis,   the state  of $\mathcal B$ for
${\mathcal T} = \mathcal{T}^L$ and ${\mathcal T} = \mathcal{T}'$ will be identical at the end of stage $m-1$, i.e., the start of stage $m$.
 Now consider what happens in the $m$'th If-Step. 
\medskip

	In case (i), $f(\cdot,\cdot) = a \le \mathcal{T}^L <  \mathcal{T}'$.
	
	In case (ii), $f(\cdot,\cdot)=a \ge \mathcal{T}^H> \mathcal{T}' \ge \mathcal{T}^L$.
	
	In case (iii), $f(\cdot,\cdot)= a = \mathcal{T}^L \le \mathcal{T}'$.
	
	In case (iv), $f(\cdot,\cdot)= a = \mathcal{T}^H > \mathcal{T}' \ge \mathcal{T}^L$.
	
	\medskip
	
	Thus, in all four cases,  the query ``$a \le \mathcal{T}^L?$'' resolves identically to the query
	``$a \le \mathcal{T}'$ and algorithm $\mathcal B$ can not distinguish between the two cases. Since the algorithm started the If-Step in an identical state for both
	${\mathcal T} = {\mathcal T}'$ and  ${\mathcal T} = {\mathcal T} ^L$ and can not distinguish between them during the If-Step,  it ends in the same state for both of them.

\medskip

(3)  follows directly from the analysis of (2).		\qed
\end{proof}

\begin{lemma}
\label{lem:inter1}\ 

The interfered algorithm $\mathcal{I}$ will terminate in  $O( \max(k,\log n) k \log^2 n t_{\mathcal{A}}(n))$ time. 

 Let $(\mathcal{T}_{<},\mathcal{T}_{>}]$ be the threshold range  when  $\mathcal I$  terminates and 
$\mathcal{T^*}$ be the optimal value of $\mathcal T$.
  Then $\mathcal{T^*}=\mathcal{T}_{>}$. In particular, we can then run the bounded cost Algorithm $\mathcal B$   on $\mathcal{T} := \mathcal{T}_{>}$ to retrieve the optimal feasible configuration.
 \label{lemma:UpperThresholdIsAnwser}
\end{lemma}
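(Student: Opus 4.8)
The plan is to prove the two assertions separately: first the correctness claim $\mathcal{T}^* = \mathcal{T}_{>}$, then the running time bound. For correctness, I would combine Lemma~\ref{lem:Stage Thresh} and Lemma~\ref{lem:If Thresh}. Both lemmas establish invariant (I3), that $\mathcal{T}^L$ is always infeasible and $\mathcal{T}^H$ is always feasible, and part~(1) of each lemma gives $\mathcal{T}^* \in (\mathcal{T}^L, \mathcal{T}^H]$ throughout. So when $\mathcal{I}$ terminates with range $(\mathcal{T}_{<}, \mathcal{T}_{>}]$, we know $\mathcal{T}_{>}$ is feasible and $\mathcal{T}_{<}$ is infeasible, hence $\mathcal{T}^* \le \mathcal{T}_{>}$. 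It remains to argue $\mathcal{T}^* \ge \mathcal{T}_{>}$, i.e.\ that no feasible value strictly smaller than $\mathcal{T}_{>}$ exists. The key point is that when $\mathcal{I}$ terminates it does so because the simulated run of $\mathcal{B}$ on $\mathcal{T} = \mathcal{T}^L$ has halted and committed a full partition (or reported infeasibility). By Lemma~\ref{lem:Stage Thresh}(2) and Lemma~\ref{lem:If Thresh}(2), the run of $\mathcal{B}$ on \emph{every} $\mathcal{T}' \in [\mathcal{T}^L, \mathcal{T}^H)$ ends in exactly the same state, and in particular produces the same output; since $\mathcal{T}_{<} = \mathcal{T}^L$ is infeasible, $\mathcal{B}$ must be reporting "No" — but then \emph{every} $\mathcal{T}' \in [\mathcal{T}_{<}, \mathcal{T}_{>})$ is infeasible as well. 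Combined with the fact that feasibility is monotone in $\mathcal{T}$ (a larger budget can only help: any feasible configuration for $\mathcal{T}'$ is feasible for any larger value, by set and path monotonicity of $f$), this forces $\mathcal{T}^* \ge \mathcal{T}_{>}$, so $\mathcal{T}^* = \mathcal{T}_{>}$ and the final line of the lemma (rerun $\mathcal{B}$ on $\mathcal{T}_{>}$) is justified.

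For the running time, I would account for the cost of $\mathcal{I}$ by comparing it to the cost $B(n) = O(k\, t_{\mathcal{A}}(n)\log n)$ of a single run of $\mathcal{B}$ (Theorem~\ref{theorem:FastBC}), and charging each step of $\mathcal{I}$ an overhead factor. The Stage-Steps number $t = O(\log n)$; by Figure~\ref{fig:Stage Step}, each one does $O(t_{\mathcal{A}}(n))$ oracle work plus a binary search over $O(n)$ returned values, each probe costing one feasibility test $B(n)$, for $O(B(n)\log n)$ per Stage-Step, hence $O(B(n)\log^2 n)$ total over all Stage-Steps. The If-Steps each do one oracle evaluation and at most one from-scratch feasibility test costing $B(n)$, so the If-Step cost is $O(B(n))$ per step. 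The number of If-Steps equals the number of oracle calls made by $\mathcal{B}$ after the first peaking phase, which from the accounting in Section~\ref{subsec:comb bounded} is $O(k\log n)$ (over all later peaking phases, all reaching phases, and all hub-tree constructions). So the If-Steps contribute $O(k\log n \cdot B(n))$. Adding these and substituting $B(n) = O(k\, t_{\mathcal{A}}(n)\log n)$ gives $O\big((\log^2 n + k\log n)\, B(n)\big) = O\big(\max(k,\log n)\log n \cdot B(n)\big) = O\big(\max(k,\log n)\, k\, t_{\mathcal{A}}(n)\log^2 n\big)$, matching the claimed bound.

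The main obstacle I expect is the bookkeeping in the running-time half: one must be careful that the feasibility tests invoked \emph{inside} a step (the from-scratch runs of $\mathcal{B}$) are not themselves recursively interfered — they are plain runs on a fixed $\mathcal{T}$, each costing $B(n)$, so there is no blow-up — and one must correctly count that only $O(k\log n)$ If-Steps occur, relying on the per-phase oracle-call tallies already established. A secondary subtlety is ensuring that the binary search in a Stage-Step is well-defined even when all returned values lie to one side of the current range; the fallback rules in Figure~\ref{fig:Stage Step} (setting $a^L = \mathcal{T}^L$ or $a^H = \mathcal{T}^H$) handle this, and one just needs to note that invariant (I1), $\mathcal{T}^L < \mathcal{T}^H$, is preserved because $\mathcal{T}^*$ always lies strictly between them. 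The correctness half is comparatively routine once the monotonicity of feasibility in $\mathcal{T}$ is made explicit.
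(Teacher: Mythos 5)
Your proposal is correct and follows essentially the same route as the paper: the running-time accounting ($O(\log n)$ Stage-Steps costing $O(B(n)\log n)$ each and $O(k\log n)$ If-Steps costing $O(B(n))$ each, with the inner feasibility tests being plain uninterfered runs of $\mathcal B$) matches the paper's, and the correctness argument rests on the same ingredients, namely invariant (I3) together with part (2) of Lemmas \ref{lem:Stage Thresh} and \ref{lem:If Thresh}, which force $\mathcal B$ to behave identically, and hence report ``No,'' for every $\mathcal{T}'\in[\mathcal{T}_{<},\mathcal{T}_{>})$. The only cosmetic differences are that you argue from the terminal state while the paper argues from the step at which $\mathcal{T}^L$ was last updated, and your extra appeal to monotonicity of feasibility is harmless but not needed once every value in $(\mathcal{T}_{<},\mathcal{T}_{>})$ is known to be infeasible.
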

\begin{proof}
 By the definition of the Stage and If-Steps the open interval $(\mathcal{T}^L, \mathcal{T}^H)$ may  contract but is always  non-empty.  Let $\mathcal{T'}$ be any value falling in the intersection of all such intervals.

Point  (2)  of Lemmas \ref{lem:Stage Thresh}   and  \ref{lem:If Thresh}  imply that the number and type of  steps  run by algorithm $\mathcal I$ is exactly the same as those run by algorithm $\mathcal B$ on   $\mathcal{T'}$.  

For  a problem of size $n$, let  $S_S(n)$ denote  the maximum number of Stage-Steps run by Algorithm  $\mathcal B$ and  $S_I(n)$ the maximum number of If-Steps run.   As noted in Section \ref{subsec:comb bounded} $S_S(n)= O(\log n)$ and $S_I(n) = O(k \log n)$.  

The running time of $\mathcal I$ is the time  for running $\mathcal B$ on $\mathcal{T}'$ plus the work done on lines 3-4 of the Stage Steps  (Fig \ref{fig:Stage Step})  and line 2 of the If-Steps 
(Fig \ref{fig:If Step}).

Let  $W_S(n)$ be the total amount of work performed by  $\mathcal I$ on lines 3-4 in one  Stage-Step and $W_I(n)$ the total amount of work performed by $\mathcal I$ in line 2 of one If-Step.   $W_S(n) = O( \log n B(n))$ and  $W_I(n) = O(B(n)).$

Thus, the total amount of work performed by Algorithm $\mathcal I$ before it terminates is 
\begin{eqnarray*}
T(n) &=& B(n)  + S_S(n) W_S(n)  + S_F(n) W_F(n)\\
	   &=& B(n)   + O\bigl(\log^2 n B(n)\bigr) + O\bigl(k \log n B(n)\bigr)\\
	   &=&  O\left(\bigl(\log^2 n + k \log n\bigr)(k  t_{\mathcal{A}}(n)  \log n )\right)\\
	   &=&  O( \max(k,\log n)\,  k  t_{\mathcal{A}}(n) \log^2 n)
\end{eqnarray*}

Now consider the step  at which   ${\mathcal T}^L$  was set to $\mathcal{T}_{<}$ while running $\mathcal I$ . This occurred because  ${\mathcal T}^L=\mathcal{T}_{<}$ was found to be infeasible.    Consider  $\mathcal{T'} \in (\mathcal{T}_{<},\mathcal{T}_{>})$.
From Lemmas \ref{lem:Stage Thresh}  and \ref{lem:If Thresh} (2),  $\mathcal B$ would be in exactly the same state at that end of the algorithm  for both ${\mathcal T}^{<}$  and ${\mathcal T}'$. Thus  ${\mathcal T}'$ is infeasible as well.

Finally, from Lemmas \ref{lem:Stage Thresh}  and \ref{lem:If Thresh} (1) we have that $\mathcal{T}^* \in (\mathcal{T}_{<}, \mathcal{T}_{>}]$ but we have just seen that no $\mathcal{T'} \in (\mathcal{T}_{<},\mathcal{T}_{>})$ is feasible,  Thus
$\mathcal{T} := \mathcal{T}_{>}$.
\qed
\end{proof}

Theorem \ref{theorem:FastC} follows immediately from the previous Lemma.

{\em Note:
A classic application of parametric search to $\mathcal B$ would require a call to $\mathcal{B}$ every time the oracle $\mathcal{A}$ was called. This  first peaking phase can  require as many as $\Theta (n)$ oracle calls, resulting in  an 
$\Theta (n B(n))= \Theta (n t_\mathcal{A}(n))$ running time for that phase in the parametric search version.  The use of the centroid decomposition and distinction between Stage and If Steps were necessary to replace this extra factor of $\Theta(n)$ by $\Theta(\log n).$ 
}

\section{The Continuous Case}
\label{Sec:Continuous}

Until this point the analysis  has always assumed the discrete version of the problem   in 
which sinks are required to be  
nodes in $\Vin.$
This section will extend those results to the continuous case in which sinks can be located on edges.

This first requires extending the definition of   minmax monotone  cost functions to  edges.

\begin{figure}
	\centering
	\includegraphics[width=0.5\textwidth]{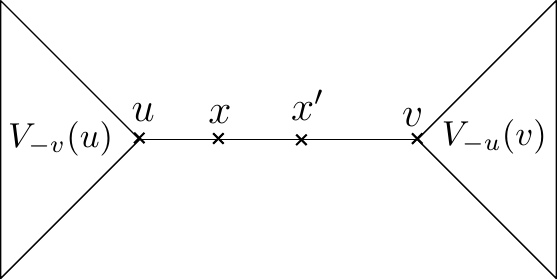}
	\caption{Let $(u,v)$ be oriented so that is starts at $u$ and ends at $v$.  Then $x < x'$. If the edge was oriented as ($v,u)$ then $x'< x'$  If $x \le x'$ then  $f(V_{-v}(u) \cup \{x\}, x) \le  f (V_{-v}(u)\cup \{x'\},x')$.}
		\label{fig:continuious}
\end{figure}

\begin{definition} (Fig.~\ref{fig:continuious})
\label{def:cont}
Let $T=(V,E)$ be  a tree and $f(\cdot,\cdot)$ a  monotone minmax cost function as defined in Section \ref{subsec:minmaxdef}.

For  $e=(u,v) \in E$,  orient $e$ so that it starts at $u$  and ends at $v$.  Let $x,x' \in e$.  Denote
\begin{eqnarray*}
x \le x'   &\quad \mbox{ if and only if} \quad  & \mbox{$x$ is on the path from $u$ to $x'$ },\\
x < x'   &\quad \mbox{ if and only if} \quad  & \mbox{ $x \le x'$  and  $x \not= x'.$}
\end{eqnarray*}
 $f(\cdot,\cdot)$ is  {\em continuous} if it satisfies:
\begin{enumerate}
\item  $f(V_{-v}(u) \cup \{x\}, x)$ is a continuous function in $\{x \,:\, u < x \le v\}.$
\item  $f (V_{-v}(u)\cup \{x\},x)$ is non-decreasing in $\{x \,:\, u \le x \le v\}$, i.e.,
$$\forall u \le x < x' \le v,\    f(V_{-v}(u) \cup \{x\}, x) \le  f (V_{-v}(u)\cup \{x'\},x') .$$
\end{enumerate}
\end{definition}
Point 2 is the natural generalization of path-monotonicity.  

\medskip
{\em \small Note:  
This definition  is satisfied in the sink evacuation problem.  Let $d(x,v)$ denote  the time required to travel from $x$ to $v$. It is natural to assume that this is non-increasing continuous function in $x$. Since flow travels smoothly without congestion {\em  inside} an edge, if the last flow  arrived at node $v$ at time $t$, then  it had been at  $x>u$ at time  $t-d(x,v)$. Thus 
\begin{equation}
\label{eq:dcontdef}
f(V_{-v}(u) \cup \{x\},x) = f(V_{-v}(u) \cup \{v\},v) - d(x,v)
\end{equation}
so condition (1) is satisfied and condition (2) is satisfied for every $x$ except possibly $x=u.$
Now consider the time $t'$ that the last flow arrives at node $u$ and  let $t'+w$  be the time that this last flow {\em  enters} edge $(u,v)$.
Since flow doesn't encounter congestion inside  an edge, it arrives at $v$ at time   $t'+w + d(u,v).$ Then 
$$f(V_{-v}(u),u) = t'  \le  t'+w =   (t'+w + d(u,v)) - d(u,v) =  \lim_{x \downarrow u}f(V_{-v}(u) \cup \{x\},x).$$ 
Thus condition (2) is also satisfied at $x=u.$  Note that $w >0$ only occurs if there is congestion at $(u,v)$ and this forces a left discontinuity, which is why 
 the range in point (1) does  not include $x=u.$  
}

The following lemma follows easily from the definitions and the continuity.

\begin{lemma}
\label{lem:coneq}
Let $T=(V,E)$ be  a tree,  $f(\cdot,\cdot)$ a  minmax monotone  cost function and $e=(u,v) \in E.$ 
Then both
\begin{equation}
\label{eq:st}
s_{\mathcal{T}} = \max_{x \in e} \Bigl( f(V_{-v}(u)\cup \{x\},x) \le \mathcal{T}\Bigr)
\end{equation}
and
\begin{equation}
\label{eq:alpha}
a:=\min_{x\in e} \max\Bigl(f(V_{-v}(u) \cup \{x\},x), f(V_{-u}(v) \cup \{x\},x)  \Bigr)
\end{equation}
exist.  
\end{lemma}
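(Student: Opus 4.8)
The plan is to reduce both statements to elementary facts about one-variable functions that are monotone and one-sidedly continuous on a compact interval, using the continuity hypothesis of Definition~\ref{def:cont}. First I would orient $e$ from $u$ to $v$ and view $e$ as a compact interval with $u$ as its left endpoint and $v$ as its right endpoint. Write $g_1(x) = f(V_{-v}(u)\cup\{x\},x)$ and $g_2(x) = f(V_{-u}(v)\cup\{x\},x)$ for $x \in e$. Applying Definition~\ref{def:cont} to $e$ with its $u\to v$ orientation, $g_1$ is non-decreasing on all of $e$ and continuous at every point of $e$ except possibly $u$; applying the same definition to $e$ with the reversed ($v\to u$) orientation, $g_2$ is non-increasing on all of $e$ and continuous at every point of $e$ except possibly $v$. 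Since each $g_i$ is monotone it has one-sided limits everywhere, and the monotonicity inequality of Definition~\ref{def:cont}(2) at the exceptional endpoint gives $g_1(u) \le \lim_{x \downarrow u} g_1(x)$ and $g_2(v) \le \lim_{x \uparrow v} g_2(x)$. Consequently each $g_i$, and hence also $h := \max(g_1,g_2)$, is lower semi-continuous on all of $e$.

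For $(\ref{eq:st})$: the set $A = \{x \in e : g_1(x) \le \mathcal{T}\}$ is downward closed with respect to the orientation of $e$, because $g_1$ is non-decreasing, so it is an initial segment of $e$. If $A = \emptyset$ the statement is vacuous, so assume $A \ne \emptyset$ and let $\theta = \sup A$ with respect to the order on $e$. If $\theta = u$, then $A = \{u\}$ and $\max A$ exists trivially. If $\theta \ne u$, choose $x_n \in A$ with $x_n \uparrow \theta$; for $n$ large, $x_n$ and $\theta$ both lie in the region where $g_1$ is continuous, so $g_1(\theta) = \lim_n g_1(x_n) \le \mathcal{T}$, whence $\theta \in A$ and $s_{\mathcal{T}} = \theta = \max A$ exists. (Equivalently: $g_1$ is lower semi-continuous, so $A$ is closed; being bounded it is compact, hence contains its supremum.)

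For $(\ref{eq:alpha})$: the function $h = \max(g_1,g_2)$ is continuous on the interior of $e$, since both $g_1$ and $g_2$ are continuous there. At the endpoint $u$ we have $\lim_{x \downarrow u} h(x) = \max(\lim_{x \downarrow u} g_1(x),\, g_2(u)) \ge \max(g_1(u),\, g_2(u)) = h(u)$, using that $g_1$ is lower semi-continuous at $u$ and $g_2$ is continuous at $u$; symmetrically $\lim_{x \uparrow v} h(x) \ge h(v)$. Hence $h$ is lower semi-continuous on the compact set $e$, and a lower semi-continuous function on a compact set attains its infimum, so $a = \min_{x \in e} h(x)$ exists.

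The only real care needed is the bookkeeping around the orientation of $e$, so that Definition~\ref{def:cont} is invoked in the correct direction for $g_2$, together with the one-sided continuity at the two endpoints of $e$, where the monotone functions $g_1, g_2$ may jump; packaging this uniformly as ``each $g_i$ is lower semi-continuous on the closed interval $e$'' is the step I would state most carefully, since it is exactly what makes both compactness arguments go through. The degenerate possibility that the defining set of $s_{\mathcal{T}}$ is empty should be flagged but is immaterial.
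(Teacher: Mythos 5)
Your proof is correct and is exactly the elaboration the paper has in mind: the paper gives no proof, stating only that the lemma "follows easily from the definitions and the continuity," and your argument — monotonicity plus one-sided continuity from Definition~\ref{def:cont} packaged as lower semi-continuity on the compact edge, giving closedness of the sublevel set for $s_{\mathcal{T}}$ and attainment of the infimum for $a$ — is the natural way to make that precise. Your remark that the set defining $s_{\mathcal{T}}$ could in principle be empty is a fair caveat about the lemma's statement (in the paper's applications $u$ itself always satisfies $f(V_{-v}(u),u)\le\mathcal{T}$, so the set is nonempty there).
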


We finally assume that  $S_{\mathcal{T}}$ and
the largest $x' \in e$ for which  
$$s_{\mathcal{T}} =   f(V_{-v}(u)\cup \{x'\},x') ),$$
as well as
 $a$ and  $x' \in e$ for which
 $$a=\max\Bigl(f(V_{-v}(u) \cup \{x'\},x'), f(V_{-u}(v) \cup \{x'\},x')  \Bigr),$$
can be calculated using   $O(1)$ oracle calls, i.e., in $O(t_{\mathcal{A}}(n))$ time where $n = |V|.$  
This is obviously true in the sink evacuation case because of the linearity of the functions as given by (\ref{eq:dcontdef}).

\subsection{Extending Theorem  \ref{theorem:FastBC}   to the continuous case}
\label{subsec:ContBd}

Recall that the Peaking Lemma (Lemma \ref{lemma:PeakingCriterionPutSink}) found $(u,v)$ such that  $f(V_{-v}(u),u) \le \mathcal{T}$ but  $f(V_{-v}(u)\cup\{v\},v) > \mathcal{T}$ and then placed a sink on $u.$  The motivating  intuition was that the peaking condition implies that $V_{-v}(u)$ MUST contain at least one sink. Placing that sink on the most extreme location possible for a single sink serving all of  $V_{-v}(u)$, i.e., $u$, could only improve the sink assignment. 

 In the continuous case, the analogous argument is again that  placing the  sink on the most extreme location possible for serving $V_{-v}(u)$ 
  can only improve the sink assignment.  But now,  the most extreme location possible is no longer required to be $u;$ it 
is the unique point   $s_{\mathcal{T}}$ defined in  (\ref{eq:st}). (Fig.~\ref{fig:ContHubTree})

\begin{figure}
	\centering
	\includegraphics[width=0.5\textwidth]{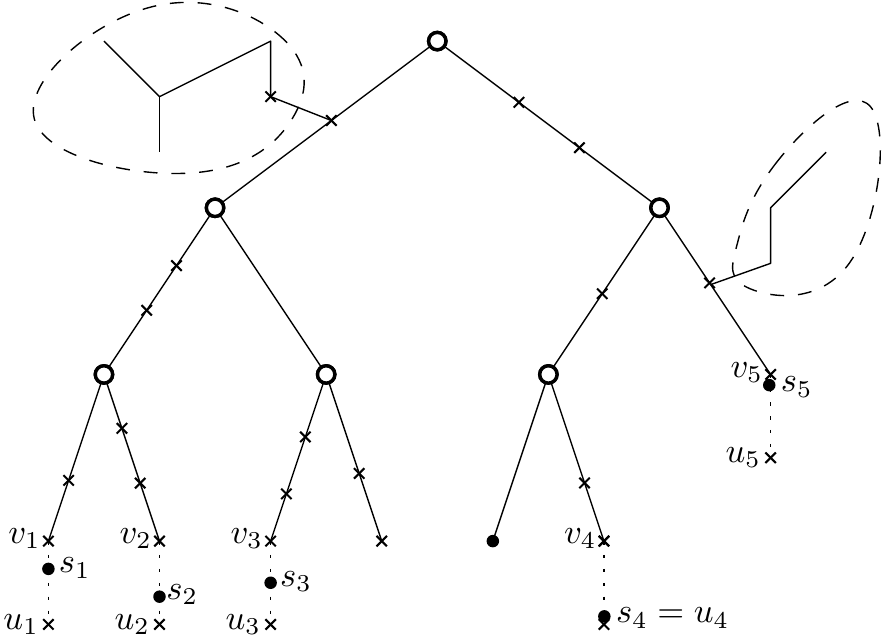}
	\caption{The hub tree after the sinks have been placed in the continuous problem.  Five sinks $s_1,\ldots,s_5$  have  been placed by the peaking lemma.   Note that the $s_i$ are not  necessarily in $V;$  they  can be located somewhere on the edge  $(u_i,v_i).$  In the feasibility version of the problem, the exact location of  $s_i$ on the edge is  known.  In the minimization version, only the fact  that $s_i$ falls in the edge $(u_i,v_i)$ is known but its exact location  might not be.}
		\label{fig:ContHubTree}
\end{figure}
The Peaking Lemma for the continuous case will now create a new node at $s=s_{\mathcal{T}}$, splitting $(u,v)$ into two pieces. It will then place a sink on $s$,  committing all of $V_{-v}(u)$ to $s$ and adding $s$ to $\Sout$. No changes need to be made to the Reaching Lemma which will remain correct as stated. It can then  be verified that the implementation of the peaking and reaching phases (including the first peaking phase via centroid decomposition) remain valid.  Thus, the  remainder of the  bounded-cost minmax $k$-sink  algorithm will  follow exactly as it did before, with the running time remaining the same as well.

\subsection{Extending Theorem  \ref{theorem:FastC}   to the continuous case}
Let ${\mathcal B}'$ be the new  bounded cost minmax $k$-sink  algorithm for the continuous case  described in the previous subsection and $B'(n)=B(n)$ be the cost of running the algorithm on an input of size $n.$   
We now     apply parametric search to  ${\mathcal B}'$
to create a  general algorithm  $\mathcal{I}'$ for the continuous case. Some subtle differences between this and the application of parametric search to the bounded algorithm $\mathcal{B}$ in Section \ref  {Section: Full Problem}
will  be needed.

Let
 ${\mathcal I}'$ be the interfered (parametric search) version of ${\mathcal B}'$ to be developed.  Similar to  $\mathcal I$,  ${\mathcal I}'$
 maintains a   \emph{threshold range}
 $(\mathcal{T}^L, \mathcal{T}^H]$. 
 $\mathcal I$ starts with  $(\mathcal{T}^L, \mathcal{T}^H] = (0,+\infty]$ and maintains the same   invariants:
\begin{enumerate}[label=(\textrm{I}\arabic{*})]
\item    $\mathcal{T}^L <\mathcal{T}^H$, 
\item $\mathcal{T}^L$ never decreases and $\mathcal{T}^H$ never increases.
\item $\mathcal{T}^L$ will be  infeasible and $\mathcal{T}^H$  will be   feasible. 
\item At each step of $\mathcal {I}'$,  the corresponding  state of $\mathcal B'$ would  be identical  for ALL
values of  $\mathcal{T} \in  [\mathcal{T}^L, \mathcal{T}^H)$.
\end{enumerate}

The major difference will be in the definition of  {\em state} and, in particular, what is stored in $\Sout$.
Recall that previously $\Sout=\{s_1,\ldots,s_t\}$ was the set of  known sinks (created by the peaking lemma).

As noted in Section \ref {subsec:ContBd},   sink $s=s_{\mathcal{T}}$  determined by the peaking lemma in the continuous case is no longer required to be  a  $v \in V$ but may lie inside an edge $(u,v)$.  ${\mathcal B}'$  explicitly determined the location of  $s_{\mathcal{T}}$  from $\mathcal T$ using (\ref{eq:st}). In  ${\mathcal I}'$, $\mathcal T$ is no longer exactly  known, so (\ref{eq:st}) can no longer be applied.
%

To patch this, $\Sout$ will no longer store the (unknown)  {\em location} of sink $s$  but rather  the directed edge $(u(s),v(s))$  which is known to contain   $s$. 
(Fig.~\ref{fig:ContHubTree})
\begin{definition}
The {\em State} of algorithm $\mathcal{B'}$ at any given time will be $(\Sout,\Pout)$ and the $S(v)$ values in the hub tree.
 $s\in \Sout$ will be specified in the list by storing the edge  
$s = (u(s),v(s))$ . 
\end{definition}

With this difference, Stage-Steps (Algorithm \ref {alg:Centroid Processing}) work exactly the same in ${\mathcal I}'$ as in ${\mathcal I}$.  That is,  after each Stage-Step the  edges containing sinks are stored in $\Sout$ and other edges are marked appropriately.  The proof of Lemma \ref{lem:Stage Thresh} for the continuous case will also follow.

The If-Steps now have to be further divided into two types depending upon the structure  of  the oracle call they make:
\begin{itemize}
\item {\bf Normal-Steps:}  These are the ones in (lines 5, 11, 17 of) Algorithm \ref{alg:PeakingAfter Reaching}.  The oracle evaluations are of the form $f(X,w)$ where $X$ is a subtree containing no sinks and  $w \in V \setminus \Sout.$  Normal-Steps will be processed using the same If-Step code from Figure \ref{fig:If Step} that was used by Algorithm $\mathcal{I}$,   except that $\Sout$ will store the edge $(u(s),v(s))$ known to contain $s$ rather than the unknown location  $s$.
\smallskip

\item {\bf Bulk-Steps:} These are the ones in (line 4  of)  Algorithm  \ref{alg:ReachingAfterPeaking}, the {\em Rsearch} procedure.  They are in the form  $f(BP(v_i,s'),s')$ where $s' \in S.$  These If-Steps will be processed using
the new description given in Figure \ref{fig:If  Rsearch Step}.  Note that this primarily  differs from the old If-Step of Figure \ref{fig:If Step} in Line 1.
\end{itemize}

\begin{figure}[t]
	\centering
	\includegraphics[width=0.8\textwidth]{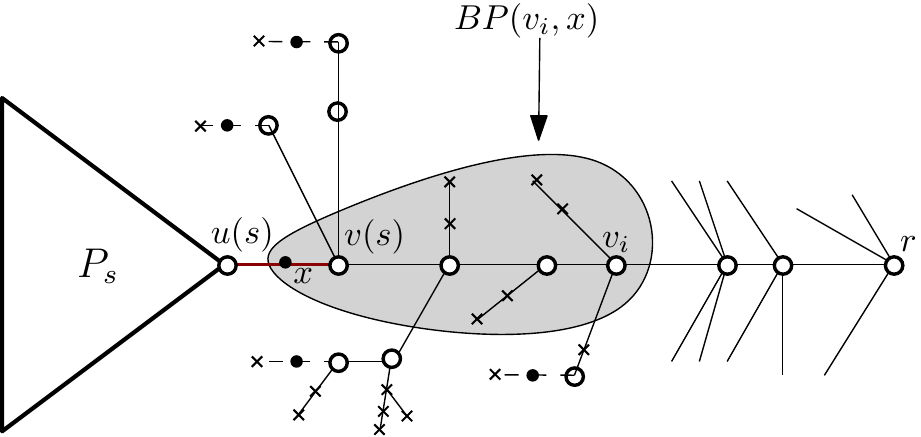}
	\caption{ An evaluation in  a Bulk If-Step.  A  previous peaking step determined that edge  $(u(s),v(s))$ contains a sink.
	For a given ancestor $v_i$ of $v(s)$ in $T$ the Bulk-If step finds the sink location  $x \in (u(s),v(s))$ that minimizes the maximum cost of servicing both $P_s =V_{-v(s)}(u(s))$ and   $BP(v_i,v(s)).$ In the diagram, the gray area is $BP(v_i,x)$; it is the path from $x$ to $v_i$ and all of the outstanding branches that fall off of it. The unfilled nodes are known hub nodes.  The filled nodes denote sink ``locations''.  The actual locations are unknown;  only the (dashed) edges which contain them are known.}
		\label{fig:ContHubTree2}
\end{figure}

\begin{figure}[t]
\par\noindent\underline{Bulk If-Step in Continuous Case to evaluate $f(BP(v_i,s),s)  \leq \mathcal{T}$ :}

\begin{enumerate}
\item Perform the evaluation (Fig.~\ref{fig:ContHubTree2})
$$a:=\min_{x\in (u(s),v(s))} \max\left( f(P_s \cup \{x\}, x), f(BP(v_i,x),x) )  \right).$$
\item  Resolve the If-Clause and reset $\mathcal{T}^L, \mathcal{T}^H$ if necessary as follows:
 	\begin{itemize}
 	\item (i) If $a \leq \mathcal{T}^L$,  resolve the associated If-clause as $f(BP(v_i,s),s)  \leq \mathcal{T}$.
 		\item(ii)  If $a \ge \mathcal{T}^H$, resolve the associated If-clause  as $f(BP(v_i,s),s) > \mathcal{T}$.
 		\item If $a \in (\mathcal{T}^L, \mathcal{T}^H)$, run a separate version of $\mathcal B$ from scratch with value $\mathcal{T} := a$, and observe the output.
 \begin{itemize}
  \item (iii) If Output is `No': set $\mathcal{T}^L := a$.\\  Resolve the associated If-clause as $f(BP(v_i,s),s) \leq \mathcal{T}$. 
  \item (iv)  If Output is `Yes', set $\mathcal{T}^H := a$.\\  Resolve the  associated  If-clause as $f(BP(v_i,s),s)  >\mathcal{T}$.
 \end{itemize}
 \end{itemize}
 \item Conclude the step by running the algorithm with the set value ${\mathcal T} = {\mathcal T}^L$ until the start of the next step.
\end{enumerate}
 \caption{ lf-Step for the Bulk-Step in  Continuous  Algorithm $\mathcal I$'.}
\label{fig:If  Rsearch Step}
\end{figure}
\medskip

Lemma \ref{lem:Stage Thresh} and its proof will still work for the new Stage-Steps.
We must now prove the equivalent of 
 Lemma \ref{lem:If Thresh} for these new If-Steps.

\begin{lemma}
\label{lem:If Thresh  Cont}
Let $\mathcal{T}^*$ be the optimal value of $\mathcal T$ and 
let $(\mathcal{T}^L, \mathcal{T}^H]$ be the threshold range after running $m$ If-Steps after the conclusion of the first peaking phase.
Then

\begin{enumerate}
\item  $\mathcal{T}^* \in (\mathcal{T}^L, \mathcal{T}^H]$.
\item Let  $\mathcal{T}'  \in [\mathcal{T}^L, \mathcal{T}^H)$. 
Then,  Algorithm  $\mathcal B'$ run for $m$ steps with ${\mathcal T} = \mathcal{T}' $ would end in the same state as algorithm  $\mathcal B'$ run for $m$ steps with ${\mathcal T} = \mathcal{T}^L$.
\item Algorithm  $\mathcal I'$ run for $m$ steps  would end in the same state as algorithm  $\mathcal B'$ run for $m$ steps with ${\mathcal T} = \mathcal{T}^L$.
\end{enumerate}
\end{lemma}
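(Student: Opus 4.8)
The plan is to prove Lemma~\ref{lem:If Thresh Cont} exactly as Lemma~\ref{lem:If Thresh} was proved in the discrete case, by induction on $m$, with the only new ingredient being a careful treatment of the Bulk If-Steps (Figure~\ref{fig:If Rsearch Step}). Part (1) follows verbatim from the proof of Lemma~\ref{lem:Stage Thresh}~(1): $\mathcal{T}^H$ is only ever set to a feasible value and $\mathcal{T}^L$ only ever to an infeasible one, so $\mathcal{T}^L < \mathcal{T}^* \le \mathcal{T}^H$ is maintained throughout. Part (3) will follow immediately from the analysis of part (2), since $\mathcal{I}'$ is defined to run with $\mathcal{T} = \mathcal{T}^L$ between steps. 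So the substance is entirely in part (2).

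For part (2), I would argue by induction on $m$. The base case $m=0$ is the state at the end of the first peaking phase, which by Lemma~\ref{lem:Stage Thresh} (applied to $\mathcal{B}'$, whose Stage-Steps are unchanged) is independent of $\mathcal{T}$ across $[\mathcal{T}^L,\mathcal{T}^H)$. For the inductive step, assume $\mathcal{B}'$ is in the same state at the start of If-Step $m$ for $\mathcal{T} = \mathcal{T}^L$ and $\mathcal{T} = \mathcal{T}'$; it then suffices to show the If-clause of step $m$ resolves identically for both values. For a Normal-Step the oracle evaluation is $a := f(X,w)$ with $X$ a sink-free subtree and $w \in V \setminus \Sout$, and the four cases (i)--(iv) of Figure~\ref{fig:If Step} give respectively $a \le \mathcal{T}^L < \mathcal{T}'$, $a \ge \mathcal{T}^H > \mathcal{T}'$, $a = \mathcal{T}^L \le \mathcal{T}'$, and $a = \mathcal{T}^H > \mathcal{T}'$ (using $\mathcal{T}' \in [\mathcal{T}^L,\mathcal{T}^H)$); in every case ``$a \le \mathcal{T}^L$?'' and ``$a \le \mathcal{T}'$?'' have the same answer, so the clause resolves identically and the state at the end of step $m$ coincides. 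For a Bulk-Step the only thing that changes is the quantity $a$ being compared against the threshold: it is now
$$a = \min_{x \in (u(s),v(s))} \max\bigl(f(P_s \cup \{x\}, x),\, f(\mathrm{BP}(v_i,x),x)\bigr),$$
and the key observation is that $a \le \mathcal{T}$ \emph{if and only if} there is a placement of the sink $s$ on its edge $(u(s),v(s))$ making both $P_s$ and the bulk path $\mathrm{BP}(v_i,s)$ servable within cost $\mathcal{T}$, i.e. iff the state $\mathcal{B}'$ would take (which stores the edge, not the point, and is entitled to choose the best point on it) satisfies $f(\mathrm{BP}(v_i,s),s) \le \mathcal{T}$ for the $\mathcal{B}'$-chosen location of $s$. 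Since $a$ itself does not depend on $\mathcal{T}$ — it is computed by $O(1)$ oracle calls via Lemma~\ref{lem:coneq} and the assumed $O(t_{\mathcal A}(n))$ evaluability — the same case analysis (i)--(iv) as for Normal-Steps applies, and the clause again resolves identically for $\mathcal{T}^L$ and $\mathcal{T}'$. Hence the state is unchanged, completing the induction.

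The step I expect to be the main obstacle is justifying that the Bulk If-Step's quantity $a$ correctly captures the $\mathcal{B}'$-decision, i.e. that ``$a \le \mathcal{T}$'' is genuinely the predicate $\mathcal{B}'$ would evaluate for \emph{whatever} location it has (implicitly) committed $s$ to on the stored edge. This requires invoking the continuity and monotonicity of $f(V_{-v(s)}(u(s)) \cup \{x\},x)$ in $x$ (Definition~\ref{def:cont}) to argue that the ``most extreme'' feasible placement chosen in the continuous Peaking Lemma (Section~\ref{subsec:ContBd}) is precisely the minimizer $x$ in the $\min$--$\max$ expression, so that the state $\mathcal{B}'$ reaches is consistent with the edge-only information retained in $\Sout$. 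Once that consistency is pinned down, the rest is the same routine four-way case split as in the discrete proof, and the running-time bound $O(\max(k,\log n)\, k \log^2 n\, t_{\mathcal A}(n))$ carries over unchanged because $B'(n) = B(n)$ and the step counts $S_S(n), S_I(n)$ are identical to the discrete case.
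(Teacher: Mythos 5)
Your overall architecture is exactly the paper's: part (1) verbatim from the Stage-Step argument, part (3) as a corollary of part (2), and part (2) by induction on $m$, with Normal-Steps handled by the unchanged four-way case split and Bulk-Steps reduced to the threshold-independent quantity $a$. The one place your argument goes wrong is precisely the step you flagged as the main obstacle. You assert that the ``most extreme'' feasible placement $s_{\mathcal{T}}$ chosen by the continuous Peaking Lemma (the largest $x$ on $(u(s),v(s))$ with $f(P_s\cup\{x\},x)\le\mathcal{T}$, Eq.~(\ref{eq:st})) ``is precisely the minimizer $x$ in the $\min$--$\max$ expression.'' That is false in general: the minimizer $x_a$ of $\max\bigl(f(P_s\cup\{x\},x),\,f(\mathrm{BP}(v_i,x),x)\bigr)$ is a crossing point (or endpoint) of one non-decreasing and one non-increasing function of $x$ and does not depend on $\mathcal{T}$ at all, whereas $s_{\mathcal{T}}$ does (for $\mathcal{T}$ large enough $s_{\mathcal{T}}=v(s)$ while $x_a$ stays put). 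Moreover the predicate $\mathcal{B}'$ actually evaluates is $f(\mathrm{BP}(v_i,s_{\mathcal{T}}),s_{\mathcal{T}})\le\mathcal{T}$, whose \emph{left-hand side} also varies with $\mathcal{T}$ through $s_{\mathcal{T}}$; so one cannot simply quote the discrete four-way split, but must show that the two predicates $f(\mathrm{BP}(v_i,s_{\mathcal{T}^L}),s_{\mathcal{T}^L})\le\mathcal{T}^L$ and $f(\mathrm{BP}(v_i,s_{\mathcal{T}'}),s_{\mathcal{T}'})\le\mathcal{T}'$ resolve identically --- a genuinely two-sided comparison.

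The good news is that the equivalence you actually need --- $a\le\mathcal{T}$ if and only if $f(\mathrm{BP}(v_i,s_{\mathcal{T}}),s_{\mathcal{T}})\le\mathcal{T}$ --- is true, and supplying its proof is the missing content. One direction: at $x=s_{\mathcal{T}}$ both terms of the max are $\le\mathcal{T}$, so $a\le\mathcal{T}$. The other: if $a\le\mathcal{T}$ then $f(P_s\cup\{x_a\},x_a)\le a\le\mathcal{T}$ forces $x_a\le s_{\mathcal{T}}$, and monotonicity of the bulk-path cost in the sink position (Definition~\ref{def:cont}, cf.\ Eq.~(\ref{eq:BP Mon})) gives $f(\mathrm{BP}(v_i,s_{\mathcal{T}}),s_{\mathcal{T}})\le f(\mathrm{BP}(v_i,x_a),x_a)\le a\le\mathcal{T}$. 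Applying this with $\mathcal{T}=\mathcal{T}^L$ and $\mathcal{T}=\mathcal{T}'$ and then splitting on where $a$ falls relative to $(\mathcal{T}^L,\mathcal{T}^H)$ recovers exactly the paper's cases (i)--(iv), which are this argument unrolled. With that substitution your proof coincides with the paper's.
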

\begin{proof}
(1) The proof is exactly the same as in  Lemma \ref{lem:If Thresh}.

 (2)  Again the proof is by induction on $m$,  that after $m$ steps   algorithm  $\mathcal B'$ will be  in the same state  when run on $\mathcal{T}'$ and $\mathcal{T}'$.
 Assume this is true for $m-1$ and now   consider what happens in the $m$'th If-step. 
\medskip

The  analysis of a Normal Step is exactly the same as it was in Lemma \ref{lem:If Thresh} so we do not repeat it except  to note again that 
in all cases,  the algorithm correctly processes  ``$a \le \mathcal{T}^L?$''. Furthermore  ``$a \le \mathcal{T}^L?$'' resolves identically to the query
	``$a \le \mathcal{T}'$'' so  $\mathcal B'$ can not distinguish between the two cases. 

\medskip
The analysis of the Bulk-Step is more interesting. For simplicity set 
$s_L := s_{\mathcal{T}_L}$ and  $s' := s_{\mathcal{T}'}$
as defined in (\ref{eq:st}).   By the  induction hypothesis both of these values are on the edge $(s(u),v(u)).$  

\begin{figure}[t]
	\centering
	\includegraphics[width=0.8\textwidth]{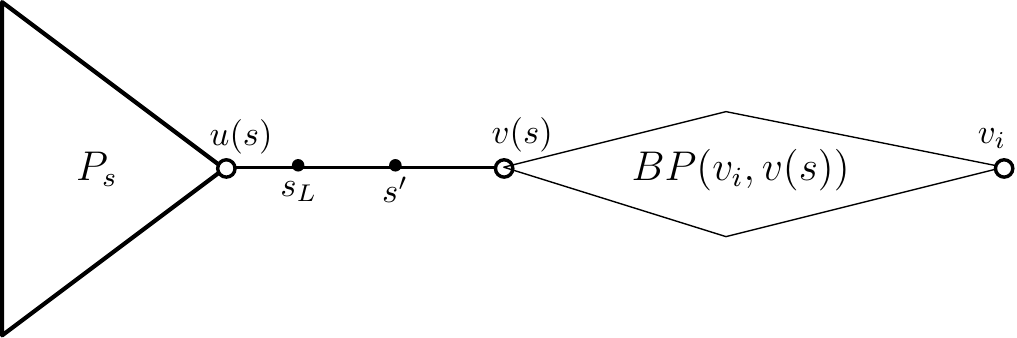}
	\caption{ Illustration of   $s_L \le s'$
	 in the proof of Lemma \ref{lem:If Thresh  Cont}. Recall that $s_L$ is the rightmost location of a sink on edge $(u(s),v(s))$ that supports $P_s$ when 
	$\mathcal{T} =  mathcal{T}^L$  and $s'$ is the rightmost that supports it when $\mathcal{T} =  \mathcal{T}'$.}
		\label{fig:ContBulkLemma}
\end{figure}

Since 
$\mathcal{T}^L <  \mathcal{T}'$, monotonicity implies  $s_L \le  s'$  (Fig.~\ref{fig:ContBulkLemma}).
Note  that 
$BP(v_i,s_L) = BP(v_i,v(s)) \cup \{s_L\}$ and   $BP(v_i,s') = BP(v_i,v(s)) \cup \{s'\}.$ Thus, from monotonicity,
\begin{equation}
\label{eq:BP Mon}
  f(BP(v_i,s_L),s_L)  \ge  f(BP(v_i,s'),s').
  \end{equation}

Let $x_a  \in (u(s),v(s))$ be a value    such that 
$$a=\max \left( f(P_s \cup \{x_a\}, x_a), f(BP(v_i,x_a),x_a )  \right).$$
We now analyze the cases separately:
\begin{enumerate}
\item[(i)] $f(P_s \cup \{x_a\}, x_a) \le a \le \mathcal{T}^L$ implies $x_a \le s_a \le s_L.$ From monotonicity and (\ref{eq:BP Mon})
$$
   \mathcal{T}' > \mathcal{T}^L \ge a \ge  f(BP(v_i,x_a),x_a ) 
				 \ge  f(BP(v_i,s_L),s_L)
				  \ge  f(BP(v_i,s'),s').$$
Then
\begin{equation}
\label{eq:cont case i}
 f(BP(v_i,s_L),s_L) \le \mathcal{T}^L  \quad \mbox{and} \quad   f(BP(v_i,s'),s') \le \mathcal{T}'.
\end{equation}
				  
\smallskip

\item[(ii)]  By the definition of $a,$ 
$$\max\Bigl( f(P_s \cup \{s'\}, s'), f(BP(v_i,s'),s') )  \Bigr) \ge   a  >\mathcal{T}^H.$$
Since 
$f(P_s \cup\{s'\},s') \le  \mathcal{T}' \le \mathcal{T}^H$, this  and (\ref{eq:BP Mon}) imply
$$  f(BP(v_i,s_L),s_L)  \ge f(BP(v_i,s'),s') ) > \mathcal{T}^H > \mathcal{T}' > \mathcal{T}^L.$$
\end{enumerate}
Then
\begin{equation}
\label{eq:cont case ii}
 f(BP(v_i,s_L),s_L) > \mathcal{T}^L  \quad \mbox{and} \quad   f(BP(v_i,s'),s') > \mathcal{T}'.
\end{equation}

\smallskip
\begin{itemize}
\item [(iii)]  We set $\mathcal{T}^L=a$ so $ a < \mathcal{T}' < \mathcal{T}^H$. Furthermore,  $x_a \le s_a = s_L$ so from (\ref{eq:BP Mon}) and path monotonicity
$$\mathcal{T}' > \mathcal{T}'^L =a  \ge  f(BP(v_i, x_a),x_a) \ge f(BP(v_i, s_L),s_L) \ge f (BP(v_i, s'),s').$$
Then
\begin{equation}
\label{eq:cont case iii}
 f(BP(v_i,s_L),s_L) \le \mathcal{T}^L  \quad \mbox{and} \quad   f(BP(v_i,s'),s') \le \mathcal{T}'.
\end{equation}

\smallskip
\item[(iv)] We set $\mathcal{T}^H= a$.  Then  $ \mathcal{T}^L< \mathcal{T}' < a$. By definition
$f(P_s \cup \{s'\}, s') \le \mathcal{T}' < a.$  
As in the analysis in case (ii) we note that from   the definition of $a,$ 
$$\max\left( f(P_s \cup \{s'\}, s'), f(BP(v_i,s'),s') )  \right) \ge   a $$
and thus
$$f(BP(v_i,s'),s') \ge   a $$
Using  (\ref{eq:BP Mon}) again shows 
$$ f(BP(v_i, s_L),s_L) \ge f (BP(v_i, s'),s') \ge   a  >\mathcal{T}'  > \mathcal{T}^L. $$
Then
\begin{equation}
\label{eq:cont case iv}
 f(BP(v_i,s_L),s_L) > \mathcal{T}^L  \quad \mbox{and} \quad   f(BP(v_i,s'),s') > \mathcal{T}'.
\end{equation}
\medskip
\end{itemize}

Equations (\ref{eq:cont case i})-(\ref{eq:cont case iv}) show that  the queries
``$f(BP(v_i,s_L),s_L) \le \mathcal{T}^L?$''  and ``$f(BP(v_i,s'),s') \le \mathcal{T}'?$'' will always return the same answer.

Since the algorithm started the If-Step in an identical state for
	cases ${\mathcal T} = {\mathcal T}', {\mathcal T} ^L$, and it can not distinguish between those cases during the If-Step, it ends in the same state for both of them.

\medskip

 (3) follows from the analysis of (2).		\qed
\end{proof}

We can now prove
\begin{lemma}
\label{lem:inter2}

The interfered algorithm $\mathcal{I'}$  in the continuous case will terminate in  $O\bigl( \max(k,\log n) k \log^2 n t_{\mathcal{A}}(n)\bigr)$ time. 

 Let $(\mathcal{T}_{<},\mathcal{T}_{>}]$ be the threshold range  when  $\mathcal I$  terminates and 
$\mathcal{T^*}$ be the optimal value of $\mathcal T$.
  Then $\mathcal{T^*}=\mathcal{T}_{>}$. In particular, we can then run the bounded cost Algorithm $\mathcal B$   on $\mathcal{T} := \mathcal{T}_{>}$ to retrieve the optimal feasible configuration.
 \label{lemma:UpperThresholdIsAnwser}
\end{lemma}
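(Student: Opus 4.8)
The plan is to mirror the proof of Lemma~\ref{lem:inter1} almost verbatim, substituting Lemma~\ref{lem:If Thresh  Cont} for Lemma~\ref{lem:If Thresh} and recalling that Lemma~\ref{lem:Stage Thresh} and its proof were already argued to carry over to $\mathcal{B}'$ and $\mathcal{I}'$ in the continuous setting. First I would note that, by construction of the Stage-Steps and of both flavours of If-Step (the Normal-Steps of Figure~\ref{fig:If Step} and the Bulk-Steps of Figure~\ref{fig:If  Rsearch Step}), the open interval $(\mathcal{T}^L,\mathcal{T}^H)$ only ever contracts and never becomes empty, so I fix any value $\mathcal{T}'$ lying in the intersection of all these intervals over the whole run. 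Point~(2) of Lemmas~\ref{lem:Stage Thresh} and \ref{lem:If Thresh  Cont} then shows that $\mathcal{I}'$ executes exactly the same sequence of Stage-Steps and If-Steps, in the same order, as $\mathcal{B}'$ run on the fixed threshold $\mathcal{T}'$; in particular it makes the same number of each type of step, and by Point~(3) terminates in the same final state.

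Next I would bound the running time. The step-count and per-phase analysis of Section~\ref{subsec:comb bounded} carries over unchanged to $\mathcal{B}'$ by the remarks of Section~\ref{subsec:ContBd}, so $\mathcal{B}'$ performs $S_S(n)=O(\log n)$ Stage-Steps and $S_I(n)=O(k\log n)$ If-Steps, with $B'(n)=B(n)=O(k\,t_{\mathcal{A}}(n)\log n)$. The only overhead of $\mathcal{I}'$ beyond simulating $\mathcal{B}'$ on $\mathcal{T}'$ is the feasibility testing in lines~3--4 of each Stage-Step and in line~2 of each If-Step. The one genuinely new point is that a Bulk If-Step evaluates $a=\min_{x\in(u(s),v(s))}\max\bigl(f(P_s\cup\{x\},x),f(BP(v_i,x),x)\bigr)$, which by the assumptions at the end of Section~\ref{Sec:Continuous} costs $O(1)$ oracle calls, i.e.\ $O(t_{\mathcal{A}}(n))$ time, plus at most one recursive call to $\mathcal{B}'$ costing $B'(n)$ --- the same budget as an ordinary If-Step. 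Hence a Stage-Step contributes $W_S(n)=O(\log n\cdot B'(n))$ and an If-Step contributes $W_I(n)=O(B'(n))$, giving
\begin{eqnarray*}
T(n) &=& B'(n)+S_S(n)\,W_S(n)+S_I(n)\,W_I(n)\\
 &=& B'(n)+O\bigl(\log^2 n\cdot B'(n)\bigr)+O\bigl(k\log n\cdot B'(n)\bigr)\\
 &=& O\bigl(\max(k,\log n)\,k\,t_{\mathcal{A}}(n)\,\log^2 n\bigr).
\end{eqnarray*}

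Finally I would establish that the terminal threshold is optimal, exactly as for Lemma~\ref{lem:inter1}: consider the step at which $\mathcal{T}^L$ was last raised to the terminal value $\mathcal{T}_{<}$; this happened because $\mathcal{T}_{<}$ was found infeasible. For any $\mathcal{T}'\in(\mathcal{T}_{<},\mathcal{T}_{>})$, Point~(2) of Lemmas~\ref{lem:Stage Thresh} and \ref{lem:If Thresh  Cont} shows $\mathcal{B}'$ ends in the same state for $\mathcal{T}'$ as for $\mathcal{T}_{<}$, hence $\mathcal{T}'$ is infeasible too; so nothing strictly between $\mathcal{T}_{<}$ and $\mathcal{T}_{>}$ is feasible. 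Together with Point~(1), which gives $\mathcal{T}^*\in(\mathcal{T}_{<},\mathcal{T}_{>}]$, this forces $\mathcal{T}^*=\mathcal{T}_{>}$, and a last call to $\mathcal{B}'$ on $\mathcal{T}:=\mathcal{T}_{>}$ returns an optimal feasible configuration. I expect the main obstacle to be exactly the Bulk If-Step: one must check both that its oracle evaluation really fits in $O(t_{\mathcal{A}}(n))$ time and that the four-case analysis of Lemma~\ref{lem:If Thresh  Cont} --- which relies on monotonicity of $f(BP(v_i,x),x)$ along the edge and on $s_{\mathcal{T}^L}\le s_{\mathcal{T}'}$ --- is precisely what licenses treating the query ``$f(BP(v_i,s),s)\le\mathcal{T}$'' as resolving identically for every $\mathcal{T}$ in the live interval. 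Everything else is a transcription of the discrete argument.
\qed
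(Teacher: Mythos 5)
Your proposal is correct and follows essentially the same route as the paper, which simply states that the proof is "almost exactly the same" as that of Lemma~\ref{lem:inter1} and notes only that the extra per-step work in the continuous case is $O(t_{\mathcal{A}}(n))$ and hence subsumed. You have in fact written out in full the argument the paper omits, correctly substituting Lemma~\ref{lem:If Thresh  Cont} for Lemma~\ref{lem:If Thresh} and verifying that the Bulk If-Step's oracle evaluation fits within the same budget as an ordinary If-Step.
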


The proof of this lemma is almost exactly the same as that of  Lemma \ref{lem:inter2} and will therefore be omitted.  The only difference is that $\mathcal{I}'$ needs to do a bit of extra work for the Stage Steps and If-Steps.  But this is only $O(t_{\mathcal{A}}(n))$ work per sink and there are at most $k$ sinks.  The extra work is therefore $O(kt_{\mathcal{A}}(n))$ which is subsumed by the remaining running time of the algorithm which is the same as that for the discrete case.

\section{The fixed sink problem (optimal partitioning)}
\label{Sec:Fixed Sink}

This section sketches a proof of  Theorem  \ref{theorem:FastCF}, i.e., the special case in which the locations of $S,$ the set of  $k$ sinks in the input tree $\Tin$,  are provided as part of the input.  The problem is thus to partition $\Tin$ into $k$ subtrees, each subtree containing exactly one $s \in S$, so as to minimize the max-cost of the subtrees.  

Because the sinks are given they can be considered as nodes in  $V$ and thus this problem  is always  discrete. 
Also, as stated  in Theorem  \ref{theorem:FastCF}, the underlying function is now only required to be {\em relaxed} minmax monotone and not strictly  minmax monotone.  We explain below why this relaxation occurs.

\subsection{If $S$ are all leaves of $\Tin.$}

Consider the special case in which all nodes in $S$ are {\em leaves} of the input tree $\Tin$. As before, we start by constructing a feaasiblity test;  given $\mathcal T > 0$, decide whether 
there exists a partition with bounded cost $\le \mathcal T$.

As a first step,  the new algorithm will build 
 the hub tree $T_{H(S)}$  (Def.~\ref{def:hub tree}) off of the given sinks. Let $T_i$,  $i=1,2,\ldots,t$, be the corresponding outstanding branches  and $w_i \in T_{H(S)}$ the node off of which  $T_i$ falls.   Set
$$\Tmin = \min_i  f\left(T_i \cup\{w_i\}, w_i\right).$$
By asymptotic subadditivity,  calculating $\Tmin$  requires only $O(t_{\mathcal{A}}(n))$ time.

If $\Tmin > \mathcal T$, no feasible solution exists.  Otherwise,  $T_{H(S)}$ is RC-viable and, from the perspective of  algorithm
$\mathcal{B}$, the state of the problem is exactly the same as if the first peaking phase had just concluded with $T= \Tin,$ $\Sout=S$ and $\Pout$ being defined by setting  $P_s = \{s\}$ for all $s \in \Sout.$ 
Referring to Section \ref{subsec:comb bounded}, this is as if Step 1 of the algorithm  had just concluded and the algorithm is now starting Step 2 (Creating the First Hub Tree).  Continuing to run $\mathcal{B}$ from this point will now provide the correct answer.  Since  new sinks will never be added, the algorithm will never need to enter a peaking-phase.  Instead, if after a reaching phase the working tree stops being RC-viable, the algorithm will declare that  no solution exists.

 The path-monotonicity requirement of minmax monotone functions stated in  Section \ref{subsec:minmaxdef}  was only used in the derivation of the peaking condition; the reaching condition only required set-monotonicity. Since a peaking phase is never entered in this fixed-sink case, the algorithm remains correct even for
{\em  relaxed}  minmax monotone  functions.

The running time of this algorithm will be $O(nk + t_{\mathcal{A}}(n))$  (to calculate $\Tmin$ and the structure of the  first hub tree) plus the cost of running $\mathcal B$ when skipping  the first peaking phase, which is again $O(k t_\mathcal{A}(n) \log n)$.  Technically, the algorithm could now be simplified  by noting that no peaking phases are ever run, but this would not improve the worst case running time.

Now consider solving the general minmax $k$-sink problem by applying parametric search as in the creation of $\mathcal{I}.$

Again start the algorithm by building the hub tree and calculating $\Tmin.$  Next, run the bounded cost fixed-sink algorithm with  $\mathcal{T} = \Tmin.$ As previously noted,  
$\Tmin \le \mathcal{T}^*.$  So,  if  $\Tmin$ is feasible $\mathcal{T}^*=\Tmin$ and the algorithm concludes.
The total work performed so far is $O(kn+ t_{\mathcal{A}}(n))$  plus $O(k t_\mathcal{A}(n) \log n)$ for calling the bounded algorithm.

If $\Tmin$ is  not feasible, 
 set $(\mathcal{T}^L, \mathcal{T}^H] = (\Tmin,\infty]$.  The algorithm is now in the same state as $\mathcal{I}$ would have been in  if $\mathcal{I}$ had just completed  all of the Stage-Steps of the first peaking phase.  Continuing to  run $\mathcal{I}$ from this point onward will yield the final answer.  This cost of running $\mathcal I$, omitting the Stage -Steps,  is  $O(k^2 t_\mathcal{A}(n) \log^2 n)$; this dominates the Stage-Step, and is thus  the overall time complexity.

\subsection{If $S$ are not restricted to be  leaves of $\Tin.$}

\begin{figure}
	\centering
	\includegraphics[width=0.7\textwidth]{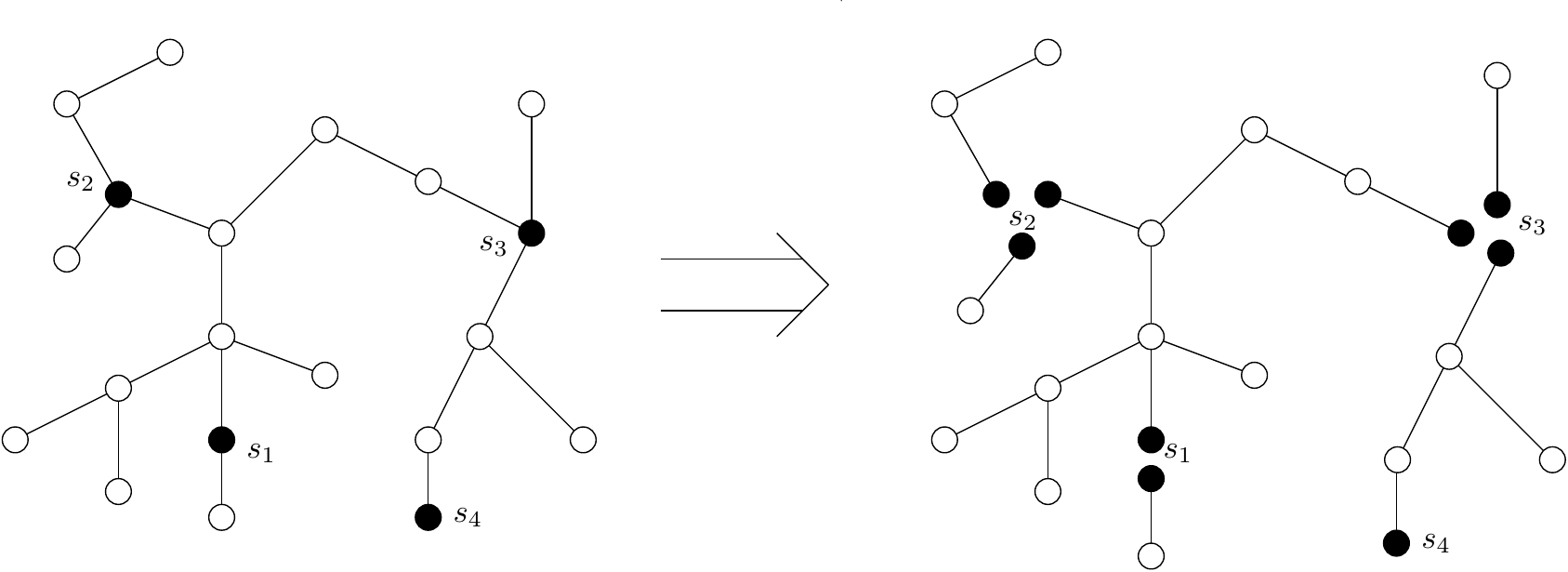}
	\caption{The original tree with specified sinks $s_1,s_2,s_3,s_4$ is on the left.  The transformed forest is on the right.  Note that a new sink has been created for each edge adjacent to a sink in the original tree.}
		\label{fig:Leaf Partition tree}
\end{figure}

The subsection above  solved the minmax $k$ fixed-sink problem in  $I'(k,n)=O(  k^2 n t_{\mathcal{A}}(n) \log^2) $ time  when $S$ is restricted to being leaves of $\Tin.$
  Without loss of generality, we assume that $I'(k,n)$ is non-decreasing in $k.$  

 The solution for general position $S$ uses  a standard transformation of $\Tin$ into a forest.

For every given sink $s \in S$, let $\mathcal{N}(s)$ denote  its set of neighbors and for every $u \in \mathcal{N}(s)$ create a new sink node $s_u$ and edge $(u, s_u)$.     Delete the original nodes in $S.$  (Fig~ \ref{fig:Leaf Partition tree}). What remains is a forest $T_1,T_2,\ldots,T_r$ of trees in which each $T_i$ contains  sinks $S_i$ at its leaves, where $|S_i| =  k_i \le k$ sinks.  At most one new node is created for every edge in the original tree  so the total number of vertices in the forest is $<2n.$  Furthermore, every partition  on the forest corresponds in the natural way with a partition in the original tree such that the minmax-cost partition of the forest corresponds to a minmax-cost partition of the tree that has  the same cost.

 It is not difficult to see that 
 $$\min_{\mathcal{P} \in  \Lambda[S]} f(\mathcal{P},S)  = \min_{1 \le i \le r} F_i
\quad\mbox{where} \quad
F_i = \min_{\mathcal{P} \in  \Lambda[S_i]} f(\mathcal{P},S_i).
$$
That is, we can separately find the optimal partition for each subtree and knit them together to construct an optimal partition for the original tree.

Thus, to solve the problem on the original tree it suffices to solve it on each of the trees $T_i$ individually.  Let $n_i$ be the number of nodes in tree $T_i.$
Recall that the statement of Theorem \ref{theorem:FastCF} assumed asymptotic subadditivity  {\em and} that $t_\mathcal{A}(2n) = O(t_\mathcal{A}(n))$.
Thus,  the total cost is also  at most 
$$\sum_{i=1}^r I'(k_i, n_i) = O(I'(k,2n-1)) = O(I'(k,n))= O(  k^2 n t_{\mathcal{A}}(n) \log^2)$$
and we are done.


 Note that for the sink evacuation problem, plugging in the $O(n \log^2 n)$ oracle used previously, this leads to a $O(n k^2 \log^4 n)$ time algorithm for the partitioning problem, substantially improving upon the the 
 $O(n (c \log n)^{k+1})$  \cite{Mamada2005a} and   $O(n^2 k \log^2  n)$ \cite{Mamada2005} algorithms when $4 < k \ll n$.

\section{Conclusion}
Given a Dynamic Flow network on a tree $G=(V,E)$  we derived an algorithm  for finding the locations of $k$ sinks that  minimize the maximum time needed to evacuate the entire tree.

The algorithm was developed in two parts. Sections \ref{Section: Bounded Cost} and \ref {Sec: Bounded Algorithm}  developed a {\em feasibility test}, i.e., for $\mathcal{T}> 0,$ an algorithm for finding a  placement of $k$ sinks that permits evacuating the tree in 
 $\le \mathcal{T}$ time  (or determining that such a placement does not exist).   Section \ref{Section: Full Problem} showed how to apply parametric search to modify  this test to find  the minimum  feasible $\mathcal{T}^*$.  Section \ref{Sec:Continuous} extended the algorithms to work in the continuous case (in which sinks can be placed on edges).  Finally,   Section \ref{Sec:Fixed Sink} developed better algorithms for the case in which the $k$ sinks are known in advance.
 
 The sink-evacuation problem is  a special case of the minmax Centered $k$-partitioning problem on trees.
 All the results described could partition using   any minmax monotone  function for servicing trees from centers.
 Assuming an  $t_\mathcal{A}(n)$ time oracle for calculating the cost of the fixed $1$-sink problem on trees, our main algorithm works in $O( \max(k,\log n) k t_{\mathcal{A}}(n) \log^2 n )$ time, improved to $O(  k^2  t_{\mathcal{A}}(n)  \log^2 n)$ time if the sinks are known in advance.

 These were the first known polynomial time algorithms for these sink location problems.  The obvious direction for improvement would be to try to develop algorithms whose running times, like the $O(n)$ one for unweighted $k$-center \cite{frederickson1991parametric}  and the  $O(n \log^2 n)$ ones \cite{megiddo1981n,Cole87} for the weighted $k$-center problem, are only dependent upon $n$ and not $k.$ As noted earlier, the bottleneck to this generalization seems to be that, unlike in those previous tree-partitioning problems,  the cost  oracle  $f(U,s)$ here  is permitted to be  a complicated non-linear function of the  topology and all the vertex weights of the full  tree $U$, and can not be decomposed into simpler parts.

\begin{acknowledgements}
The work of both authors was partially supported by Hong Kong RGC CERG Grant 16208415
\end{acknowledgements}
\bibliographystyle{plainurl} 
\bibliography{Evac,Evacuation_Problems}



\end{document}